\documentclass[pra,onecolumn,preprintnumbers,superscriptaddress]{revtex4}
\usepackage[a4paper, left=0.8in, right=0.8in, top=0.8in, bottom=0.8in]{geometry}

\pdfoutput=1
\usepackage[utf8]{inputenc}
\usepackage[T1]{fontenc}  
\setcounter{secnumdepth}{3}

\usepackage{enumerate}
\usepackage{amsmath}
\usepackage{amssymb}
\usepackage{url}
\usepackage{mathtools}
\usepackage{graphics,graphicx,color,float}
\usepackage{etoolbox} 
\usepackage{amsfonts}
\usepackage{amsthm}
\usepackage[usenames,dvipsnames]{xcolor}
\usepackage{tikz}
\usepackage[T1]{fontenc}
\usetikzlibrary{matrix}
\usepackage{xr-hyper}
\usepackage{tikz}
\usetikzlibrary{calc}

\newcommand{\la}{\langle}
\newcommand{\ra}{\rangle}

\newcommand{\calH}{\mathcal{H}}

\newcommand{\R}{\mathbb{R}}
\newcommand{\mcs}{\mathcal{S}}

\newtheorem{fact}{Fact}

\newcommand{\beq}{\begin{equation}}
\newcommand{\enq}{\end{equation}}
\newcommand{\bel}{\begin{lemma}}
\newcommand{\enl}{\end{lemma}}
\newcommand{\bet}{\begin{theorem}}
\newcommand{\ent}{\end{theorem}}

\makeatletter

\makeatother

\newcommand{\Tr}{\mathrm{tr}}

\usepackage{booktabs,tabularx}

\newcommand{\suppress}[1]{}

\newcommand{\bra}[1]{\langle #1|}
\newcommand{\ket}[1]{|#1 \rangle}
\newcommand{\braket}[2]{\langle #1|#2\rangle}

\mathchardef\mhyphen="2D

\def\be{\begin{equation}}
\def\ee{\end{equation}}

\newcommand{\gexcl}{\mathcal{G}_{\mathrm{ex}}}

\newcommand{\al}{\alpha}

\makeatletter
\newcommand*{\rom}[1]{\expandafter\@slowromancap\romannumeral #1@}
\makeatother

\makeatletter
\appto{\appendix}{%
 \@ifstar{\def\theequation@prefix{A.}}%
 {}%
}
\makeatother

\mathchardef\mhyphen="2D
\newtheorem{claim}{Claim}
\usepackage[colorlinks,citecolor=red,urlcolor=blue,bookmarks=false,hypertexnames=true]{hyperref} 
\newtheorem{theorem}{Theorem}
\newtheorem{lemma}[theorem]{Lemma}

\newtheorem{definition}[theorem]{Definition}
\newtheorem*{main result}{Main Theorem}

\newtheorem*{theorem*}{Theorem}

\def\01{\{0,1\}}

\newcommand{\im}{\mathop{\rm Im}\nolimits}

\newtheorem{example}{Example}

\theoremstyle{definition}

 \newenvironment{proofof}[1]{\vspace*{5mm} \par 
 \noindent{\it Proof of #1:\hspace{2mm}}}{\qed
}
\renewcommand{\qed}{\hfill$\blacksquare$}
\def\Label#1{\label{#1}\ [\ #1\ ]\ }
\def\Label{\label}

\usepackage{hyperref}
\hypersetup{
    colorlinks=true,
    citecolor = [rgb]{0.3,0.8,0},
    linkcolor=blue,
    filecolor=magenta,      
    urlcolor= magenta,
}

\begin{document}

\title{Graph-Theoretic Framework for Self-Testing in Bell Scenarios}

\begin{abstract}
Quantum self-testing is the task of certifying quantum states and measurements using the output statistics solely, with minimal assumptions about the underlying quantum system. 
It is based on the observation that some extremal points in the set of quantum correlations can only be achieved, up to isometries, with specific states and measurements. Here, we present a new approach for quantum self-testing in Bell non-locality scenarios, motivated by the following observation: the quantum maximum of a given Bell inequality is, in general, difficult to characterize. However, it is strictly contained in an easy-to-characterize set: the \emph{theta body} of a vertex-weighted induced subgraph $(\gexcl,w)$
of the graph in which vertices represent the events and edges join mutually exclusive events. This implies that, for the cases where the quantum maximum and the maximum within the theta body (known as the Lov\'asz theta number) of $(\gexcl,w)$ coincide, self-testing can be demonstrated by just proving self-testability with the theta body of $\gexcl$. This graph-theoretic framework allows us to: (i)~recover the self-testability of several quantum correlations that are known to permit self-testing (like those violating the Clauser-Horne-Shimony-Holt (CHSH) and three-party Mermin Bell inequalities for projective measurements of arbitrary rank, and chained Bell inequalities for rank-one projective measurements), (ii)~prove the self-testability of quantum correlations that were not known using existing self-testing techniques (e.g., those violating the Abner Shimony Bell inequality for rank-one projective measurements). Additionally, the analysis of the chained Bell inequalities, along with prior results in Bell non-locality literature, gives us a closed form expression of the Lov\'asz theta number for a family of well studied graphs known as the Möbius ladders, which might be of independent interest in the community of discrete mathematics. 
\end{abstract}

\author{Kishor Bharti}
\affiliation{Centre for Quantum Technologies, National University of Singapore}

\author{Maharshi Ray}
\affiliation{Department of Information Engineering, Graduate School of Engineering, Mie University, Japan}

\author{Zhen-Peng Xu}
\affiliation{Naturwissenschaftlich-Technische Fakult\"{a}t, Universit\"{a}t Siegen, Walter-Flex-Stra{\ss}e 3, 57068 Siegen, Germany} 

\author{Masahito Hayashi}
\affiliation{Shenzhen Institute for Quantum Science and Engineering, Southern University of Science and Technology, Shenzhen, 518055, China}
\affiliation{Guangdong Provincial Key Laboratory of Quantum Science and Engineering, Southern University of Science and Technology, Shenzhen 518055, China}
\affiliation{Graduate School of Mathematics, Nagoya University, Nagoya, Furo-cho,Chikusa-ku, 464-8602, Japan}
\affiliation{Centre for Quantum Technologies, National University of Singapore}

\author{Leong-Chuan Kwek}
\affiliation{Centre for Quantum Technologies, National University of Singapore} \affiliation{MajuLab, CNRS-UNS-NUS-NTU International Joint Research Unit, Singapore UMI 3654, Singapore}
\affiliation{National Institute of Education, Nanyang Technological University, Singapore 637616, Singapore}

\author{Ad\'{a}n Cabello}
\affiliation{Departamento de F\'{i}sica Aplicada II, Universidad de Sevilla, E-41012 Sevilla, Spain}
\affiliation{Instituto Carlos I de F\'{\i}sica Te\'orica y Computacional, Universidad de Sevilla, E-41012 Sevilla, Spain}

\maketitle

\section{Introduction}

In many information processing tasks, quantum systems render a distinct advantage over classical systems. Motivated by this observation, there has been a rapid development of quantum technologies with potentially new real-world communication and computation applications.
We have also recently witnessed ``quantum supremacy''~\cite{arute2019quantum,zhong2020quantum} and early hints of the quantum internet~\cite{wehner2018quantum}. With the increasing importance of quantum technologies, it becomes pertinent to develop tools for certifying, verifying, and benchmarking quantum devices with minimal assumptions regarding their inner working mechanisms~\cite{eisert2019quantum}. This is a challenging task due to the enormous dimensionality of the Hilbert space associated with the quantum systems. 

One of the prominent approaches to device certification is self-testing
\cite{Yao_self}. The idea of self-testing is to certify underlying
measurement settings and quantum states using solely measurement statistics.
The notion was initially put forward for Bell non-local correlations. The concept has since been extended to
prepare-and-measure scenarios \cite{tavakoli2020self, farkas2019self}, contextuality \cite{BRVWCK19, bharti2019local}, and steering \cite{vsupic2016self, gheorghiu2015robustness,shrotriya2020self}. Self-testing has also been applied to quantum gates and circuits \cite{van2007self,magniez2006self}. A great amount of
work has also been done in making self-testing protocols robust against experimental noise \cite{mckague2012robust,yang2013robust,wu2014robust,miller2013optimal}.
While the Ref. ~\cite{Yao_self} considered the noiseless case for the CHSH self-testing, the authors in ~\cite{mckague2012robust} extended it to the noisy case.
In ~\cite{hayashi2018self}, the authors proposed the mixture of CHSH test and stabilizer test, which has better noise tolerance than the CHSH test.
The authors in ~\cite{mckague2011self} proposed a robust self testing method for graph states. In the aforementioned self-testing protocol by ~\cite{mckague2011self}, the number of required copies increases with order $\mathcal{O}(n^{22})$, where $n$ is the number of qubits of one graph state.
To improve the scaling, the authors in ~\cite{hayashi2018self} proposed another self testing method for the same setting
 with $\mathcal{O}(n^4 \log n)$ copies.
Further, the paper ~\cite{hayashi2019verifiable} proposed a robust self testing protocol for GHZ states. Self-testing with Bell states of higher dimensions has been studied in  ~\cite{kaniewski2019maximal,sarkar2019self}. In ~\cite{kaniewski2016analytic}, tripartite Mermin inequalty was used for robust  self-testing of the three party GHZ state. Robust self-testing protocols based on Chained Bell inequalities have been investigated in Ref. ~\cite{vsupic2016self}.  Comprehensive studies have been carried on for self-testing of single quantum device based on contextuality~\cite{BRVWCK19, bharti2019local} and via computational assumptions ~\cite{metger2020self}. The idea of self-testing has
been used for device-independent randomness generation \cite{coudron2014infinite, colbeck2009quantum, dhara2013maximal, vsupic2016self}, entanglement
detection \cite{bowles2018device, bowles2018self}, delegated quantum computing \cite{reichardt2013classical,mckague2013interactive}, and in several computational complexity proofs, such as the recent breakthrough
result of MIP{*} = RE \cite{ji2020mip}. For a thorough review of self-testing, refer to \cite{vsupic2019self}.

Recently, graph-theoretic techniques have been widely used to study the set of quantum correlations \cite{CSW, slofstra2020tsirelson}. In \cite{CSW}, the authors provide a graph-theoretic characterization of classical and quantum sets in correlation experiments with well-studied objects in graph theory (and combinatorial optimization). In particular, the authors in \cite{CSW} study Bell inequalities and non-contextuality inequalities (a generalization of Bell inequalities). The techniques from  \cite{CSW} have been used to provide robust self-testing schemes in the framework of non-contextuality inequalities for single systems \cite{BRVWCK19}. However, a systematic treatment for Bell scenarios is still lacking. Here, we provide a graph-theoretic approach to study Bell self-testing for multi-partite scenarios by combining techniques from combinatorial optimization and results from \cite{CSW}.

Given a Bell scenario, the set of quantum correlations $B_Q$ is, in general, difficult to characterize. However, $B_Q$ is a strict subset of an easy-to-characterize set, i.e., the \textit{theta body}~\cite{grotschel1988geometric} 
of the graph of exclusivity $\gexcl(V, E)$ of all the events of the scenario. The vertices in $\gexcl(V, E)$ represent the events produced in the scenario \cite{CSW}. The edges in $\gexcl(V, E)$ connect the nodes corresponding to mutually exclusive events.
Using the normalization conditions, every Bell non-locality witness can be written as $S = \sum w_i p_i$, where $w_i > 0$ and $p_i$ are probabilities of events. Therefore, $S$ can be associated to a vertex-weighted graph $(G,w)$ where weights correspond to the $w_i$ and $G$ is an induced subgraph of $\gexcl(V, E)$ \cite{CSW}. The quantum maximum of $S$ must be in the theta body of $G$, which is an even easier to characterize set, as $G$ is a subgraph of $\gexcl(V, E)$. 
Therefore, for the cases where the quantum maximum of a Bell non-locality witness is equal to the maximum of the theta body of $G$, one can prove the self-testing of the Bell inequality by analyzing the theta body of $G$.

We have two sets of assumptions. Our first key assumption  is that the quantum maximum for the Bell witness is equal to the Lov\'asz theta number of the vertex-weighted induced subgraph of $\gexcl(V, E)$ corresponding to the events and their respective weights when the Bell witness is written as a positive linear combination of probabilities of events. The Lov\'asz theta number \cite{lovasz1979shannon} 
is a graph invariant defined in~\eqref{theta:primal} (see section \ref{Back_result}), of the aforementioned induced subgraph.  Our second set of assumptions involve some particular relation among the local projective measurements involved in the scenario. We elaborate on the second set of assumptions in subsection \ref{subsec:results}.  Our results for bipartite and tripartite cases have been stated as Theorems \ref{TH7MT}, \ref{TH8MT}, \ref{TH9MT} and \ref{TH10MT} (see subsection \ref{subsec:results}). Since induced subgraph with weights is still a graph of exclusivity, we will use $(\gexcl,w)$ throughout the paper in place of $(G,w)$.

We apply our techniques to quantum correlations which are known to allow for self testing:  those maximally violating the CHSH~\cite{CHSH}, chained~\cite{pearle70hidden,braunstein90wringling}, and three-party Mermin~\cite{Mermin90} Bell inequaliuties. For CHSH and tripartite Mermin Bell inequalities, we recover self-testing statements for projectors of arbitrary rank. For the family of chained Bell inequalities, we recover self-testing statement for rank-one projective measurements. Our method leads to intriguing insights concerning the dimensionality of the shared quantum state and measurement settings.
We also furnish a self-testing statement for the previously not known case of the Abner Shimony (AS) inequality~\cite{gisin2009bell} for the case of rank-one projective measurements.
In addition, we provide the closed-form expression for the Lov\'asz theta number for M\"{o}bius ladder graphs~\cite{guy1967mobius} 
using the aforementioned connections. The previous closed-form expression was conjectured in~\cite{mateus}. Our result, thus, renders a proof for this conjecture.

The structure of the paper is as follows. We discuss the background literature needed for our work and prove our results in section \ref{Back_result}. The test cases are presented in section \ref{test}. There, we discuss the CHSH, chained, Mermin, and AS Bell inequalities. Finally, in section \ref{discuss}, we discuss the implications of our work and provide some open problems for future study.

\section{Background and Results} \label{Back_result}

\subsection{The graph of exclusivity framework}\label{framework}

A measurement
$M$, together with its outcome $a$, is called a \emph{measurement event} (or event, for
brevity) and denoted $(a|M)$. 
Two events, $e_{i}$ and $e_{j}$ are \emph{mutually exclusive} (or exclusive, for
brevity)
if there exists a measurement $M$ such that $e_{i}$ and $e_{j}$
correspond to different outcomes of $M$. To any set of events $\{e_i\}_{i=1}^N$, we associate a simple undirected graph $\gexcl=([N],E)$, where $[N]$ refers to the set $\{1,2,\ldots, N\}$. This graph, referred to as the {\em graph of exclusivity}, has vertex set $[N]$ and two vertices $i,j$ are adjacent (denoted $i \sim j$) if the corresponding events $e_i$ and $e_j$ are exclusive.

We now consider theories that assign probabilities to events. A {\em behavior} for $\gexcl$ is a mapping $p: [N]\to [0,1]$, such that $p_i+p_j\le 1$, for all $i\sim j$, where we denote $p(i)$ by $p_i$. 
Here, the non-negative scalar $p_i\in [0,1]$ encodes the probability that event $e_i$ occurs. The linear constraint $p_i+p_j\le 1$ enforces that, if $p_i=~1$, then $p_j=~0$.

A behavior $p: [N]\to [0,1]$ is {\em deterministic non-contextual} if all events have pre-determined binary values ($0$ or $1$) that do not depend on the occurrence of other events.  In other words, a deterministic non-contextual behavior $p$ is a mapping $p: [N]\to~\{0,1\}$,such that $p_i+p_j\le 1$, for all $i\sim j$. A {\em deterministic non-contextual} behavior can be considered a vector in $\mathbb{R}^N$.  The convex hull of all deterministic non-contextual behaviors is called the {\em set of non-contextual behaviors}, denoted $\mathcal{P}_{NC}(\gexcl)$. The set  $\mathcal{P}_{NC}(\gexcl)$ is a polytope with its vertices being the deterministic non-contextual behaviors. Behaviors that do not lie in $\mathcal{P}_{NC}(\gexcl)$ are called {\em contextual}. It is worth mentioning that, in combinatorial optimisation, one often encounters the {\em stable set} polytope of a graph $\gexcl$, $\text{\rm STAB}(\gexcl)$ (see Appendix \ref{GTB}). It is quite easy to see that stable sets of $\gexcl$ (a subset of vertices, where no two vertices share an edge between them) and {\em non-contextual} behaviors coincide. 

Lastly, a behavior $p: [N]\to [0,1]$ is called {\em quantum behavior} if there exists a quantum state $\ket{\psi}$ and projectors $\Pi_1,\ldots, \Pi_N$ acting on a Hilbert space $\mathcal{H}$ such that 
\be p_i= \bra{\psi}\Pi_i \ket{\psi}, \forall i\in [N] \text{ and } \Tr(\Pi_i\Pi_j)=0, \text{ for } i\sim j.\ee 
We refer to the ensemble $\ket{\psi}, \{\Pi\}_{i=1}^N$ as a {\em quantum realization} of the behavior $p$. 
The set of all quantum behaviors is a convex set, denoted by $\mathcal{P}_{Q}(\gexcl)$. It turns out that $\mathcal{P}_{Q}(\gexcl)$ is also a well-studied entity in combinatorial optimisation, namely the {\em theta body}, denoted by ${\rm TH}(\gexcl)$ and is formally defined in Appendix \ref{GTB} definition~\ref{Theta1}.

Now, suppose that we are interested in the maximum value of the sum $S = w_1 p_1 + w_2 p_2 + \cdots + w_N p_N$, where $w_i \geq 0$ are weights for $i \in [N]$ and 
\begin{enumerate}
    \item $p \in \mathcal{P}_{NC}(\gexcl)$ is a {\em non-contextual} behavior. In this case, the maximum (henceforth referred to as the classical bound) is given by the independence number of the vertex weighted graph of exclusivity, $\alpha(\gexcl,w)$, that is, the size of the largest clique in the complement graph. Here, $w$ refers to the $N$ dimensional vector of non-negative weights.
    \item $p \in \mathcal{P}_{Q}(\gexcl)$ is a {\em quantum} behavior. In this case, the maximum (henceforth referred to as the quantum bound) is given by the Lov\'asz theta number of the vertex weighted graph of exclusivity, $\vartheta(\gexcl,w)$, defined by the following semidefinite program: 
    
\begin{equation}\label{theta:primal}
\begin{aligned} 
\vartheta(\gexcl, w) = \max & \  \sum_{i=1}^N w_i { X}_{ii} \\
{\rm  s.t.}   &  \  { X}_{ii}={ X}_{0i}, \ \forall i\in [N],\\
  & \ { X}_{ij}=0,\ \forall i\sim j,\\
& \ X_{00}=1,\  X\in \mathbb{S}^{1+N}_+,
\end{aligned}
\end{equation}
\end{enumerate}
where, $\mathbb{S}^{1+N}_+$ denotes positive semidefinite  matrices of size $(N+1) \times (N +1).$ From the definition of the theta body and Lemma~\ref{csdcever} (see Appendix \ref{GTB}), one can note that $p_i = X_{ii}$ for all $i \in [N]$.   

Proofs of the above statements follow quite straightforwardly from the definitions and were first observed in~\cite{CSW}. The Gram-Schmidt decomposition of matrix $X$ corresponding to  \eqref{theta:primal} gives the quantum realization for the underlying behaviour $p$ \cite{BRVWCK19} (see Appendix \ref{GTB} for the definition of Gram-Schmidt decomposition). Note that, for a fixed $X$, its different Gram-Schmidt decompositions are related to one another via isometry.

\begin{definition}(\textbf{Non-contextuality inequality}) For a given graph of exclusivity $\gexcl$, a non-contextuality inequality corresponds to a halfspace that contains the set of non-contextual behaviors, i.e.,
\begin{equation}
\sum_i w_i p_i \leq \alpha(\gexcl, w), \forall p \in \mathcal{P}_{NC}(\gexcl), 
\end{equation}
and $w_i \geq 0$  $ \forall i \in [N]$.
\end{definition}

\subsection{The CHSH experiment in the graph of exclusivity framework}\label{CHSH}

In the CHSH Bell experiment, an arbitrator generates two maximally entangled quantum systems and transmits them to two spatially separated parties: Alice and Bob. Alice has two measurement settings, $x=0$ and $x=1$, and Bob has likewise two measurement settings, $y=0$ and $y=1$. These local measurements are binary observables, each having outcomes, say $0$ and $1$. Each party (Alice and Bob) measures in every round in either the $0$ or the $1$ setting. The selections of settings made by each party must be random and independent of those of the other party. Let $(a,b\vert x,y)$ represent the event where Alice measures in the setting $x$, Bob measures in the setting $y$, and they get $a\in\{0,1\}$ and $b\in\{0,1\}$, respectively. Let the probability of the corresponding event be $p(a,b\vert x,y)$. There are sixteen different events corresponding to all possible combinations of inputs and outputs. They repeat this exercise a considerable enough times, once they are finished,  to determine the probabilities of these events.  

In the CHSH test,eight out of the sixteen events are of relevance, as one is interested in mazimizing the Bell witness given by
\begin{equation}
\label{CHSH}
    S_{\text{CHSH}} = p(0,0\vert0,0) + p(1,1\vert0,1) + p(1,0\vert 1,1) +p(0,0\vert 1,0) +p(1,1\vert0,0) +p(0,0\vert0,1) +p(0,1\vert 1,1) +p(1,1\vert 1,0).  
\end{equation}
Notice that the aforementioned witness  necessitates Alice and Bob to output same answers unless they both are asked $x=y=1$,. In cases they are asked $x=y=1$,  they should opposite answers. The graph of exclusivity corresponding to these eight events is shown in Fig.~\ref{circulantgraph}, and is denoted as $C_{i_8}(1,4)$. The weights on each of the vertices is $1$ and thus the weight vector is an eight dimensional all $1$ vector.  Notice that $\alpha(C_{i_8}(1,4)) = 3$  and, thus, the classical bound of $S_{\text{CHSH}} \leq 3$. Whereas, $\vartheta(C_{i_8}(1,4)) = 2+\sqrt{2} \approx 3.414$, see \cite{CSW}, and therefore the quantum bound of $S_{\text{CHSH}} \leq 2+\sqrt{2}$.

\begin{figure}[h]
\centering
\includegraphics[width=0.4\textwidth]{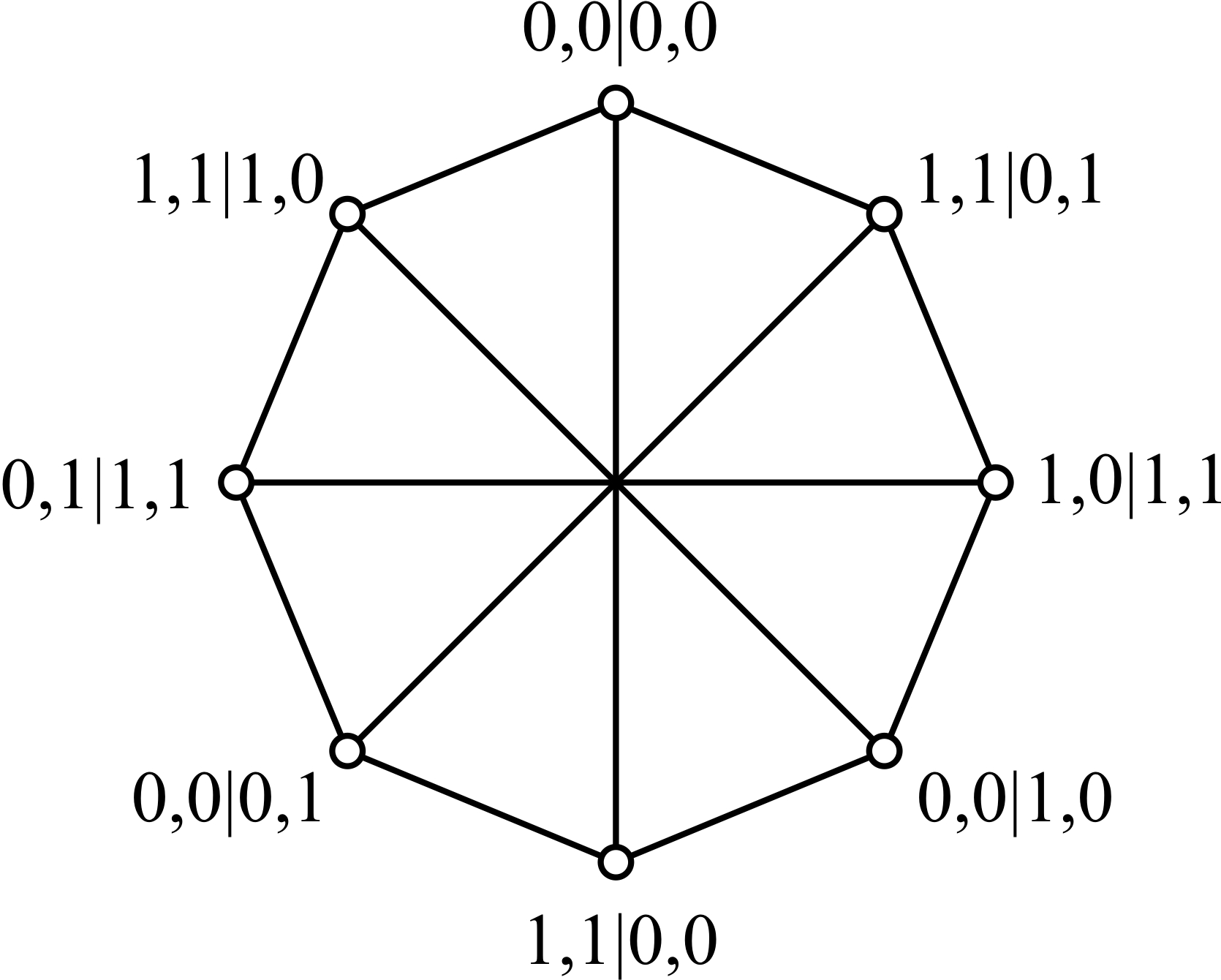}
\caption{Induced subgraph (of the 16-vertex graph of exclusivity of the events in the CHSH scenario) corresponding to the 8 events involved in the expression of the Bell witness given by Eq.~\eqref{CHSH}. This graph is called the 8-vertex circulant graph (1,4) and is denoted $Ci_8[1,4]$ (see definition~\ref{circulantgraph} for a definition of circulant graphs), and is isomorphic to the M\"obius ladder graph of order $2$.}
\centering
\label{circulantgraph}
\end{figure}

\subsection{Self-Testing}

Bell inequalities are special instances of non-contextuality inequalities. Consider an $n$-partite Bell scenario, characterized by a number $n$ of distant observers or parties, their respective measurement settings, and their possible outcomes. Suppose party $j$ possesses $k_j$ different settings with $K_j$ different outcomes for each measurement. In such a scenario, one can compute the probability of a particular string of outcomes given a string of measurements, that is, $p[a_1,a_2,\ldots, a_n | x_1,x_2,\ldots, x_n]$, where $a_j \in [K_j]$ and $x_j \in [k_j]$ for all $j \in [n]$. We use the notation $\vec{a}$ to refer to the $n$-tuple string $a_1,a_2,\ldots,a_n$. Similarly, we use $\vec{x}$ for the measurement settings. A $n$-partite Bell inequality is of the following form:   
\begin{equation}\label{nbell}
\sum_{\vec{a},\vec{x}}s^{\vec{a}}_{\vec{x}} \, p\left[\vec{a} | \vec{x}\right]\le S_{\mathcal{L}}, 
\end{equation}
for some coefficients $s^{\vec{a}}_{\vec{x}}$ and where $S_{\mathcal{L}}$ is the largest possible value allowed in \emph{local hidden variable}(LHV) models~\cite{brunner2014bell}.
The quantum supremum of the Bell expression, i.e., the left hand side of~\eqref{nbell}, denoted by $S_{\mathcal{Q}}$, is the largest possible value of the above expression when $p\left[\vec{a} | \vec{x}\right]$ ranges over the set of quantum behaviors, i.e., 
\begin{equation}
p\left[\vec{a} | \vec{x}\right] = \bra{\psi} \bigotimes_{j = 1}^n M^j_{a_j|x_j}  \ket{\psi},
\end{equation}
for a shared quantum state $\ket{\psi}\in \calH_1 \otimes \calH_2 \ldots,\otimes \calH_n$ and quantum projective measurements $\{ M^j_{a_j|x_j} \}$  acting on $\calH_j$ for all $j \in n$. We refer to the state and the set of measurements that reproduce the quantum behavior, collectively as a \emph{quantum realization}.

\begin{definition}\label{bellstdef} (\textbf{Bell self-testing}) The quantum supremum $S_{\mathcal{Q}}$ of a Bell inequality is a self-test for the realization $(\psi, \{M^j_{a_j|x_j}\}_j)$, if for any other realization $({\psi}', \{M'^j_{a_j|x_j} \})$ that also attains $S_{\mathcal{Q}}$, there exists a local unitary $V=V_1\otimes V_2,\ldots,\otimes V_n$ and an ancilla state $\ket{junk}$ such~that 
\be\label{nbell:ST}
\begin{aligned}
V\ket{\psi'}&=\ket{junk}\otimes \ket{\psi} ,\\
V(\bigotimes_{j = 1}^n M'^j_{a_j|x_j}) \ket{\psi'} & =\ket{junk}\otimes (\bigotimes_{j = 1}^n M^j_{a_j|x_j})\ket{\psi}.
\end{aligned}
\ee
\end{definition} 

\subsection{Relevant background from Semidefinite programs}\label{SDP_back}
\begin{definition}(\textbf{Semidefinite programs}) A pair of primal and dual SDPs is given by an optimisation problem of the following form:
\begin{align} 
& \underset{{ X}}{\text{sup}} \left\{ \la C,{ X}\ra : X \in \mcs^n_+,\ \la A_i,{ X} \ra=b_i \ (i\in [m]) \right\},\\
&\underset{y,Z}{\text{inf}}\left\{ \sum_{i=1}^m b_iy_i\ :\ \sum_{i=1}^my_iA_i-C=Z\in \mcs^n_+\right\},
\end{align}
where $C, A_i$ (for all $i \in [m]$) are Hermitian $n \times n$ matrices and $b  \in \mathbb{C}^m$.
\end{definition}
We have introduced the primal formulation of the Lov\'asz theta SDP in~\eqref{theta:primal}.

The dual formulation  for \eqref{theta:primal} is given by
\be \label{theta:dual}
\min t :   tE_{00}+\sum_{i=1}^n (\lambda_i-1)E_{ii}-\sum_{i=1}^n\lambda_i E_{0i}+\sum_{i\sim j} \mu_{ij}E_{ij} \equiv Z \succeq 0,
\ee
where $E_{ij} = \frac{e_ie_j^T + e_je_i^T }{2}$. We make crucial use of the following Theorem due to Alizadeh {\em et al.}~\cite[Theorem 4]{alizadeh} to show that the optimiser of~\eqref{theta:primal}  is unique.  

\begin{theorem}~\cite{alizadeh} \label{alizadeh}
Let $Z^*$ be a dual optimal and nondegenerate solution of a semidefinite program. Then, there exists a unique primal optimal solution for that SDP. 
\end{theorem}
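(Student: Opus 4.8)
The plan is to obtain uniqueness of the primal optimal solution from complementary slackness together with the dual nondegeneracy hypothesis on $Z^*$. Write the primal feasible set as $\{X\in\mathbb{S}^n_+ : \langle A_i,X\rangle=b_i,\ i\in[m]\}$ and let $y^*$ be a dual optimal point with $\sum_{i=1}^m y_i^*A_i-C=Z^*$. First I would record complementary slackness: for any primal optimal $X^*$, strong duality gives $\langle C,X^*\rangle=\sum_i b_iy_i^*=\sum_i\langle A_i,X^*\rangle y_i^*=\langle C+Z^*,X^*\rangle$, the middle step using $\langle A_i,X^*\rangle=b_i$; hence $\langle Z^*,X^*\rangle=0$, and since $X^*,Z^*\succeq 0$ this forces $Z^*X^*=X^*Z^*=0$, i.e.\ $\mathrm{range}(X^*)\subseteq\ker(Z^*)$.

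Next I would compress the problem onto $\ker(Z^*)$. Let $r=\dim\ker(Z^*)$ and pick $Q_1\in\mathbb{R}^{n\times r}$ with orthonormal columns spanning $\ker(Z^*)$. By the previous step every primal optimal $X^*$ has the form $X^*=Q_1UQ_1^T$ for a unique $U\in\mathbb{S}^r_+$, and the affine constraints become the linear system $\langle Q_1^TA_iQ_1,U\rangle=b_i$ ($i\in[m]$) in the variable $U\in\mathbb{S}^r$; thus it suffices that this system have at most one solution. Here I would invoke dual nondegeneracy of $Z^*$, which states $\mathbb{S}^n=\mathrm{span}\{A_1,\dots,A_m\}+\mathcal{T}$ with $\mathcal{T}=\{M\in\mathbb{S}^n:Q_1^TMQ_1=0\}$ the tangent space at $Z^*$ of the manifold of symmetric matrices of rank $n-r$. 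Passing to orthogonal complements in $\mathbb{S}^n$, this is equivalent to the assertion that the only $S\in\mathbb{S}^r$ with $\langle Q_1^TA_iQ_1,S\rangle=0$ for all $i$ is $S=0$, i.e.\ that $\{Q_1^TA_iQ_1\}_{i=1}^m$ spans $\mathbb{S}^r$. Consequently the difference of any two solutions of the compressed system is orthogonal to a spanning family and so vanishes, giving uniqueness of $U$ and hence of $X^*$; existence of a primal optimum is the standard consequence of strong duality under the Slater-type constraint qualification in force (as holds for the theta SDP~\eqref{theta:primal}).

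The step I expect to be the main obstacle is putting the dual nondegeneracy hypothesis into exactly the algebraic shape needed above --- the equivalence between ``tangent space plus constraint span fills $\mathbb{S}^n$'' and ``the compressions $Q_1^TA_iQ_1$ span $\mathbb{S}^r$'' --- and carrying out the compression while knowing only a priori that $X^*\succeq 0$, so that one cannot appeal to a relative interior of the optimal face. Establishing strong duality and attainment of the primal optimum is routine given the constraint qualification, but it must be stated explicitly for the argument to be complete.
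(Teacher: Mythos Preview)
The paper does not supply its own proof of this theorem; it is quoted verbatim from Alizadeh et al.\ and used as a black box. Your argument is the standard proof of that result and is correct.

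One small remark: the paper's definition of dual nondegeneracy is phrased as ``$MZ^*=0$ and $\langle M,A_i\rangle=0$ for all $i$ forces $M=0$'', while you work with the tangent-space version $\mathbb{S}^n=\mathrm{span}\{A_i\}+\mathcal{T}$. You do assert their equivalence, and the bridge is exactly the computation that for symmetric $M$ the matrix equation $MZ^*=0$ is equivalent to $M=Q_1SQ_1^T$ for some $S\in\mathbb{S}^r$ (since $MZ^*=0$ forces $\mathrm{range}(M)\subseteq\ker(Z^*)$ and symmetry then kills the off-diagonal blocks in the $[Q_1\,\,Q_2]$ decomposition). It would strengthen your write-up to spell this out explicitly, since that is precisely the ``algebraic shape'' step you flag as the main obstacle. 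Your treatment of existence via strong duality under a Slater condition is also appropriate; the theta SDP \eqref{theta:primal} indeed has a strictly feasible point (e.g.\ a small multiple of the identity plus a rank-one correction for the $X_{0i}=X_{ii}$ constraints), so the qualification holds in the applications the paper cares about.
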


The notion of dual nondegeneracy is given by the following definition.
\begin{definition}(\textbf{Dual nondegeneracy}) Let $Z^*$ be an optimal dual solution and let $M$ be any symmetric matrix. If the homogeneous linear system 

\begin{align}
&MZ^* = 0, \label{mzcon} \\
&\Tr(MA_i) = 0 \ (\forall i \in [m]), \label{macon}    
\end{align}
only admits the trivial solution $M=0$, then $Z^*$ is said to be dual nondegenerate. 
\end{definition}

A key ingredient for proving the results in this paper is the following lemma: 

\begin{lemma}(\cite{BRVWCK19}) \label{stprl}
 Let $X^{*}$ be the unique optimal solution for the primal and let $\left\{\left|u_{i}\right\rangle\left\langle u_{i}\right|\right\}_{i=0}^{n}$ be a quantum realization achieving the maximum quantum value of $\sum_{i=1}^{n} w_{i} p_{i}: p \in \mathcal{P}_{q}\left(\mathcal{G}_{\mathrm{ex}}\right)$. Then, the non-contextuality inequality $\sum_{i=1}^{n} w_{i} p_{i} \leq B_{n c}\left(\mathcal{G}_{\mathrm{ex}}, w\right)$, for all $p \in$
$\mathcal{P}_{n c}\left(\mathcal{G}_{\mathrm{ex}}\right)$ is a self-test for the realization $\left\{\left|u_{i}\right\rangle\left\langle u_{i}\right|\right\}_{i=0}^{n}$.
\end{lemma}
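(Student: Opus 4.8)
The plan is to combine two facts already recorded in this section: first, that $\mathcal{P}_Q(\gexcl)=\tbody(\gexcl)$, so that the maximum of $\sum_i w_i p_i$ over quantum behaviors equals $\vartheta(\gexcl,w)$ and is computed by the SDP~\eqref{theta:primal}; and second, that two ordered tuples of vectors with the same Gram matrix are related by an isometry (equivalently, distinct Gram decompositions of a fixed positive semidefinite matrix differ by an isometry). The overall strategy is to show that \emph{any} quantum realization attaining the quantum maximum induces, canonically, a \emph{feasible} point of~\eqref{theta:primal} whose objective value is exactly $\vartheta(\gexcl,w)$ --- hence an \emph{optimal} point; to invoke the uniqueness hypothesis to deduce that this point equals $X^*$ for every such realization; and then to read off the self-testing isometry from the coincidence of Gram matrices.

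Concretely, I would fix an arbitrary competing realization --- a state $\ket{\psi'}$ and projectors $\{\Pi_i'\}_{i=1}^n$ satisfying $\Tr(\Pi_i'\Pi_j')=0$ for $i\sim j$ and $\sum_{i=1}^n w_i\bra{\psi'}\Pi_i'\ket{\psi'}=\vartheta(\gexcl,w)$ --- and form the $(n{+}1)\times(n{+}1)$ Gram matrix $X'$ of the vectors $\ket{\psi'},\Pi_1'\ket{\psi'},\ldots,\Pi_n'\ket{\psi'}$, indexed $0,1,\ldots,n$. Using idempotency of projectors together with the fact that $\Tr(\Pi_i'\Pi_j')=0$ and positivity force $\Pi_i'\Pi_j'=0$ (indeed then $\Pi_i'\Pi_j'\Pi_i'=(\Pi_j'\Pi_i')^\dagger(\Pi_j'\Pi_i')\succeq0$ has zero trace, hence vanishes), one checks directly that $X'_{00}=1$, that $X'_{0i}=X'_{ii}=\bra{\psi'}\Pi_i'\ket{\psi'}$, and that $X'_{ij}=0$ for $i\sim j$; being a Gram matrix, $X'\succeq0$. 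Thus $X'$ is feasible for~\eqref{theta:primal} with objective $\sum_i w_i X'_{ii}=\vartheta(\gexcl,w)$, hence it is primal optimal, so by hypothesis $X'=X^*$. Running the same computation on the reference realization $\{\ketbra{u_i}\}_{i=0}^n$ --- with $\ket{u_0}$ in the role of the state and $\Pi_i=\ketbra{u_i}$ for $i\ge1$ in the role of the measurement projectors --- shows that its Gram matrix is $X^*$ as well.

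It then remains to convert the equality of Gram matrices into the relations of Definition~\ref{bellstdef}. Since the tuples $\bigl(\ket{u_0},\Pi_1\ket{u_0},\ldots,\Pi_n\ket{u_0}\bigr)$ and $\bigl(\ket{\psi'},\Pi_1'\ket{\psi'},\ldots,\Pi_n'\ket{\psi'}\bigr)$ have the common Gram matrix $X^*$, there is an isometry $V$ --- from the span of the second tuple into a space containing the span of the first, with any dimension mismatch absorbed into an ancilla $\ket{junk}$ --- such that $V\ket{\psi'}=\ket{junk}\otimes\ket{u_0}$ and $V\,\Pi_i'\ket{\psi'}=\ket{junk}\otimes\Pi_i\ket{u_0}$ for all $i$. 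This is exactly the self-testing requirement specialized to a single system, with $\ket{u_0}$ playing the role of $\ket{\psi}$ and $\Pi_i$ the role of the measurement operator. Since the competing realization was arbitrary, $\sum_i w_i p_i\le B_{nc}(\gexcl,w)=\alpha(\gexcl,w)$ is a self-test for $\{\ketbra{u_i}\}_{i=0}^n$.

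I expect the only step demanding real care to be the verification that an arbitrary \emph{optimal} quantum realization yields a \emph{feasible} SDP point, on which everything downstream hinges; inside it, the implication $\Tr(\Pi_i'\Pi_j')=0\Rightarrow\Pi_i'\Pi_j'=0$ uses positivity of the projectors essentially. A convenient device is to work directly with the vectors $\Pi_i'\ket{\psi'}$ rather than with normalized measurement vectors $\ket{u_i'}$, which avoids having to track the phases of the overlaps $\braket{u_i'}{\psi'}$. The remaining ingredients --- that $\vartheta(\gexcl,w)$ is the quantum maximum, and that Gram decompositions of a fixed matrix differ by an isometry (so that in particular the Gram--Schmidt decomposition of $X^*$ reproduces the reference realization up to isometry) --- are already available above and may be cited.
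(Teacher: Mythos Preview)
The paper does not supply its own proof of this lemma; it is quoted directly from \cite{BRVWCK19}. Your argument is correct and is essentially the one given there: from an arbitrary optimal realization form the Gram matrix of the vectors $\ket{\psi'},\Pi_1'\ket{\psi'},\ldots,\Pi_n'\ket{\psi'}$, verify feasibility for~\eqref{theta:primal} (idempotency gives $X'_{0i}=X'_{ii}$; the implication $\Tr(\Pi_i'\Pi_j')=0\Rightarrow\Pi_i'\Pi_j'=0$ gives the edge constraints), observe its objective value is $\vartheta(\gexcl,w)$, invoke uniqueness to identify it with $X^*$, and then appeal to the fact that two Gram decompositions of the same matrix are related by an isometry. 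One small remark: in the non-contextuality setting the isometry relating two Gram decompositions already maps span onto span of equal dimension, so the $\ket{junk}$ ancilla is not really needed at this stage --- it becomes essential only later in the paper, when one demands that the isometry have local tensor-product structure (Theorems~\ref{TH8MT} and~\ref{TH10MT}).
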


\subsection{Results} \label{subsec:results}

We are given a Bell inequality of the form~\ref{nbell} and we consider the set of events $p\left[\vec{a} | \vec{x}\right]$ such that $s^{\vec{a}}_{\vec{x}} \neq 0$. We shall index this set by $i$ and denote the corresponding event as $e_i$. Suppose we are given a $n$-partite Bell inequality with Bell witness $\mathcal{B}=\sum_i w_i p_i$, with $w_i > 0$ and $p_i=p(e_i)$, and a quantum realization $(\psi, \{M^j_{a_j|x_j}\}_j)$ (let us call this the {\em reference} system) that achieves the quantum supremum, $S_{\mathcal{Q}}$ of $\mathcal{B}$. Let $(\gexcl,w)$ be the weighted graph capturing the weights $\{w_i\}$ and mutual exclusivity relationships among the events $\{e_i\}$ in $\mathcal{B}$. 

We have two sets of assumptions in this manuscript. The first set of assumptions is following: 
\begin{enumerate}[(i)]
    \item $S_{\mathcal{Q}} = \vartheta(\gexcl,w)$. \label{ass1}
    \item The Lov\'asz theta SDP in~\eqref{theta:primal} corresponding to $(\gexcl,w)$ has a unique maximizer. This assumption is a consequence of assumption \ref{ass1} for the scenarios of interest in this paper. \label{ass2}
\end{enumerate}

We consider two types of sets of indexes ${\cal I}$ and ${\cal I}_0={\cal I} \cup \{0\}$.
We consider 
the matrix $X_{ij}:=\langle \psi | \Pi_j \Pi_i |\psi\rangle$,
where $\Pi_i$ is a projection and 
$\Pi_0$ is the identity operator.
We set $n:=|{\cal I}|$.
The assumption \ref{ass2} means that the following SDP has unique solution.
\begin{equation}\label{thetaSDP1}
\begin{aligned} 
\vartheta(\gexcl,w) = \max & \  \sum_{i \in {\cal I}} w_i { X}_{ii} \\
{\rm  s.t.}   &  \  { X}_{ii}={ X}_{0i}, \ \forall i\in [n],\\
  & \ { X}_{ij}=0,\ \forall i\sim j,\\
& \ X_{00}=1,\  X\in \mathbb{S}^{1+n}_+.
\end{aligned}
\end{equation}

The second set of assumptions depend on the scenarios of interest and have been mentioned in the following subsubsections. Our results for bipartite and tripartite cases have been summarized as Theorems \ref{TH7MT}, \ref{TH8MT}, \ref{TH9MT} and \ref{TH10MT}.
\subsubsection{Bipartite case}
Suppose the unique optimal maximizer 
$X^*=(X_{ij})$ is given by $\eta_i \eta_j\langle v_j, v_i\rangle$ with the following;
For $i=(i_A,i_B)\in {\cal I}$, 
\begin{align} v_{i} = a_{i_A} \otimes b_{i_B}, 
\label{NA1MT}
\end{align}
where $a_{i_A} \in {\cal H}_A=\mathbb{C}^{d_A}$, 
$b_{i_B} \in {\cal H}_B=\mathbb{C}^{d_B}$.
Also, for simplicity, $a_{i_A} $ and $ b_{i_B}$ are assumed to be normalized
and $\eta_i>0$. Now, we consider a state $|\psi'\rangle $ on ${\cal H}_A'\otimes {\cal H}_B'$,
and 
projections $\Pi_{i_A}^A$ and $\Pi_{i_B}^B$ on ${\cal H}_A'$ and ${\cal H}_B'$.
Here, 
when $i_A=i_A'$ ($i_B=i_B'$) for $i \neq i'$,
$\Pi_{i_A}^A=\Pi_{i_A'}^A$ ($\Pi_{i_B}^B=\Pi_{i_B'}^B$).
Then, we define the projection 
$\Pi_i:=\Pi_{i_A}^A\otimes \Pi_{i_B}^B$,

In the following, 
we discuss how 
the state $ |\psi'\rangle$ is locally converted to 
$ |\psi\rangle$ when
the vectors $\Pi_i |\psi'\rangle$ realize the optimal solution in the SDP \eqref{thetaSDP1}.
We define $|v_i'\rangle:= \eta_i^{-1} \Pi_{i}|\psi'\rangle$.

First, we consider the case that the ranks of the projections $\Pi_{i_A}^A$ and $\Pi_{i_B}^B$ 
are one.
We introduce the following conditions.
\begin{description}
\item[A1]
The set $\{v_i\}_{i\in {\cal I}_0}$ of vectors 
span the vector space ${\cal H}_A\otimes {\cal H}_B$.

\if0
The relation
\begin{eqnarray}
Vv_{i} = v_{i}', \Label{H1}
\end{eqnarray}
holds for all $i\in {\cal I}_0$.
\fi

\item[A2]
There exist
a subset ${\cal I}_B$ of indecies of the space ${\cal H}_B$ with
$|{\cal I}_B|=d_B=\dim {\cal H}_B$
and 
$d_B$ sets $\{ {\cal I}_{A,i_B} \}_{i_B \in {\cal I}_B}$ of indecies of the space 
${\cal H}_A$ 
$|{\cal I}_{A,i_B}|=d_A=\dim {\cal H}_A$
to satisfy the following conditions B1-B4.
\end{description}

\begin{description}
\item[B1]
$\cup_{i_B \in {\cal I}_B} {\cal I}_{A,i_B}\times \{i_B\}
\subset {\cal I}$.
\item[B2]
$\{ b_{i_B} \}_{i_B \in {\cal I}_B}$ spans the space ${\cal H}_B$.
\item[B3]
$\{ a_{i_A} \}_{i_A \in {\cal I}_{A, i_B}}$ spans the space ${\cal H}_A$
for any $i_B \in {\cal I}_B$.
\item[B4]
We define the graph on ${\cal I}_B$ in the following way.
The node
$i_B\in {\cal I}_B$ is connected to $i_B' \in {\cal I}_B$ 
when the following two conditions holds.
\begin{description}
\item[B4-1]
The relation $\langle b_{i_B},b_{i_B'}\rangle \neq 0$ holds.
\item[B4-2]
The relation ${\cal I}_{A, i_B}\cap  {\cal I}_{A, i_B'} \neq \emptyset$ holds.

\end{description}
\end{description}

In the two qubit case, if 
the set $\{v_i\}_{i\in {\cal I}}$ of vectors 
contains the following $4$ vectors, then  
the conditions A1 and A2 hold;
\begin{align}
a_0 \otimes b_0,~
a_1 \otimes b_0,~
a_0 \otimes b_1,~
a_2 \otimes b_1,\label{XO1MT}
\end{align}
where $a_0\neq a_1,a_2$,
$ \langle b_0,b_1\rangle \neq 0$.

\begin{theorem}\Label{TH7MT}
Assume that 
the optimal maximizer given in \eqref{NA1MT} satisfies conditions A1 and A2
and
the vectors $(\Pi_i |\psi'\rangle)_{i \in {\cal I}}$ realize the optimal solution in the SDP \eqref{thetaSDP1}.
In addition, the ranks of the projections $\Pi_{i_A}^A$ and $\Pi_{i_B}^B$ 
are assumed to be one.
Then, there exist isometries 
$V_A: {\cal H}_A \to {\cal H}_A'$
and $V_B: {\cal H}_B\to {\cal H}_B'$ such that
\begin{align}
V_A \otimes V_B|\psi \rangle&= |\psi'\rangle , \\
V_A \otimes V_B|v_i \rangle&= |v_i'\rangle ,
\end{align}
for $i \in {\cal I}$.
\hfill $\square$\end{theorem}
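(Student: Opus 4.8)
The plan is to construct the isometries $V_A$ and $V_B$ explicitly from the data of the SDP solutions and then verify the two intertwining relations. First I would set up the linear maps on the spanning sets: since $\{v_i\}_{i\in\cal{I}_0}$ spans $\cal{H}_A\otimes\cal{H}_B$ by condition A1, any candidate global isometry $V$ is forced by $Vv_i=v_i'$ (for $i\in\cal{I}_0$), so the content is (a) that this assignment is well-defined and norm-preserving, and (b) that the resulting $V$ factorizes as $V_A\otimes V_B$. For (a), I would invoke the uniqueness hypothesis: both $X^*=(\eta_i\eta_j\langle v_j,v_i\rangle)$ and the Gram matrix $(\langle v_i',v_j'\rangle\eta_i\eta_j)$ coming from $|v_i'\rangle=\eta_i^{-1}\Pi_i|\psi'\rangle$ are feasible for the SDP \eqref{thetaSDP1} and attain $\vartheta(\gexcl,w)$ (the latter because the $(\Pi_i|\psi'\rangle)$ realize the optimum), hence they are equal; equality of Gram matrices on a spanning set gives a well-defined isometry $V$ with $Vv_i=v_i'$. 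This is essentially the standard argument behind Lemma~\ref{stprl}, specialized to the bipartite tensor structure.

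The harder part is showing $V$ respects the tensor decomposition. Here conditions A2, B1--B4 do the work. I would first use B2: since $\{b_{i_B}\}_{i_B\in\cal{I}_B}$ is a basis of $\cal{H}_B$, and likewise each $\{a_{i_A}\}_{i_A\in\cal{I}_{A,i_B}}$ is a basis of $\cal{H}_A$ by B3, the product vectors $\{a_{i_A}\otimes b_{i_B}: i_B\in\cal{I}_B,\, i_A\in\cal{I}_{A,i_B}\}$ (all lying in $\cal{I}$ by B1) form a spanning set of product vectors. I would define $V_B$ on the basis $\{b_{i_B}\}$ and $V_A$ on each basis $\{a_{i_A}\}$, reading off the images from the known factorized form of the $v_i'$ — but for this to be consistent one must check that $\Pi_i|\psi'\rangle$ itself factorizes compatibly. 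This is where the rank-one assumption on $\Pi_{i_A}^A,\Pi_{i_B}^B$ and the identification $\Pi_i=\Pi_{i_A}^A\otimes\Pi_{i_B}^B$ enter: a rank-one projector is $|a_{i_A}'\rangle\langle a_{i_A}'|$ for some unit vector $a_{i_A}'\in\cal{H}_A'$, so $|v_i'\rangle$ is proportional to $a_{i_A}'\otimes b_{i_B}'$ times an overlap factor. Matching the Gram-matrix entries $\langle v_i',v_j'\rangle$ for indices sharing an $a$-factor or a $b$-factor then pins down $\langle a_{i_A}',a_{j_A}'\rangle=\langle a_{i_A},a_{j_A}\rangle$ and similarly on the $B$ side (up to a phase that can be absorbed), so the assignments $a_{i_A}\mapsto a_{i_A}'$, $b_{i_B}\mapsto b_{i_B}'$ extend to genuine isometries $V_A,V_B$.

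The role of B4 — the connectivity of the graph on $\cal{I}_B$ via nonzero overlaps $\langle b_{i_B},b_{i_B'}\rangle\neq 0$ and shared $A$-index sets — is to propagate the phase freedom consistently across all basis vectors: locally each factorization of $|v_i'\rangle$ into $a'\otimes b'$ has an arbitrary phase split between the two tensor legs, and one needs the overlap relations to glue these local choices into a single global pair $(V_A,V_B)$ with no residual obstruction. Concretely, I would fix the phase of $b_{i_B}'$ for one reference $i_B$, then walk along edges of the B4-graph: condition B4-1 lets me compare $b$-phases between adjacent nodes (the overlap being nonzero forces a definite relation), and condition B4-2 gives a shared $a$-index so that the induced phase on the $A$-leg is also consistent; connectivity then determines every phase from the reference. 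Once $V_A,V_B$ are defined, the relation $V_A\otimes V_B|v_i\rangle=|v_i'\rangle$ holds on the product spanning set by construction and extends by linearity; and $V_A\otimes V_B|\psi\rangle=|\psi'\rangle$ follows by writing $|\psi\rangle=|v_0\rangle$ (the $i=0$ vector, with $\Pi_0$ the identity, so $v_0'=|\psi'\rangle$) and using the $i=0$ case. The main obstacle I anticipate is exactly this phase-gluing step: verifying that B4-connectivity is not merely sufficient to compare phases pairwise but genuinely rules out a global inconsistency (a "holonomy" around cycles in the B4-graph), which requires checking that the overlap relations are compatible around loops — this should follow from the fact that all the $b_{i_B}'$ and $a_{i_A}'$ live in fixed Hilbert spaces and the Gram matrix is already fixed, but it needs to be argued carefully rather than asserted.
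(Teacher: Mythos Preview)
Your proposal is essentially correct and follows the paper's argument: establish a global isometry $V$ from Gram-matrix uniqueness, factorize the rank-one images as $v_i'\propto a_{i_A}'\otimes b_{i_B}'$, match inner products of the local factors, and use B4-connectivity to glue. The main difference is one of framing. You set up the problem as propagating phase choices along the B4-graph and worry about holonomy around cycles. The paper instead defines, for each fixed $i_B\in\mathcal{I}_B$, a \emph{full} isometry $V_{A,i_B}:\mathcal{H}_A\to\mathcal{H}_A'$ by matching $\langle a_{i_A},a_{i_A'}\rangle=\langle a_{i_A}',a_{i_A'}'\rangle$ on the spanning set $\{a_{i_A}\}_{i_A\in\mathcal{I}_{A,i_B}}$, and then shows directly that $V_{A,i_B}=V_{A,i_B'}$ whenever $i_B\sim i_B'$ in the B4-graph (B4-2 supplies a common $A$-index that anchors both, and B4-1 together with the Gram equality extends the match to all of $\mathcal{I}_{A,i_B}\cup\mathcal{I}_{A,i_B'}$). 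Since equality of isometries is transitive, connectivity of the B4-graph yields a single $V_A$ with no path-dependence to check: the holonomy concern simply does not arise in this framing. Once $V_A$ is in hand, $V_B$ is built in a second pass using $V_A$ to strip the $A$-factor, and $V=V_A\otimes V_B$ on the product spanning set, hence everywhere by A1. So your anticipated obstacle is real in the phase-propagation language but evaporates if you work with whole local isometries at once; this is a worthwhile simplification to adopt.
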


\begin{proof} 
The proof has been deferred to Appendix \ref{ST_Proof}.
\end{proof}

Now we consider the general case.
In addition to A1 and A2, we assume the following condition.
\begin{description}
\item[A3] Ideal systems ${\cal H}_A$ and ${\cal H}_B$ are two-dimensional.
\item[A4] Each system has only two measurements.
That is, the set $\bar{\cal I}_A$ ($\bar{\cal I}_B$) of all indexes of the space ${\cal H}_A$ (${\cal H}_B$) is composed of
$4$ elements.
For any element $i_A \in \bar{\cal I}_A$ ($i_B \in \bar{\cal I}_B$), there exists 
an element $i_A' \in \bar{\cal I}_A$ ($i_B' \in \bar{\cal I}_B$) such that 
$\langle a_{i_A}| a_{i_A'}\rangle=0$
($\langle b_{i_B}| b_{i_B'}\rangle=0$).
\end{description}

When A3 and A4 hold,
$\bar{\cal I}_A$ ($\bar{\cal I}_B$) is written as
${\cal B}_{A,0}\cup {\cal B}_{A,1}$ (${\cal B}_{B,0}\cup {\cal B}_{B,1}$), where
${\cal B}_{A,j}=\{ (0,j),(1,j)\}$ (${\cal B}_{B,j}=\{ (0,j),(1,j)\}$) 
and $\langle a_{(0,j)}| a_{(1,j)}\rangle=0$ ($\langle b_{(0,j)}| b_{(1,j)}\rangle=0$) 
for $j=0,1$.

We also consider the following condition for $\Pi_i= \Pi_{i_A}^A\otimes \Pi_{i_B}^B$.
\begin{description}
\item[C1]
When $i_A,i_A' \in \bar{\cal I}_A$ ($i_B,i_B' \in \bar{\cal I}_B$) satisfy
$\langle a_{i_A}| a_{i_A'}\rangle=0$ ($\langle b_{i_B}| b_{i_B'}\rangle=0$), 
we have $\Pi_{i_A}^A+\Pi_{i_A'}^A=I$ ($\Pi_{i_B}^B+\Pi_{i_B'}^B=I$). 
\end{description}

Let ${\cal H}_{i_A}^A$ and ${\cal H}_{i_B}^B$ be the image of the projections
$\Pi_{i_A}^A$ and $\Pi_{i_B}^B$.

\begin{theorem}\Label{TH8MT}
Assume that 
the optimal maximizer given in \eqref{NA1MT} satisfies conditions A1, A2, A3, and A4,
the vectors $(\Pi_i |\psi'\rangle)_{i \in {\cal I}}$ realize the optimal solution in the SDP \eqref{thetaSDP1},
and condition C1 holds.
Then, there exist isometries 
$V_A$ from $ {\cal H}_A\otimes  {\cal K}_A$ to $ {\cal H}_A'$
and 
$V_B$ from $ {\cal H}_B\otimes  {\cal K}_B$ to $ {\cal H}_B'$
such that
\begin{align}
V_A \otimes V_B |\psi\rangle \otimes |junk\rangle &=
|\psi'\rangle , \Label{ML1}\\
V_A \otimes V_B |v_i\rangle \otimes |junk\rangle &=|v_i'\rangle , \Label{ML2}
\end{align}
for $i \in {\cal I}$,
where
$|junk\rangle$ is a state on ${\cal K}_A\otimes  {\cal K}_B$.
\hfill $\square$\end{theorem}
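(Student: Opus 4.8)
The plan is to reduce the statement, one Jordan block at a time, to the rank-one Theorem~\ref{TH7MT}. The starting point is the rigidity coming from the uniqueness of the SDP optimizer. Since $(\Pi_i|\psi'\rangle)_{i\in\mathcal{I}}$ realizes the optimal solution of~\eqref{thetaSDP1}, the matrix $\big(\langle\psi'|\Pi_j\Pi_i|\psi'\rangle\big)_{i,j\in\mathcal{I}_0}$ is feasible and optimal for~\eqref{thetaSDP1}, hence by assumption~\ref{ass2} it equals the unique optimizer $X^{*}=(\eta_i\eta_j\langle v_j,v_i\rangle)$; equivalently, with $|v_i'\rangle:=\eta_i^{-1}\Pi_i|\psi'\rangle$ and $|v_0'\rangle:=|\psi'\rangle$, one has $\langle v_i'|v_j'\rangle=\langle v_i|v_j\rangle$ for all $i,j\in\mathcal{I}_0$.

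Next I would use conditions~A4 and~C1 to put each device into qubit-block form. By~A4 we may write $\bar{\mathcal{I}}_A=\mathcal{B}_{A,0}\cup\mathcal{B}_{A,1}$ with $\langle a_{(0,j)}|a_{(1,j)}\rangle=0$, and by~C1 the measured projections satisfy $\Pi^A_{(0,j)}+\Pi^A_{(1,j)}=I$ on $\mathcal{H}_A'$, so $A_j:=\Pi^A_{(0,j)}-\Pi^A_{(1,j)}$ ($j=0,1$) are two $\pm1$ observables; Jordan's lemma gives $\mathcal{H}_A'=\bigoplus_\alpha\mathcal{H}'_{A,\alpha}$ with each summand $\{A_0,A_1\}$-invariant of dimension at most two, and likewise $\mathcal{H}_B'=\bigoplus_\beta\mathcal{H}'_{B,\beta}$. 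The Bell operator $\mathcal{B}'=\sum_i w_i\Pi_i$ is block-diagonal for $\{\mathcal{H}'_{A,\alpha}\otimes\mathcal{H}'_{B,\beta}\}$; each restriction is an honest quantum realization on a space of dimension at most four, so its top eigenvalue is at most $\vartheta(\gexcl,w)=S_{\mathcal{Q}}$. Decomposing $|\psi'\rangle=\sum_{\alpha,\beta}|\chi_{\alpha\beta}\rangle$ accordingly, the identity $\langle\psi'|\mathcal{B}'|\psi'\rangle=S_{\mathcal{Q}}$ forces every block with $|\chi_{\alpha\beta}\rangle\neq0$ to attain $S_{\mathcal{Q}}$, to be two-dimensional on both sides, and to carry $|\chi_{\alpha\beta}\rangle$ as a top eigenvector; moreover such a block can attain $S_{\mathcal{Q}}$ only by being a faithful copy of the ideal qubit realization, with a one-dimensional top eigenspace.

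For each surviving block $(\alpha,\beta)$ I would then apply Theorem~\ref{TH7MT} to the restricted realization: it uses rank-one qubit projections, lives in $(\dim\mathcal{H}_A)(\dim\mathcal{H}_B)$ dimensions, attains $\vartheta$, realizes the optimal SDP solution (again by~\ref{ass2}), and the ideal side still satisfies~A1 and~A2. This yields unitaries $V_A^{(\alpha\beta)}:\mathcal{H}_A\to\mathcal{H}'_{A,\alpha}$ and $V_B^{(\alpha\beta)}:\mathcal{H}_B\to\mathcal{H}'_{B,\beta}$ with $V_A^{(\alpha\beta)}\otimes V_B^{(\alpha\beta)}|\psi\rangle$ proportional to $|\chi_{\alpha\beta}\rangle$ and $V_A^{(\alpha\beta)}\otimes V_B^{(\alpha\beta)}|v_i\rangle$ proportional to $\Pi_i|\chi_{\alpha\beta}\rangle$. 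Since the two ideal $A$-measurements are distinct (otherwise the inequality would admit no quantum advantage), the relevant ideal projections generate $M_2(\mathbb{C})$, and a short computation using the product form~\eqref{NA1MT} of the $v_i$ shows that $V_A^{(\alpha\beta)}$ is independent of $\beta$ up to an overall phase, $V_A^{(\alpha\beta)}=e^{i\theta_{\alpha\beta}}V_A^{(\alpha)}$, and symmetrically $V_B^{(\alpha\beta)}=e^{i\phi_{\alpha\beta}}V_B^{(\beta)}$. Letting $\mathcal{K}_A$ ($\mathcal{K}_B$) index the surviving $A$-blocks ($B$-blocks), setting $V_A:=\bigoplus_\alpha V_A^{(\alpha)}$, $V_B:=\bigoplus_\beta V_B^{(\beta)}$ and $|junk\rangle:=\sum_{\alpha,\beta}c_{\alpha\beta}\,e^{i(\theta_{\alpha\beta}+\phi_{\alpha\beta})}|\alpha\rangle\otimes|\beta\rangle$ (where $c_{\alpha\beta}=\||\chi_{\alpha\beta}\rangle\|$), summing the blockwise identities gives $V_A\otimes V_B(|\psi\rangle\otimes|junk\rangle)=|\psi'\rangle$ and $V_A\otimes V_B(|v_i\rangle\otimes|junk\rangle)=|v_i'\rangle$, which are~\eqref{ML1} and~\eqref{ML2}.

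The hard part is the localization step: showing that $|\psi'\rangle$ carries no weight on blocks that are not two-dimensional on both sides, or that are two-dimensional but realize a different local geometry than the ideal one. This needs two inputs beyond the SDP machinery---that any such block yields a Bell value strictly below $S_{\mathcal{Q}}$ (a one-dimensional block amounts to a deterministic local assignment, which cannot beat the value achievable with that party classical, while a ``wrong-geometry'' block is ruled out by assumptions~\ref{ass1} and~\ref{ass2}), and that every ``good'' block has a non-degenerate top eigenvalue so that $|\chi_{\alpha\beta}\rangle$ is pinned. Once the support is localized, verifying that A1 and A2 persist on each block, invoking Theorem~\ref{TH7MT}, and the phase bookkeeping that glues the block unitaries into a genuine product isometry with a single junk register are all routine.
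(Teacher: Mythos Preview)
Your approach is essentially the paper's: Jordan-decompose each party's space using the two binary observables furnished by~A4 and~C1, show $|\psi'\rangle$ has no support on degenerate blocks, apply the rank-one Theorem~\ref{TH7MT} blockwise, argue the resulting local isometries are independent of the other party's block index up to a scalar, and glue them into $V_A,V_B$ with a single junk register. The paper implements the Jordan step explicitly via the spectral decomposition of $\Pi^A_{(0,0)}\Pi^A_{(0,1)}\Pi^A_{(0,0)}$, building rank-two commuting projections $\bar\Pi_{j_A}^A,\bar\Pi_{j_B}^B$.

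The one substantive difference is in the localization step you flag as hard. You argue through block-diagonality of the Bell operator, invoking strict sub-optimality of degenerate or wrong-geometry blocks and non-degeneracy of the top eigenspace. The paper avoids all of this with two short lemmas that use only the SDP uniqueness hypothesis: (i)~if a projection $\Pi$ commutes with every $\Pi_i$ and $\Pi|\psi'\rangle\neq 0$, then the normalized $\Pi|\psi'\rangle$ again realizes the unique optimizer $X^*$ (a one-line convexity argument); (ii)~contrapositively, if such a $\Pi$ satisfies $\Pi\Pi_j=0$ for some $j\in\mathcal{I}$, then $\Pi|\psi'\rangle=0$, since $X^*_{jj}=\eta_j^2>0$. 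Each degenerate Jordan block (eigenvalue $0$ or $1$ of $\Pi^A_{(0,0)}\Pi^A_{(0,1)}\Pi^A_{(0,0)}$) annihilates some $\Pi^A_{i_A}$, so~(ii) kills it---no comparison with classical or semi-classical Bell values is needed. Likewise, on every surviving block the state and the local geometry are pinned automatically: by~(i) the block realizes $X^*$, so Theorem~\ref{TH7MT} applies directly and forces the correct angles and the correct block state. This is cleaner than arguing separately about the Bell operator's eigenstructure.
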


\begin{proof} 
The proof has been deferred to Appendix \ref{ST_Proof}.
\end{proof}

\subsubsection{Tripartite case}
We assume that the unique optimal maximizer 
$X^*=(X_{ij})$ is given by $\eta_i \eta_j\langle v_j, v_i\rangle$ with the following;
For $i=(i_A,i_B,i_C)\in {\cal I}$, 
\begin{align}
 v_{i} = a_{i_A} \otimes b_{i_B}\otimes c_{i_C}, 
\label{MA8MT}
\end{align}
where $a_{i_A} \in {\cal H}_A=\mathbb{C}^{d_A}$, 
$b_{i_B} \in {\cal H}_B=\mathbb{C}^{d_B}$,
$c_{i_C} \in {\cal H}_C=\mathbb{C}^{d_C}$.
Also, for simplicity, $a_{i_A} $, $ b_{i_B}$, and $ c_{i_C}$ 
are assumed to be normalized and $\eta_i>0$. Now, we consider a state $|\psi'\rangle $ on 
${\cal H}_A'\otimes {\cal H}_B'\otimes {\cal H}_C'$,
and 
projections $\Pi_{i_A}^A$, $\Pi_{i_B}^B$, $\Pi_{i_C}^C$ 
on ${\cal H}_A'$, ${\cal H}_B'$, and ${\cal H}_C'$.
Then, we define the projection 
$\Pi_i:=\Pi_{i_A}^A\otimes \Pi_{i_B}^B\otimes \Pi_{i_C}^B$.

In the following, 
we discuss how 
the state $ |\psi'\rangle$ is locally converted to 
$ |\psi\rangle$ when
the vectors $\Pi_i |\psi'\rangle$ realize the optimal solution in the SDP \eqref{thetaSDP1}.
We define $|v_i'\rangle:= \eta_i^{-1} \Pi_{i}|\psi'\rangle$.

We consider the case that the ranks of the projections 
$\Pi_{i_A}^A$, $\Pi_{i_B}^B$ and $\Pi_{i_C}^C$ 
are one.
We introduce the following conditions.

\if0
Let $\{v_i\}_{i\in {\cal I}_0} \in \mathbb{C}^{d_Ad_B}$  and 
$\{v_i'\}_{i\in {\cal I}_0} \in \mathbb{C}^{d_Cd_4}$ be two distinct set of gram vectors of the unique maximizer $X^*$, of the following restricted form (we know they do exist) :
For $i=(i_A,i_B,i_C)\in {\cal I}$, 
\[ v_{i} = a_{i_A} \otimes b_{i_B}\otimes c_{i_C}, 
\]
\[ v_{i}' = a_{i_A}' \otimes b_{i_B}'\otimes c_{i_C}', 
\] where $a_{i_A} \in \mathbb{C}^{d_A}$, $b_{i_B} \in \mathbb{C}^{d_B}$, 
$c_{i_C} \in \mathbb{C}^{d_C}$, 
$a_{i_A}' \in \mathbb{C}^{d_4}$, $b_{i_B}' \in \mathbb{C}^{d_5}$.  
 $c_{i_C}' \in \mathbb{C}^{d_6}$.  
Also, we assume that the vectors $\{v_i\}_{i\in {\cal I}_0}$ 
span the vector space $\mathbb{C}^{d_Ad_Bd_C}$.
\fi
\begin{definition}\Label{D1MT}
Three distinct elements $i,j,k\in {\cal I}$ are called {\it linked}
when the following two conditions holds.
\begin{description}
\item[C1]
The relations 
$\langle v_i,v_{k}\rangle \neq 0$,
$\langle v_i,v_{j}\rangle \neq 0$, and
$\langle v_j,v_{k}\rangle \neq 0$
hold.
\item[C2]
$v_i,v_{j}$ shares a $t_{i,j}-$th common element for $t_{i,j} \in \{A,B,C\}$.
Other components of $v_i,v_{j}$ are different.
That is, when $t_{i,j}=A$, $i_A=j_A$,$i_B\neq j_B$,and $i_C\neq j_C$.
$v_i$ and $v_{k}$ share a $t_{i,k}-$th common element for 
$t_{i,k} \in \{A,B,C\}\setminus \{t_{i,j}\}$.
$v_j,v_{k}$ shares a $t_{j,k}-$th common element for 
$t_{j,k} \in \{A,B,C\}\setminus \{t_{i,j},t_{i,k}\}$.
In this case, there exist elements 
$x_A,x_A',x_B,x_B',x_C,x_C'$ such that
$i,j,k \in \{x_A,x_A'\}\times \{x_B,x_B'\} \times \{x_C,x_C'\}$.
\end{description}
In addition, 
two distinct elements $x_A,x_A'$ for index of a vectors of $\mathbb{C}^{d_A}$
are called {\it connected}
when there exist three linked elements $i,j,k\in {\cal I}$ 
such that the first components of $i,j,k\in {\cal I}$ are $x_A,x_A'$.
\hfill $\square$\end{definition}

For $i_B,i_C$, we use notation
\begin{align}
\psi_{(i_B,i_C)}:= b_{i_B} \otimes c_{i_C}.
\end{align}
Then, we introduce the following conditions for the optimal maximizer given in \eqref{MA8MT}.

\begin{description}
\item[A5]
The vectors $\{v_i\}_{i\in {\cal I}_0}$ 
span the vector space ${\cal H}_A\otimes {\cal H}_B\otimes {\cal H}_C$.

\item[A6]
There exist
a subset ${\cal I}_A$ of indecies of the space 
${\cal H}_A$ with $|{\cal I}_A|=d_A$
and 
$d_A$ sets ${\cal I}_{BC,i_A}$ for $i_A \in {\cal I}_A$ of indecies of the space 
${\cal H}_B\otimes {\cal H}_C$ 
to satisfy the following conditions.
The set $\{a_{i_A}\}_{i_A\in {\cal I}_A}$
spans the space ${\cal H}_A$.
The set $\{\psi_{i_{BC}}\}_{i_{BC}\in {\cal I}_{BC,i_A}}$
spans the space ${\cal H}_B\otimes {\cal H}_C$ and 
${\cal I}_0= \cup_{i_A \in {\cal I}_A} (\{i_A\} \times {\cal I}_{BC,i_A})$.
We consider the graph $G_A$ with the set ${\cal I}_A$ of vertecies 
such that 
the edges are given as the the pair of 
all connected elements in ${\cal I}_A$ in the sense of the end of Definition \ref{D1MT}.
The graph $G_A$ is not divided into two disconnected parts.

\item[A7]
The vectors
$\{b_{i_B}\otimes c_{i_C}\}_{
(i_B,i_C)\in \cup_{i_A \in {\cal I}_A}{\cal I}_{BC,i_A}}$ 
satisfy condition A2 by substituting $c_{i_C}$ into $a_{i_A}$.
That is, there exist a subset ${\cal I}_B$ of the second indecies and
subsets ${\cal I}_{C,i_B}$ of the third indecies such that
they satisfy conditions B1, B2, B3, and B4.
We denote the graph defined in this condition by $G_B$

\end{description}

\begin{theorem}\Label{TH9MT}
Assume that the optimal maximizer given in \eqref{MA8MT} satisfies
conditions A5, A6, and A7, and
the vectors $(\Pi_i |\psi'\rangle)_{i \in {\cal I}}$ realize the optimal solution in the SDP \eqref{thetaSDP1}.
In addition, the ranks of the projections 
$\Pi_{i_A}^A$, $\Pi_{i_B}^B$, and $\Pi_{i_C}^C$ 
are assumed to be one.

Then, there exist isometries 
$V_A: {\cal H}_A \to {\cal H}_A'$,
$V_B: {\cal H}_B\to {\cal H}_B'$, 
and $V_C: {\cal H}_C\to {\cal H}_C'$ 
such that
\begin{align}
V_A \otimes V_B\otimes V_C|\psi \rangle=&
|\psi'\rangle , \\
V_A \otimes V_B\otimes V_C|v_i \rangle=&
|v_i'\rangle ,
\end{align}
for $i \in {\cal I}$.
\end{theorem}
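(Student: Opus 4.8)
The plan is to turn the uniqueness of the primal optimiser into a rigidity statement for the Gram vectors, and then to bootstrap the product structure of those vectors into the three local isometries, register by register.

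\textbf{Step 1: a global isometry.} Since the vectors $(\Pi_i|\psi'\rangle)_{i\in{\cal I}}$ realise the optimum of \eqref{thetaSDP1} and, by assumption \ref{ass2}, that optimum is the unique maximiser $X^*$, the rescaled vectors $v_i' = \eta_i^{-1}\Pi_i|\psi'\rangle$ (with $v_0'=|\psi'\rangle$) have exactly the same Gram matrix as $\{v_i\}_{i\in{\cal I}_0}$: indeed $\eta_i\eta_j\langle v_j',v_i'\rangle = \langle\psi'|\Pi_j\Pi_i|\psi'\rangle = X^*_{ij} = \eta_i\eta_j\langle v_j,v_i\rangle$. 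Because $\{v_i\}_{i\in{\cal I}_0}$ spans ${\cal H}_A\otimes{\cal H}_B\otimes{\cal H}_C$ (condition A5), the assignment $v_i\mapsto v_i'$ extends to an isometry $W$ on that space. Moreover, each $\Pi_i=\Pi^A_{i_A}\otimes\Pi^B_{i_B}\otimes\Pi^C_{i_C}$ is a product of rank-one projectors, so each $v_i'$ is itself a product vector $a'_{i_A}\otimes b'_{i_B}\otimes c'_{i_C}$ carrying the same tensor labels as $v_i$, with each factor a unit vector depending only on its own label (up to a phase). The remaining task is to show $W=V_A\otimes V_B\otimes V_C$.

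\textbf{Step 2: extracting $V_A$.} Focus on the $A$-register. For three linked indices $i,j,k\in{\cal I}$ sitting inside a combinatorial box $\{x_A,x_A'\}\times\{x_B,x_B'\}\times\{x_C,x_C'\}$, the equality of the $2\times2\times2$ block of Gram data for the product families $\{a_x\otimes b_y\otimes c_z\}$ and $\{a'_x\otimes b'_y\otimes c'_z\}$ forces, after absorbing the per-vector phases, $\langle a_{x_A},a_{x_A'}\rangle=\langle a'_{x_A},a'_{x_A'}\rangle$, together with the analogous $B$- and $C$-slot identities; in particular the ``mismatch phases'' are compatible within each box. Using that the graph $G_A$ on ${\cal I}_A$ (edges = connected pairs, in the sense at the end of Definition \ref{D1MT}) is connected, propagate these phase choices along a spanning tree of $G_A$ to fix the phases of all $a'_{i_A}$ simultaneously; then $a_{i_A}\mapsto a'_{i_A}$ preserves all pairwise inner products, and since $\{a_{i_A}\}_{i_A\in{\cal I}_A}$ spans ${\cal H}_A$, it extends to an isometry $V_A:{\cal H}_A\to{\cal H}'_A$.

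\textbf{Step 3: reduction to the bipartite case.} Divide $V_A$ out of $W$; it then remains to factorise the induced isometry on ${\cal H}_B\otimes{\cal H}_C$ sending $b_{i_B}\otimes c_{i_C}\mapsto b'_{i_B}\otimes c'_{i_C}$ on the relevant index set. But condition A7 was designed precisely so that the family $\{b_{i_B}\otimes c_{i_C}\}$ satisfies hypotheses A1 and A2 (via B1--B4) of the rank-one bipartite theorem, with $C$ in the role of $A$; applying Theorem \ref{TH7MT} yields isometries $V_B:{\cal H}_B\to{\cal H}'_B$ and $V_C:{\cal H}_C\to{\cal H}'_C$ with $V_Bb_{i_B}=b'_{i_B}$, $V_Cc_{i_C}=c'_{i_C}$. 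Composing, $V_A\otimes V_B\otimes V_C$ sends $v_i\mapsto v_i'$ for all $i\in{\cal I}_0$; the index $i=0$ gives $V_A\otimes V_B\otimes V_C|\psi\rangle=|\psi'\rangle$ and the remaining indices give $V_A\otimes V_B\otimes V_C|v_i\rangle=|v_i'\rangle$.

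\textbf{Main obstacle.} The delicate part is Step 2: converting the \emph{local, phase-ambiguous} matching of overlaps inside each $2\times2\times2$ box into a single \emph{globally consistent} phase convention for all the $a'_{i_A}$. This is exactly what the connectedness hypotheses (A6 for $G_A$, and $G_B$ inside A7) are for --- they provide a spanning tree along which phases can be fixed without contradiction, while the per-box Gram identities ensure the overlaps on non-tree pairs come out right automatically. One also has to check that the phase conventions adopted while extracting $V_A$ remain compatible with those used by Theorem \ref{TH7MT} in the $BC$-reduction; pinning down one reference index throughout and tracking its phase makes this bookkeeping routine, but it is the step that requires care.
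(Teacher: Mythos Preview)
Your outline matches the paper's strategy: Gram uniqueness gives a global isometry, the linked-triple calculation (the paper's Lemma \ref{L4}) together with a spanning-tree argument on $G_A$ peels off the $A$-register, and then one reduces to the bipartite situation governed by A7. The organisation differs in two respects worth noting. First, the paper works in the opposite order within Step~2: it first builds, for each fixed $i_A\in{\cal I}_A$, an isometry $V_{BC,i_A}$ on the $BC$-part, and then uses the spanning tree to show $V_{BC,i_A}=\alpha(i_A)V_{BC,i_{A,0}}$; this simultaneously yields the factorisation $W=V_A\otimes V_{BC}$ and the isometry property of $V_A$, whereas your ``construct $V_A$ first, then divide it out of $W$'' skips the verification that such a factorisation exists. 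Second, the paper does not invoke Theorem~\ref{TH7MT} as a black box for the $BC$-reduction: after Step~2 the primed vectors carry residual signs $\beta(i_B,i_C):=\alpha(i_A)$, so the Gram matrix of $\{b'_{i_B}\otimes c'_{i_C}\}$ matches that of $\{b_{i_B}\otimes c_{i_C}\}$ only up to those signs; the paper therefore re-runs the bipartite argument (its Steps~3--5) with the extra sign bookkeeping $\gamma(i_B)$, rather than appealing to the statement of Theorem~\ref{TH7MT} directly. None of this changes the mathematics---your obstacle paragraph correctly identifies the sign-consistency issue---but when you write it up you will find that the paper's ordering ($V_{BC}$ before $V_A$, then re-derive for $B,C$) makes the phase tracking cleaner.
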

\begin{proof}
The proof has been deferred to Appendix \ref{ST_Proof}.
 \end{proof}

Now we consider the general case.
We define 
$|v_i'\rangle:= \eta_i^{-1} \Pi_{i_A}^A\otimes \Pi_{i_B}^B
\otimes \Pi_{i_C}^C|\psi'\rangle$.

Let $\bar{\cal I}_A,\bar{\cal I}_B,\bar{\cal I}_C$ be the sets 
of indecies of the spaces 
${\cal H}_A,{\cal H}_B, {\cal H}_C$.

We introduce other conditions for 
the optimal maximizer given in \eqref{MA8MT} as a generalization of A3 and A4.

\begin{description}
\item[A8] Ideal systems ${\cal H}_A$, ${\cal H}_B$, and ${\cal H}_C$ are two-dimensional.
\item[A9] Each system has only two measurements.
That is, the sets $\bar{\cal I}_A$, $\bar{\cal I}_B$, and $\bar{\cal I}_C$
is composed of
4 elements.
For any element $i_A \in \bar{\cal I}_A$ ($i_B \in \bar{\cal I}_B$, $i_C \in \bar{\cal I}_C$), there exists 
an element $i_A' \in \bar{\cal I}_A$ ($i_B' \in \bar{\cal I}_B$, $i_C' \in \bar{\cal I}_C$) such that 
$\langle a_{i_A}| a_{i_A'}\rangle=0$
($\langle b_{i_B}| b_{i_B'}\rangle=0$, $\langle c_{i_C}| c_{i_C'}\rangle=0$).
\end{description}

When A3 and A4 hold,
$\bar{\cal I}_A$ ($\bar{\cal I}_B$, $\bar{\cal I}_C$) is written as
${\cal B}_{A,0}\cup {\cal B}_{A,1}$ (${\cal B}_{B,0}\cup {\cal B}_{B,1}$, 
${\cal B}_{C,0}\cup {\cal B}_{C,1}$), where
${\cal B}_{A,j}=\{ (0,j),(1,j)\}$ (${\cal B}_{B,j}=\{ (0,j),(1,j)\}$, ${\cal B}_{C,j}=\{ (0,j),(1,j)\}$) 
and $\langle a_{(0,j)}| a_{(1,j)}\rangle=0$ 
($\langle b_{(0,j)}| b_{(1,j)}\rangle=0$, $\langle c_{(0,j)}| c_{(1,j)}\rangle=0$) 
for $j=0,1$.

We also consider the following condition for $\Pi_i= \Pi_{i_A}^A\otimes \Pi_{i_B}^B\otimes \Pi_{i_C}^C$.
\begin{description}
\item[C1]
When $i_A,i_A' \in \bar{\cal I}_A$ ($i_B,i_B' \in \bar{\cal I}_B$, 
$i_C,i_C' \in \bar{\cal I}_C$) satisfy
$\langle a_{i_A}| a_{i_A'}\rangle=0$ 
($\langle b_{i_B}| b_{i_B'}\rangle=0$, $\langle c_{i_C}| c_{i_C'}\rangle=0$), 
we have $\Pi_{i_A}^A+\Pi_{i_A'}^A=I$ ($\Pi_{i_B}^B+\Pi_{i_B'}^B=I$, 
$\Pi_{i_C}^C+\Pi_{i_C'}^C=I$). 
\end{description}

Let ${\cal H}_{i_A}^A$, ${\cal H}_{i_B}^B$, and ${\cal H}_{i_C}^C$ be the image of the projections
$\Pi_{i_A}^A$, $\Pi_{i_B}^B$, and $\Pi_{i_C}^C$.

\begin{theorem}\Label{TH10MT}
Assume that the optimal maximizer given in \eqref{MA8MT} satisfies
conditions A5, A6, A5, A7, A8, and A9, and 
the vectors $(\Pi_i |\psi'\rangle)_{i \in {\cal I}}$ realize the optimal solution in the SDP \eqref{thetaSDP1}.
Then, there exist isometries 
$V_A$ from $ {\cal H}_A\otimes  {\cal K}_A$ to $ {\cal H}_A'$,
$V_B$ from $ {\cal H}_B\otimes  {\cal K}_B$ to $ {\cal H}_B'$,
and 
$V_C$ from $ {\cal H}_C\otimes  {\cal K}_C$ to $ {\cal H}_C'$
such that
\begin{align}
V_A \otimes V_B\otimes V_C |\psi\rangle \otimes |junk\rangle &=
|\psi'\rangle,\Label{ML5}\\
V_A \otimes V_B \otimes V_C|v_i\rangle \otimes |junk\rangle &=|v_i'\rangle , \Label{ML6}
\end{align}
for $i \in {\cal I}$,
where
$|junk\rangle$ is a state on ${\cal K}_A\otimes  {\cal K}_B\otimes  {\cal K}_C$.

\hfill $\square$\end{theorem}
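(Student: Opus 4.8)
The plan is to deduce Theorem~\ref{TH10MT} from the rank-one tripartite result, Theorem~\ref{TH9MT}, by first using the complementarity condition C1 (orthogonal-outcome projections summing to the identity) together with A8 and A9 to split each local space into Jordan blocks on which the measurements act by rank-one projections, then applying Theorem~\ref{TH9MT} block by block, and finally patching the per-block local isometries into a single triple $(V_A, V_B, V_C)$ whose extra degrees of freedom are carried by the block labels $\mathcal{K}_A, \mathcal{K}_B, \mathcal{K}_C$.

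First I would record the global consequence of uniqueness. With $|v_i'\rangle := \eta_i^{-1}\Pi_i|\psi'\rangle$, the hypothesis that $(\Pi_i|\psi'\rangle)_{i\in\mathcal{I}}$ realizes the optimal solution of~\eqref{thetaSDP1} gives $\eta_i\eta_j\langle v_j'|v_i'\rangle = X^*_{ij}$ for all $i, j \in \mathcal{I}_0$, so $\{|v_i'\rangle\}$ and $\{|v_i\rangle\}$ have the same Gram matrix; since A5 makes $\{|v_i\rangle\}_{i\in\mathcal{I}_0}$ span $\mathcal{H}_A\otimes\mathcal{H}_B\otimes\mathcal{H}_C$, there is an isometry $V$ with $V|v_i\rangle = |v_i'\rangle$ for all $i\in\mathcal{I}_0$, and the index $i=0$ (where $\Pi_0 = I$) yields $V|\psi\rangle = |\psi'\rangle$. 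It then remains only to show that $V$ may be taken of the claimed product form.

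Next I would run a Jordan-block reduction. For party $A$ the two measurements define observables $O^A_j = \Pi^A_{(0,j)} - \Pi^A_{(1,j)}$, $j=0,1$; by Jordan's lemma $\mathcal{H}_A'$ is an orthogonal sum of subspaces $\mathcal{H}_A^{(k_A)}$ of dimension at most two, each invariant under $O^A_0$ and $O^A_1$ (one-dimensional blocks may be padded to dimension two harmlessly), giving $\mathcal{H}_A' \cong \mathcal{H}_A\otimes\mathcal{K}_A$ with every $\Pi^A_{i_A}$ block-diagonal, and similarly for $B$ and $C$. Writing $|\psi'\rangle = \sum_k \sqrt{p_k}\,|\phi_k\rangle\otimes|k\rangle$ over joint block labels $k=(k_A,k_B,k_C)$, each sliced matrix $X^{(k)}_{ij} := \langle\phi_k|\Pi_j\Pi_i|\phi_k\rangle$ is feasible for~\eqref{thetaSDP1} and $\sum_k p_k X^{(k)} = X^*$; linearity of the objective and uniqueness of $X^*$ then force $X^{(k)} = X^*$ for every $k$ with $p_k>0$. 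Since $X^*_{ii} = \eta_i^2 > 0$ for all $i\in\mathcal{I}$, this gives $\Pi_i|\phi_k\rangle \neq 0$, hence by C1 no per-block projection equals $0$ or $I$; and since $\{\Pi_i|\phi_k\rangle\}_{i\in\mathcal{I}_0}$ spans an eight-dimensional space (same Gram matrix as the spanning family $\{|v_i\rangle\}$, using A8), each positive-weight block is two-dimensional on every site and the restrictions $\tilde\Pi^{X,k_X}_{i_X}$ are genuine rank-one product projections. Thus Theorem~\ref{TH9MT} applies verbatim to the realization $(|\phi_k\rangle, \{\tilde\Pi^{A,k_A}_{i_A}\otimes\tilde\Pi^{B,k_B}_{i_B}\otimes\tilde\Pi^{C,k_C}_{i_C}\})$ — conditions A5, A6, A7 involve only the unchanged reference data — producing, for each such $k$, local isometries whose tensor product sends $|\psi\rangle$ to $|\phi_k\rangle$ and $|v_i\rangle$ to $\eta_i^{-1}\Pi_i|\phi_k\rangle$.

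Finally I would assemble these. Inspecting the proof of Theorem~\ref{TH9MT}, the site-$A$ isometry is constructed solely from the reference vectors $\{a_{i_A}\}$ and the range vectors of $\{\tilde\Pi^{A,k_A}_{i_A}\}$ — which depend on $k$ only through $k_A$ — with relative phases fixed along a spanning tree of the connectivity graph $G_A$ of A6 (and likewise $G_B$ from A7, and the third site); hence the per-block isometries can be chosen as $e^{i\phi_X(k)}V_X^{k_X}$ for fixed $V_X^{k_X}$ depending only on the site-$X$ block label. Setting $V_X := \sum_{k_X} V_X^{k_X}\otimes|k_X\rangle\langle k_X|$ and $|\mathrm{junk}\rangle := \sum_k \sqrt{p_k}\,e^{i\theta_k}|k\rangle$ with $\theta_k = \phi_A(k)+\phi_B(k)+\phi_C(k)$ — a unit vector since $\sum_k p_k = 1$ — then yields~\eqref{ML5} and~\eqref{ML6} by the computation of the previous step. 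I expect the genuine obstacle to be precisely this last patching step: one must reopen the proof of Theorem~\ref{TH9MT} and verify that each local isometry is pinned down, up to a single overall phase that the ancilla absorbs, by that party's data alone, so that the choices are globally consistent across all blocks; this is exactly where the connectivity hypotheses A6 and A7 do their work, whereas the Jordan-block bookkeeping of the previous paragraph (degenerate or one-dimensional blocks, feasibility of the sliced points) is routine but must be handled with care.
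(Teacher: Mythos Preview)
Your proposal is correct and mirrors the paper's proof: a Jordan-type block decomposition of each local space (the paper builds these blocks by hand from eigenvectors of $\Pi^A_{(0,0)}\Pi^A_{(0,1)}\Pi^A_{(0,0)}$ and disposes of the degenerate eigenspaces via Lemma~\ref{BA8}, which is exactly your ``positive-weight blocks are genuinely two-dimensional'' observation), then the rank-one tripartite theorem applied block by block (your convexity/uniqueness slicing argument is the paper's Lemma~\ref{BA7}), and finally a patching of the per-block local isometries into the global $V_A,V_B,V_C$ and the junk state. The one minor difference is in the patching step: rather than reopening the rank-one construction as you propose, the paper (Step~3 of the bipartite proof of Theorem~\ref{APS}, to which the tripartite proof defers) argues directly that the hybrid vectors built from mismatched block labels $(j_A,j_B')$ still realize the unique SDP optimum, which forces $V_{A,(j_A,j_B)}$ and $V_{A,(j_A,j_B')}$ to coincide up to a scalar absorbed into $\beta$ and hence into $|junk\rangle$.
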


\begin{proof}
The proof has been deferred to Appendix \ref{ST_Proof}.
\end{proof}

\section{Test Cases} \label{test}

In the following subsections we apply our techniques to the CHSH, chained, Mermin, and AS Bell inequalities.

\subsection{CHSH Self-Testing}

Self-testing is known to hold for the maximum quantum violation of the CHSH inequality \cite{Yao_self}. Here, we study
the CHSH inequality in the graph of exclusivity framework~\cite{CSW}.  

\medskip

Recall that the graph of exclusivity corresponding to the Bell witness given by Eq.~\eqref{CHSH}
is given by the $C_{i_8}(1,4)$ graph (see figure~\ref{circulantgraph}). We claim that the optimal solution to dual~\eqref{theta:dual} for  $C_{i_8}(1,4)$ is given by 
\be\label{dualCHSH}
Z_{\text{CHSH}}=\left(\begin{array}{c|cccccccc} 
2+\sqrt{2} & -1 & -1 & -1 & -1 & - 1&-1& - 1&-1\\
\hline
-1 & 1 & h& 0 & 0 & k & 0& 0 & h\\
-1 & h & 1 & h& 0 & 0& k & 0 & 0 \\
-1 & 0 & h & 1 & h& 0& 0 & k& 0\\
-1 & 0 & 0 & h& 1 & h& 0 & 0 & k\\
-1& k & 0 & 0& h & 1& h & 0 & 0\\
-1 & 0 & k & 0 & 0& h& 1 & h & 0\\
-1 & 0 & 0 & k& 0& 0& h & 1 & h\\
-1& h & 0 & 0& k & 0& 0 & h & 1
\end{array}\right), 
\ee
where $k = 3 -2\sqrt{2}$ and $h =2- \sqrt{2}$. Lov\'asz theta SDP has zero duality gap, that is, the primal optimal solution and optimal dual solution yield the same program value. It can be easily verified that \eqref{dualCHSH} is a feasible solution to~\eqref{theta:dual} for the graph $C_{i_8}(1,4)$. The dual solution \eqref{dualCHSH} achieves $2 + \sqrt{2} $ and is thus dual optimal. In order to show the uniqueness of the primal optimal, we show that $Z_{\text{CHSH}}$ is nondegenerate. That requires us to show that $M=0$ is the only symmetric $9\times 9$ matrix satisfying equations~\eqref{mzcon} and~\eqref{macon} corresponding to the Lov\'asz theta SDP. That is, the linear system    
\be 
M_{00}=0, \ M_{0i}=M_{ii}, \ M_{ij}=0\ (\forall \, i \sim j), \ MZ^*=0
\ee
has a unique solution $M=0$. Barring the $MZ^* =0$ constraint, the rest of the constraints already guarantee that several entries of $M$ must be zeros. Thus the $M$ matrix has the following form:
\be 
M=\left(\begin{array}{c|cccccccc} 
0& m_1 & m_2& m_3& m_4& m_5&m_6& m_7&m_8\\
\hline
m_1 & m_1 & 0& m _9& m_{15}& 0 & m_{20}& m_{23} & 0\\
m_2 & 0 & m_2 & 0&m_{10}  & m_{16}& 0 & m_{21} & m_{24} \\
m_3& m_{9} & 0 & m_3 & 0& m_{11}& m_{17}& 0& m_{22}\\
m_4 & m_{15} & m_{10} & 0& m_4 & 0& m_{12}& m_{18} & 0\\
m_5& 0 & m_{16} & m_{11}& 0 & m_5& 0 & x _{13}& m_{19}\\
m_6 & m_{20} & 0 & m_{17} & m_{12}& 0& m_6 & 0 & m_{14}\\
m_7 & m_{23} & m_{21} & 0& m_{18}& m_{13}& 0 & m_7 & 0\\
m_8& 0 & m_{24} & m_{22}& 0 & m_{19}& m_{14} & 0 & m_8
\end{array}\right).
\ee
It can be easily checked that the only solution to the system of linear equations $M{ }\cdot Z_{\text{CHSH}}=0$ is $M=0$. 

\medskip

The optimal solution for primal~\eqref{theta:primal} is given by
\be\label{primalCHSH}
P_{\text{CHSH}}=\left(\begin{array}{c|cccccccc} 
1 & \chi & \chi & \chi & \chi & \chi & \chi& \chi&\chi\\
\hline
\chi & \chi & 0& \frac{1}{2}\chi & \xi  & 0 & \xi&  \frac{1}{2}\chi & 0\\
\chi & 0& \chi & 0& \frac{1}{2}\chi & \xi & 0 & \xi &  \frac{1}{2}\chi\\
\chi & \frac{1}{2}\chi & 0 & \chi & 0& \frac{1}{2}\chi & \xi  & 0& \xi\\
\chi & \xi & \frac{1}{2}\chi & 0& \chi & 0& \frac{1}{2}\chi & \xi  & 0\\
\chi& 0 & \xi& \frac{1}{2}\chi & 0 & \chi& 0 & \frac{1}{2}\chi & \xi \\
\chi & \xi & 0 & \xi & \frac{1}{2}\chi & 0& \chi & 0 & \frac{1}{2}\chi\\
\chi & \frac{1}{2}\chi & \xi & 0& \xi& \frac{1}{2}\chi & 0 & \chi & 0\\
\chi& 0 & \frac{1}{2}\chi & \xi& 0 & \xi& \frac{1}{2}\chi & 0 & \chi
\end{array}\right), 
\ee
where $\chi= \frac{2+\sqrt{2}}{8}$ and $\xi = \frac{1+\sqrt{2}}{8}$. The configurations corresponding to the primal optimal matrix $P_{\text{CHSH}}$ correspond to different Gram decomposition of $P_{\text{CHSH}}$ and are related to each other via global isometry. A quantum realization is achieved with the two-qubit maximally entangled state $\ket{\psi} = (\frac{1}{\sqrt{2}},0,0,\frac{1}{\sqrt{2}})^T$
and the vectors corresponding to the $8$ projective measurements given by
\begin{equation}
\begin{aligned} 
\ket{v_1} &= \ket{A_{1,0}} \otimes \ket{B_{1,1}}, \\
\ket{v_2} &= \ket{A_{0,0}} \otimes \ket{B_{1,0}}, \\
\ket{v_3} &= \ket{A_{0,1}} \otimes \ket{B_{0,1}}, \\
\ket{v_4} &= \ket{A_{1,0}} \otimes \ket{B_{0,0}}, \\
\ket{v_5} &= \ket{A_{1,1}} \otimes \ket{B_{1,0}}, \\
\ket{v_6} &= \ket{A_{0,1}} \otimes \ket{B_{1,1}}, \\
\ket{v_7} &= \ket{A_{0,0}} \otimes \ket{B_{0,0}}, \\
\ket{v_8} &= \ket{A_{1,1}} \otimes \ket{B_{0,1}},
\end{aligned}    
\end{equation}
where the kets corresponding to the local measurements are given by
\begin{equation}
\begin{aligned}
\ket{A_{0,0}} &= \left(1,0\right)^T, \\
\ket{A_{0,1}} &= \left(0,-1\right)^T, \\
\ket{A_{1,0}} &= \left(a,a\right)^T, \\
\ket{A_{1,1}} &= \left(a,-a\right)^T, \\
\ket{B_{0,0}} &= \left(c,d\right)^T, \\
\ket{B_{0,1}} &= \left(d,-c\right)^T, \\
\ket{B_{1,0}} &= \left(c,-d\right)^T, \\
\ket{B_{1,1}} &= \left(-d,-c\right)^T,
\end{aligned} 
\end{equation}
with $a = \frac{1}{\sqrt{2}}$, $c = \cos\left(\frac{\pi}{8}\right)$, and $d = \sin\left(\frac{\pi}{8}\right)$. 

For the CHSH case, the vector $v_i$ corresponds to $\ket{v_i}.$ The dimension of the canonical realization is $4$ with $d_1=d_2=2$. CHSH inequality satisfies Conditions A1 and A2,
which can be checked by choosing the vectors in \eqref{XO1MT} as follows:
\begin{align}
&a_0= |A_{0,0}\rangle ,~
a_1=a_2= |A_{0,1}\rangle ,  \\
&b_0= |B_{0,0}\rangle ,~
b_1= |B_{1,0}\rangle .
\end{align} 

Moreover, the local measurements for the CHSH case satisfy condition A3, A4 and C1 as well.  Thus, the CHSH case satisfies all the conditions for Theorem \ref{TH8MT}, which implies there exist isometries 
$V_A$ from $ {\cal H}_A\otimes  {\cal K}_A$ to $ {\cal H}_A'$
and 
$V_B$ from $ {\cal H}_B\otimes  {\cal K}_B$ to $ {\cal H}_B'$
such that
\begin{align}
V_A \otimes V_B |\psi\rangle \otimes |junk\rangle &=
|\psi'\rangle , \Label{ML1CHSH}\\
V_A \otimes V_B |v_i\rangle \otimes |junk\rangle &=|v_i'\rangle , \Label{ML2CHSH}
\end{align}
for $i \in {\cal I}$,
where
$|junk\rangle$ is a state on ${\cal K}_A\otimes  {\cal K}_B$.

Therefore, any two tensored realizations attaining the maximum quantum violation of the CHSH inequality are related via  local isometries.

\subsection{Mermin Self-Testing}

Here, we examine the case of Mermin's Bell inequality for three parties \cite{Mermin90}. As detailed in Appendix \ref{Mermin_APP}, the Bell witness of this inequality includes $16$ events.
Their graph of exclusivity, denoted $G_M$, is shown in Fig.~\ref{Mermin_fig}.

\begin{figure}[t] \label{Mermin_fig}

\centering
\includegraphics[width=0.58\textwidth]{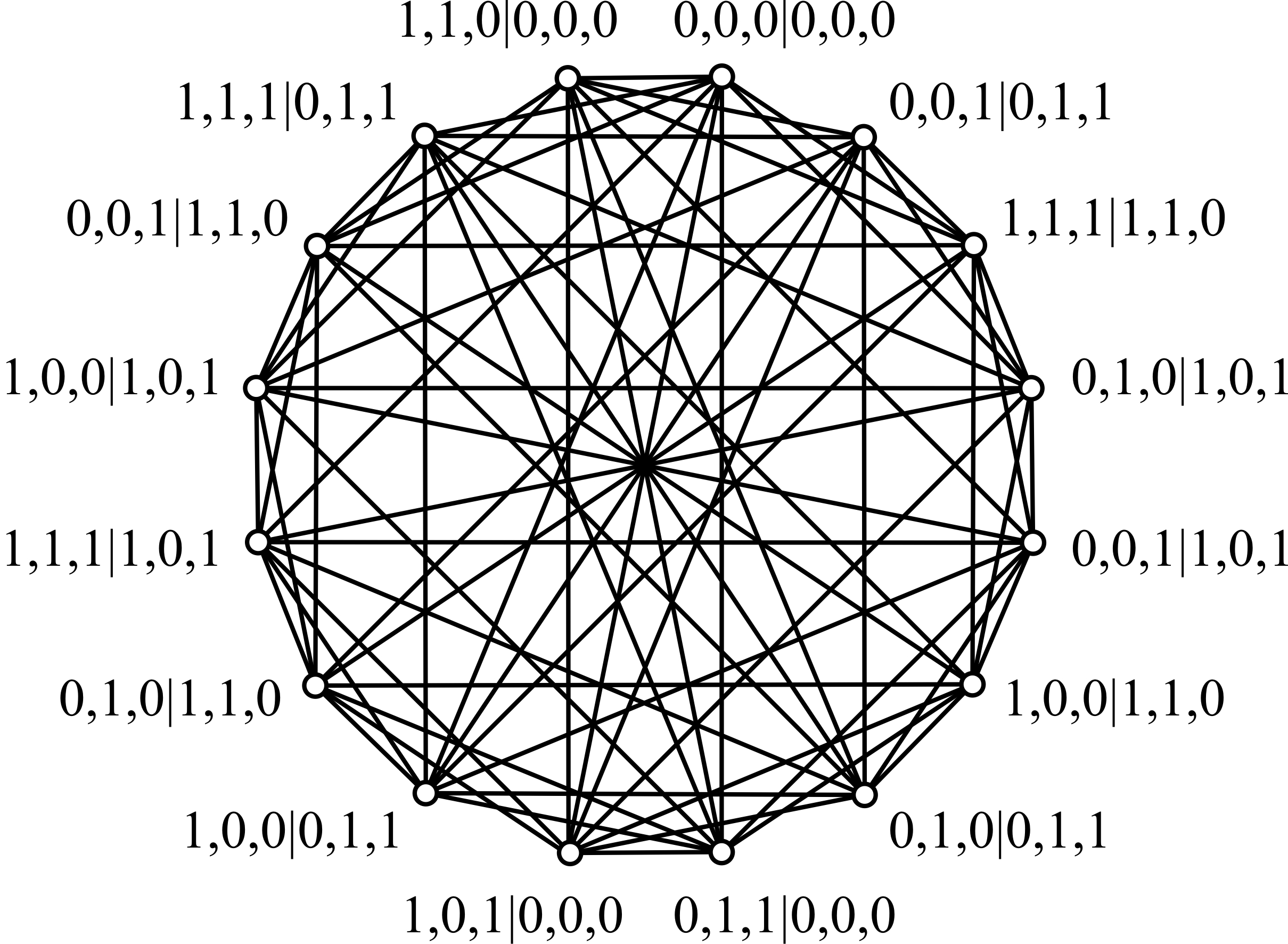}

\caption{Graph of exclusivity of the $16$ events in the Bell witness \eqref{S3} of the Mermin inequality. Here, $Z$ and $X$ are denoted $0$ and $1$, respectively, while $-1$ and $1$ are denoted $0$ and $1$, respectively. We will refer to this graph as $G_M$. It is the complement of Shrikhande graph~\cite{shrikhande1959uniqueness} }.

\centering
\end{figure}

The primal optimal for the SDP corresponding to the quantum violation of the Mermin inequality for three parties is given by 

\iffalse

\be
P_{\text{Mermin}}=\left(\begin{array}{c|cccccccccccccccc} 
1 & a & a & a & a & a & a & a & a & a & a & a & a & a & a & a & a\\
\hline
a & a & b & b & b & b & b & b & 0 & 0 & 0 & 0 & 0 & 0 & 0 & 0 & 0 \\
a & b & a & b & b & 0 & 0 & 0 & b & b & b & 0 & 0 & 0 & 0 & 0 & 0 \\
a & b & b & a & 0 & b & 0 & 0 & b & 0 & 0 & b & b & 0 & 0 & 0 & 0 \\
a & b & b & 0 & a & 0 & b & 0 & 0 & b & 0 & 0 & 0 & b & b & 0 & 0 \\
a & b & 0 & b & 0 & a & 0 & b & 0 & 0 & 0 & b & 0 & b & 0 & b & 0 \\
a & b & 0 & 0 & b & 0 & a & b & 0 & 0 & 0 & 0 & b & 0 & b & 0 & b \\
a & b & 0 & 0 & 0 & b & b & a & 0 & 0 & b & 0 & 0 & 0 & 0 & b & b \\
a & 0 & b & b & 0 & 0 & 0 & 0 & a & 0 & b & 0 & b & 0 & b & b & 0 \\
a & 0 & b & 0 & b & 0 & 0 & 0 & 0 & a & b & b & 0 & b & 0 & 0 & b \\
a & 0 & b & 0 & 0 & 0 & 0 & b & b & b & a & 0 & 0 & 0 & 0 & b & b \\
a & 0 & 0 & b & 0 & b & 0 & 0 & 0 & b & 0 & a & b & b & 0 & 0 & b \\
a & 0 & 0 & b & 0 & 0 & b & 0 & b & 0 & 0 & b & a & 0 & b & 0 & b \\
a & 0 & 0 & 0 & b & b & 0 & 0 & 0 & b & 0 & b & 0 & a & b & b & 0 \\
a & 0 & 0 & 0 & b & 0 & b & 0 & b & 0 & 0 & 0 & b & b & a & b & 0 \\
a & 0 & 0 & 0 & 0 & b & 0 & b & b & 0 & b & 0 & 0 & b & b & a & 0 \\
a & 0 & 0 & 0 & 0 & 0 & b & b & 0 & b & b & b & b & 0 & 0 & 0 & a
\end{array}\right), 
\ee

\fi

\begin{equation}
P_{\text{Mermin}}=\left[\begin{array}{cc}
1 & a \cdot e_{16}^{T}\\
a \cdot e_{16} & \quad a\cdot I_{16} + b \cdot E_{\overline{G_M}}
\end{array}\right]\in\mathbb{R}^{\left(17 \times 17\right)},\label{primalMermin}
\end{equation}
where $a = 0.25$, $b = 0.125$, $e_{16}$ is the all one column vector of size $16$, and $E_{\overline{G_M}}$ is the adjacency matrix of the complement of $G_M$, and $I_{16}$ is the identity matrix of size $16$. The proof of the uniqueness of the primal optimal $P_{\text{Mermin}}$ is trivially similar to the CHSH case. The quantum state and measurement settings can be obtained via Gram decomposition of $P_{\text{Mermin}}$. A quantum realization is achieved with the three-qubit GHZ state $\ket{u_0} = \frac{1}{\sqrt{2}}\left(\ket{000} + \ket{111}\right)$ 
and the projective measurements
\begin{equation}
\begin{aligned}
\ket{u_1} &= \ket{Z} \otimes \ket{P} \otimes \ket{P}, \\
\ket{u_2} &= \ket{O} \otimes \ket{M} \otimes \ket{P}, \\
\ket{u_3} &= \ket{O} \otimes \ket{P} \otimes \ket{M}, \\
\ket{u_4} &= \ket{Z} \otimes \ket{M} \otimes \ket{M}, \\
\ket{u_5} &= \ket{P} \otimes \ket{Z} \otimes \ket{P}, \\
\ket{u_6} &= \ket{M} \otimes \ket{O} \otimes \ket{P}, \\
\ket{u_7} &= \ket{M} \otimes \ket{Z} \otimes \ket{M}, \\
\ket{u_8} &= \ket{P} \otimes \ket{O} \otimes \ket{M}, \\
\ket{u_9} &= \ket{P} \otimes \ket{P} \otimes \ket{Z}, \\
\ket{u_{10}} &= \ket{M} \otimes \ket{M} \otimes \ket{Z}, \\
\ket{u_{11}} &= \ket{M} \otimes \ket{P} \otimes \ket{O}, \\
\ket{u_{12}} &= \ket{P} \otimes \ket{M} \otimes \ket{O}, \\
\ket{u_{13}} &= \ket{O} \otimes \ket{O} \otimes \ket{O}, \\
\ket{u_{14}} &= \ket{Z} \otimes \ket{Z} \otimes \ket{O}, \\
\ket{u_{15}} &= \ket{Z} \otimes \ket{O} \otimes \ket{Z}, \\
\ket{u_{16}} &= \ket{O} \otimes \ket{Z} \otimes \ket{Z}, 
\end{aligned}   
\end{equation}
where $\ket{Z} = \ket{0}$, $\ket{O} = \ket{1}$, $\ket{P} =\frac{1}{\sqrt{2}} \left(\ket{0} + \ket{1} \right)$, and $\ket{M} =\frac{1}{\sqrt{2}} \left(\ket{0} - \ket{1} \right)$. This quantum realization achieves the quantum bound of the Bell witness (given by  Eq.~\eqref{S3} in Appendix \ref{Mermin_APP}), i.e., $4$, which is equal to the Lov\'asz theta number of $G_M$ (Fig.~\ref{Mermin_fig}). The local bound is $3$ and is equal to the independence number of $G_M$. We can check that local measurement settings for the tripartite Mermin case satisfy Conditions A5, A6, and A7 as follows.
In this example, $a_O,b_O,c_O$ means $|O\rangle$.
This notation is applied to $Z,P,M$.

We choose the subset ${\cal I}_A:=\{ O,P \}$.
Then, we have
\begin{align}
{\cal I}_{BC,O} &=\{ (O,O),(Z,Z),(M,P),(P,M)\},\label{MI1MT}\\
{\cal I}_{BC,P} &=\{ (Z,P),(P,Z),(O,M),(M,O)\}.\label{MI2MT}
\end{align}
Two elements $O,P \in {\cal I}_A$ are connected in the sense of the end of Definition \ref{D1MT} by choosing 
$\{i,j,k\}=\{ (P,Z,P),(O,Z,Z), (O,M,P) \}$.
Based on \eqref{MI1MT} and \eqref{MI2MT}, 
we choose the subsets ${\cal I}_B$, ${\cal I}_{C,Z}$, and ${\cal I}_{C,P}$ as
\begin{align}
{\cal I}_B:= \{Z,P\},~ {\cal I}_{C,Z}:=\{Z,P\},~
{\cal I}_{C,P}:=\{Z,M\}.
\end{align}
The subsets ${\cal I}_B$, ${\cal I}_{C,Z}$, and ${\cal I}_{C,P}$
satisfy conditions B1, B2, B3, and B4. The Mermin case satisfies Conditions A8 and A9 in addition to Conditions A5, A6, and A7. Thus, given the vectors $(\Pi_i |\psi'\rangle)_{i \in {\cal I}}$which  realize the optimal solution in the SDP \eqref{thetaSDP1}, there exist isometries 
$V_A$ from $ {\cal H}_A\otimes  {\cal K}_A$ to $ {\cal H}_A'$,
$V_B$ from $ {\cal H}_B\otimes  {\cal K}_B$ to $ {\cal H}_B'$,
and 
$V_C$ from $ {\cal H}_C\otimes  {\cal K}_C$ to $ {\cal H}_C'$
such that
\begin{align}
V_A \otimes V_B\otimes V_C |\psi\rangle \otimes |junk\rangle &=
|\psi'\rangle,\Label{ML5}\\
V_A \otimes V_B \otimes V_C|v_i\rangle \otimes |junk\rangle &=|v_i'\rangle , \Label{ML6}
\end{align}
for $i \in {\cal I}$,
where
$|junk\rangle$ is a state on ${\cal K}_A\otimes  {\cal K}_B\otimes  {\cal K}_C$.

Since $P_{\text{Mermin}}$ is a Gram matrix of vectors $\ket{u_0}, \ket{u_1},\ldots, \ket{u_{16}}$, rank of $P_{\text{Mermin}}$ is equal to the dimension of the span of $\ket{u_0}, \ket{u_1},\ldots, \ket{u_{16}}$. As it turns out, the rank of $P_{\text{Mermin}}$ is seven. Thus, a seven-dimensional configuration can achieve the maximal violation of the Mermin inequality. We append a seven-dimensional configuration corresponding to $P_{\text{Mermin}}$ in Appendix \ref{seven_m}. In the tripartite Bell scenario, one obtains maximal violation of the Mermin inequality using three qubits and thus dimension eight. This is purely because the seven dimensional state can not be realized as tensor product of three two-dimensional subsystems. Moreover, as one can expect, the dimension of the span of the measurement settings in the tensored case, i.e., dim(span$\left(\ket{u_0}, \ket{u_1},\ldots, \ket{u_{16}}\right)$) is still seven!

The graph $G_M$ is the complement of Shrikhande graph~\cite{shrikhande1959uniqueness}.

Since Shrikhande Graph is vertex transitive, it implies that $G_M$ is also vertex transitive. We observe that there is a unique behaviour in $\text{\rm QSTAB}\left(G_M\right)$ which achieves $\alpha^{\star}\left(G_M\right)$. Moreover, by vertex transitivity in the theta body, we also  observe that there is a unique behavior which achieves $\vartheta\left(G_M\right)$.

\subsection{Self-Testing Chained Bell Inequalities}\label{chained}

The chained Bell inequalities \cite{pearle70hidden,braunstein90wringling} are defined for the bipartite Bell scenario
with $N$ dichotomic measurements per party. In terms of correlations
between the observables of Alice and Bob, the chained Bell inequality
for $N$ settings is given by
\begin{equation}
I_{N}^{Bell}=\langle A_{1}B_{2}\rangle+\langle A_{3}B_{2}\rangle+\langle A_{3}B_{4}\rangle+\langle A_{5}B_{4}\rangle+\cdots+\langle A_{2N-1}B_{2N}\rangle-\langle A_{1}B_{2N}\rangle\leq_{LHV}2N-2.\label{eq:Chained_Bell_Ineq_1}
\end{equation}
Here, ``LHV'' indicates that the local hidden variable bound is
$2N-2.$ The observables $A_{i}$ and $B_{j}$, measured by Alice
and Bob, respectively, have outcomes $1$ or $-1.$
The correlation terms $\langle A_{i}B_{j}\rangle$ denote the expectation
value of the product of outcomes for $A_{i}$ and $B_{j}.$ The maximum
quantum value of $I_{N}^{Bell}$ is $2N\cos\left(\frac{\pi}{2N}\right)$. 

Suppose Alice measures $A_{x}$ on her particle and obtains $a$.
Similarly, assume Bob measures $B_{y}$ on his particle and obtains
$b.$ The probability for the aforementioned event is denoted by $P\left(a,b\vert x,y\right).$
We can use these probabilities to re-express the correlations as follows:
\begin{align}
\langle A_{i}B_{j}\rangle &=2P\left(1,1\vert i,j\right)+2P\left(-1,-1\vert i,j\right)-1,\label{eq:Prob1} \\
-\langle A_{i}B_{j}\rangle &=2P\left(1,-1\vert i,j\right)+2P\left(-1,1\vert i,j\right)-1.\label{eq:Prob2}
\end{align}
Using Eqs.~\eqref{eq:Prob1} and \eqref{eq:Prob2}, we can re-express
the inequality in equation \eqref{eq:Chained_Bell_Ineq_1} as
\begin{multline}
I_{N}^{\text{CSW}}=P\left(1,1\vert1,2\right)+P\left(-1,-1\vert1,2\right)+P\left(1,1\vert3,2\right)+P\left(-1,-1\vert3,2\right)+\cdots\\
+P\left(1,1\vert2N-1,2N\right)+P\left(-1,-1\vert2N-1,2N\right)+P\left(1,-1\vert1,2N\right)
+P\left(-1,1\vert1,2N\right)\leq_{LHV}2N-1.\label{eq:Chanined_Bell_CSW_1}
\end{multline}
The graph of exclusivity for the events in $I_{N}^{\text{CSW}}$ is $Ci_{4N}\left(1,2N\right)$
and is isomorphic to the M\"{o}bius ladder graph of order $4N$. The independence
number of $Ci_{4N}\left(1,2N\right)$ is $2N-1$. The Lov\'asz theta
number, however, remains unknown and has been conjectured~\cite{mateus} to be equal
to 

\begin{equation}
\vartheta\left(Ci_{4N}\left(1,2N\right)\right)=N\left[1+\cos\left(\frac{\pi}{2N}\right)\right].\label{eq:Lovasz_Mobius_Conjecture}
\end{equation}
Here, we prove that the above conjecture is correct by simple semidefinite
programming duality arguments. Moreover, we recover Bell self-testing
statements for the chained Bell inequalities for arbitrary $N.$ For
the purposes of the proof, we introduce the matrix

\begin{equation}
Z_{N}^{\star}=\left[\begin{array}{cc}
\frac{N}{l} & -e_{4N}^{T}\\
-e_{4N} & \quad l \cdot A_{C_{4N}} +\left[\begin{array}{cc}
I_{2N} & f \cdot I_{2N}\\
f \cdot I_{2N} & I_{2N}
\end{array}\right]
\end{array}\right]\in\mathbb{R}^{(4N+1) \times (4N+1)},\label{eq:Dual_Chained_Optimal}
\end{equation}
where $e_{4N}$ denotes the all-ones column vector of length $4N,$
$k=\cos\left(\frac{\pi}{2N}\right),$ $f=\frac{1-k}{1+k},$ $l = \frac{1}{1+k}$, $A_{C_{4N}}$
is the adjacency matrix of the cycle graph $C_{4N}$, and $I_{2N}$
is a $2N \times 2N$ identity matrix.
\begin{lemma}
\label{lem:Chained_positivity}$Z_{N}^{\star}\succcurlyeq0.$
\end{lemma}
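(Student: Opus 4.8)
The plan is to exhibit $Z_N^\star$ in a manifestly block structure and use the Schur complement criterion. Write $Z_N^\star = \begin{pmatrix} N/l & -e_{4N}^T \\ -e_{4N} & W \end{pmatrix}$, where $W = l\cdot A_{C_{4N}} + \begin{pmatrix} I_{2N} & f\cdot I_{2N} \\ f\cdot I_{2N} & I_{2N}\end{pmatrix}$. Since $N/l = N(1+k) > 0$, the matrix $Z_N^\star$ is positive semidefinite if and only if $W \succeq 0$ and the Schur complement $W - \frac{l}{N}\,e_{4N}e_{4N}^T \succeq 0$. So the argument splits into two parts: first show $W \succeq 0$, then control the rank-one downward perturbation $\frac{l}{N}e_{4N}e_{4N}^T$.

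For the first part, $W$ is a circulant-type matrix: $A_{C_{4N}}$ is the adjacency matrix of the $4N$-cycle, which is circulant with eigenvalues $2\cos(2\pi j/(4N))$ for $j=0,\dots,4N-1$, and the remaining term $\begin{pmatrix} I & fI \\ fI & I\end{pmatrix}$ commutes with the natural $\mathbb{Z}_{2N}$-action, so one can diagonalize $W$ blockwise via the Fourier transform on the cyclic structure. Concretely, decompose $\mathbb{R}^{4N} = \mathbb{R}^{2N}\oplus\mathbb{R}^{2N}$ as in the block form; conjugating by the $2N$-point DFT turns $W$ into a direct sum of $2\times 2$ blocks of the form $\begin{pmatrix} 1 + l\,\omega_s & f \\ f & 1 + l\,\bar\omega_s\end{pmatrix}$ (up to checking exactly how the cycle edges connect the two halves — the edges of $C_{4N}$ alternate between within-block and across-block steps, so one gets a $2\times 2$ block per Fourier mode). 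Each such block is PSD iff its trace and determinant are nonnegative; the determinant works out to a trigonometric identity that is nonnegative precisely because $k = \cos(\pi/2N)$, $f = (1-k)/(1+k)$, $l = 1/(1+k)$ were chosen to make the smallest eigenvalue vanish. I expect the minimal eigenvalue of $W$ to be $0$, attained at the Fourier mode corresponding to the angle $\pi/2N$ — this is the zero-duality-gap witness, so $W$ is singular.

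Because $W$ is singular, the Schur complement step needs care: I must verify that $e_{4N}$ lies in the range of $W$ (otherwise $Z_N^\star$ could not be PSD) and that, on the orthogonal complement of $\ker W$, the inequality $W \succeq \frac{l}{N}e_{4N}e_{4N}^T$ holds. Equivalently, and more cleanly, I would avoid Schur complements on a singular block and instead directly exhibit a vector $g \in \mathbb{R}^{4N}$ with $Wg = e_{4N}$ and $g^T e_{4N} = e_{4N}^T W^+ e_{4N} \le N/l$; then $Z_N^\star \succeq 0$ follows from the generalized Schur complement lemma. The natural candidate for $g$ is a constant vector or a simple explicit vector (since $e_{4N}$ is the all-ones vector and $A_{C_{4N}}e_{4N} = 2 e_{4N}$, while the block term sends $e_{4N} \mapsto (1+f)e_{4N}$, one gets $W e_{4N} = (2l + 1 + f)e_{4N}$, a scalar multiple of $e_{4N}$). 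So $g = e_{4N}/(2l+1+f)$ works, and the condition becomes the scalar inequality $\frac{4N}{2l+1+f} \le \frac{N}{l}$, i.e. $4l \le 2l + 1 + f$, i.e. $2l \le 1+f$; substituting $l = 1/(1+k)$ and $f = (1-k)/(1+k)$ gives $2/(1+k) \le 2/(1+k)$, an equality. Hence the Schur complement is PSD (indeed singular), completing the proof.

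The main obstacle is the eigenvalue analysis of $W$: getting the Fourier/block-diagonalization bookkeeping right for how the $C_{4N}$ edges interact with the $2N\oplus 2N$ block split, and then verifying that the resulting family of small blocks (or scalar eigenvalues) is nonnegative, which reduces to confirming the trigonometric identities forced by the specific choices of $k,f,l$. Everything after that — the all-ones vector being a common eigenvector of both pieces of $W$, and the final scalar inequality collapsing to an equality — is short and essentially forced, and it simultaneously confirms that $Z_N^\star$ attains the claimed objective value $N/l = N(1+\cos(\pi/2N)) = N[1+\cos(\pi/2N)]$, which is what makes it the optimal dual certificate in the subsequent argument.
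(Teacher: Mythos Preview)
Your overall architecture matches the paper's: take the Schur complement with respect to the positive top-left scalar $N/l$, reduce to showing $W-\frac{l}{N}e_{4N}e_{4N}^T\succeq 0$, and exploit that $e_{4N}$ is a common eigenvector so the rank-one shift only touches the constant mode. Your computation $We_{4N}=(2l+1+f)e_{4N}$ and the resulting equality $2l=1+f$ are exactly what the paper finds.

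Where you diverge is the eigenvalue analysis of $W$, and here you make things harder than necessary and introduce a small error. You try a $\mathbb{Z}_{2N}$ block-diagonalization via the split $\mathbb{R}^{4N}=\mathbb{R}^{2N}\oplus\mathbb{R}^{2N}$ ``as in the block form''. But in \emph{that} split (first $2N$ vs.\ last $2N$ indices) the $C_{4N}$ edges do \emph{not} alternate between within-block and across-block; only the two edges $(2N,2N{+}1)$ and $(4N,1)$ cross, so the within-block pieces are path adjacency matrices, not circulant, and your displayed $2\times 2$ blocks are incorrect. An odd/even relabeling would make all cycle edges cross blocks, but then the $fI_{2N}$ term reshuffles as well, so the bookkeeping you flag as ``the main obstacle'' really is one.

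The paper sidesteps this entirely with a one-line observation you missed: $W$ is already a $4N\times 4N$ circulant matrix. The block $\begin{pmatrix} I_{2N} & fI_{2N}\\ fI_{2N} & I_{2N}\end{pmatrix}$ just places $1$'s on the diagonal and $f$'s at cyclic offset $\pm 2N$, so $W$ has $c_0=1$, $c_1=c_{4N-1}=l$, $c_{2N}=f$. The standard circulant eigenvalue formula then gives $\lambda_j = 1 + 2l\cos(\pi j/2N) + (-1)^j f$, and one checks the minima (at $j=2N$ for even $j$, $j=2N{-}1$ for odd $j$) are exactly zero. This replaces your block-Fourier manoeuvre with a direct lookup.

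One more minor point: since the top-left entry $N/l>0$ is a scalar, the ordinary Schur complement criterion applies regardless of whether $W$ is singular; your detour through generalized Schur complements and range conditions is unnecessary.
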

\begin{proof}
Taking the Schur complement of $Z_{N}^{\star}$ with respect to its
top left entry, we have that
\begin{equation}
Z_{N}^{\star}\succcurlyeq0\iff M_{N}-\frac{l}{N}e_{4N}e_{4N}^{T}\succcurlyeq0,
\end{equation}
where $M_{N}=l \cdot A_{C_{4N}} +\left[\begin{array}{cc}
I_{2N} & f \cdot I_{2N}\\
f \cdot I_{2N} & I_{2N}
\end{array}\right]$. To prove that $Z_{N}^{\star}$ is positive semidefinite, it remains to
show that the eigenvalues of $M_{N}-\frac{l}{N}e_{4N}e_{4N}^{T}$
are non-negative. Notice that $e_{4N}$ is a common eigenvector of $M_{N}$ and $e_{4N}e_{4N}^{T}$ as both matrices have the property that the sum of the entries across a row is a constant. Hence, it suffices to compute all the eigenvalues of $M_{N}$. The eigenvalues of a circulant matrix are well characterised. 

\begin{fact}
The eigenvalues of the circulant matrix \begin{equation}
C=\left[\begin{array}{ccccc}c_{0} & c_{n-1} & \ldots & c_{2} & c_{1} \\ c_{1} & c_{0} & c_{n-1} & & c_{2} \\ \vdots & c_{1} & c_{0} & \ddots & \vdots \\ c_{n-2} & & \ddots & \ddots & c_{n-1} \\ c_{n-1} & c_{n-2} & \ldots & c_{1} & c_{0}\end{array}\right]
\end{equation}
are given by 
\begin{equation}
\lambda_{j}=c_{0}+c_{n-1} \omega^{j}+c_{n-2} \omega^{2 j}+\ldots+c_{1} \omega^{(n-1) j}, \quad j=0,1,\ldots, n-1,
\end{equation}
where $\omega = \exp (\frac{2\pi i}{n})$ is the $n^{th}$ root of unity.  
\end{fact}

Note that the matrix $M_N$ is a circulant matrix with $n = 4N$, $c_0 = 1,c_1 = l, c_{2N} = f, c_{n-1} = l$, and $c_i =0$ for $i \notin \{0,1,2N,n-1\}$. Therefore, its eigenvalues are given by $\lambda_j = 1+ l (\omega^j + \omega^{(n-1)j}) + f \omega^{2Nj} $, for $j = 0,1,\ldots n-1$. Simplifying this, we obtain 
\begin{equation}
\lambda_j=\begin{cases}
			1-f+2l \cos{\frac{\pi j}{2N}}, & \text{if $j$ is odd,}\\
            1+f+2l \cos{\frac{\pi j}{2N}}, & \text{if $j$ is even.}
		 \end{cases}
\end{equation}

When $j$ is even, the minimum eigenvalue is when $j = 2N$, for which 
\begin{equation}
\lambda_{2N} = 1+f - 2l = 1+ \frac{1-k}{1+k} -\frac{2}{1+k} =0.
\end{equation}
When $j$ is odd, the minimum eigenvalue is when $j = 2N-1$, for which 
\begin{equation}
\lambda_{2N-1} = 1-f + 2l \cos \left(\frac{(2N-1)\pi}{2N}\right) = 1-f - 2l \cos \left(\frac{\pi}{2N}\right) =1-f-2lk =  1 - \frac{1-k}{1+k} -\frac{2k}{1+k} =0.
\end{equation}

Finally, note that the eigenvalue of $M_N$ corresponding to the eigenvector $e_{4N}$ is $1 + 2l + f$. Whereas $\frac{l}{N}e_{4N}e_{4N}^{T}$ is a rank-1 matrix with eigenvector $e_{4N}$ with eigenvalue $\frac{l}{N} \times 4N = 4l$. Therefore, the eigenvalue of $ M_N - \frac{l}{N}e_{4N}e_{4N}^{T}$  corresponding to the eigenvector $e_{4N}$ is $1+2l +f -4l = 1+f-2l = 1 + \frac{1-k}{1+k} - \frac{2}{1+k} =0$. The rest of the eigenvalues of $ M_N - \frac{l}{N}e_{4N}e_{4N}^{T}$ are the same as those of $M_N$ and are non-negative as shown above. Hence, all the eigenvalues are non-negative. 
\end{proof}

\begin{claim}
The dual optimal corresponding to the optimization program \eqref{theta:dual} for $Ci_{4N}\left(1,2N\right)$
is $Z_{N}^{\star}$ (expression \ref{eq:Dual_Chained_Optimal}). 
\end{claim}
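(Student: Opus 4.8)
The plan is to certify $Z_N^\star$ as an optimal solution of \eqref{theta:dual} in the usual way: show it is dual feasible, compute its objective value, and exhibit a primal feasible point of \eqref{theta:primal} with the same value; weak duality for the pair \eqref{theta:primal}--\eqref{theta:dual} then forces both to be optimal, and as a byproduct establishes the conjectured formula \eqref{eq:Lovasz_Mobius_Conjecture}.

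\emph{Dual feasibility.} First I would read off the dual variables $t,\lambda_i,\mu_{ij}$ implicitly carried by $Z_N^\star$ and verify that the resulting matrix has exactly the shape demanded by \eqref{theta:dual}. Since $E_{ij}=\tfrac12(e_ie_j^T+e_je_i^T)$, in \eqref{theta:dual} the $(0,0)$-entry of $Z$ is $t$, the $(i,i)$-entry is $\lambda_i-1$, the $(0,i)$-entry is $-\lambda_i/2$, the $(i,j)$-entry for $i\sim j$ is $\mu_{ij}/2$, and the $(i,j)$-entry for $i\not\sim j$ with $i,j\neq 0$ must be $0$. Matching this against \eqref{eq:Dual_Chained_Optimal}: the top-left entry gives $t=N/l=N(1+k)$; the lower-right block has all-ones diagonal, so $\lambda_i=2$ for every $i$, hence every $(0,i)$-entry equals $-1$, matching the $-e_{4N}$ border; the off-diagonal entries of the lower-right block are $l$ on the pairs $\{i,i+1\}$ (the support of $A_{C_{4N}}$) and $f$ on the pairs $\{i,i+2N\}$ (the support of the $f$-weighted block in \eqref{eq:Dual_Chained_Optimal}), and these are precisely the edges of $Ci_{4N}(1,2N)$, whose connection set is $\{1,2N\}$; all remaining off-diagonal entries vanish, matching the non-edges. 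Thus $Z_N^\star$ has the prescribed form with $\mu_{i,i+1}=2l$ and $\mu_{i,i+2N}=2f$, and by Lemma~\ref{lem:Chained_positivity} it is positive semidefinite, so it is dual feasible. Weak duality then gives $\vartheta(Ci_{4N}(1,2N))\le N/l=N\!\left[1+\cos(\tfrac{\pi}{2N})\right]$.

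\emph{A matching primal value.} Next I would exhibit a feasible $X$ for \eqref{theta:primal} with objective value $N[1+\cos(\tfrac{\pi}{2N})]$: one can take $X$ to be the Gram matrix of the vectors arising from the standard optimal quantum strategy for the chained Bell inequality \eqref{eq:Chained_Bell_Ineq_1}, which attains $I_N^{Bell}=2N\cos(\tfrac{\pi}{2N})$. Via the conversions \eqref{eq:Prob1}--\eqref{eq:Prob2}, this strategy becomes a quantum behavior of $Ci_{4N}(1,2N)$ (the distance-$1$ and distance-$2N$ events are outcomes of a common measurement, hence exclusive, so the exclusivity constraints $X_{ij}=0$ hold) with $I_N^{\mathrm{CSW}}=\tfrac12\!\left(2N\cos(\tfrac{\pi}{2N})+2N\right)=N[1+\cos(\tfrac{\pi}{2N})]$, i.e.\ a point of $\mathrm{TH}(Ci_{4N}(1,2N))$ and thus a feasible $X$ of the same value. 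Hence $\vartheta(Ci_{4N}(1,2N))\ge N[1+\cos(\tfrac{\pi}{2N})]$, and combining with the dual bound, $\vartheta(Ci_{4N}(1,2N))=N/l=N[1+\cos(\tfrac{\pi}{2N})]$, which is exactly the value attained by the dual feasible $Z_N^\star$. Therefore $Z_N^\star$ is a dual optimal solution of \eqref{theta:dual} for $Ci_{4N}(1,2N)$, proving the claim and \eqref{eq:Lovasz_Mobius_Conjecture}; if in addition one wants the primal maximizer of \eqref{theta:primal} to be unique (assumption \ref{ass2}), one would further check, as in the CHSH computation, that $Z_N^\star$ is dual nondegenerate and invoke Theorem~\ref{alizadeh}.

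I expect the main obstacle to be the bookkeeping in the dual-feasibility step rather than anything conceptual: one must pin down the edge set of the circulant graph $Ci_{4N}(1,2N)$, confirm that it is exactly the union of the distance-$1$ pairs (carried by $A_{C_{4N}}$) and the distance-$2N$ pairs (carried by the $f$-block), check that $Z_N^\star$ has no nonzero entry off those positions so that no non-edge constraint is violated, and be careful with the factor-of-$2$ normalizations relating entries of $Z_N^\star$ to the $E_{ij}$-expansion and with the sign of the border. The positivity of $Z_N^\star$ --- normally the crux of such certificates --- is already dispatched by Lemma~\ref{lem:Chained_positivity} via the circulant eigenvalue computation, and the primal lower bound is just the well-known quantum value of the chained Bell inequality, so what remains is essentially structural verification.
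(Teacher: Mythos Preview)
Your proposal is correct and follows essentially the same route as the paper: you verify dual feasibility by reading off $t=N/l$, $\lambda_i=2$, $\mu_{i,i+1}=2l$, $\mu_{i,i+2N}=2f$ and invoking Lemma~\ref{lem:Chained_positivity} for positive semidefiniteness, then certify optimality by exhibiting the known quantum strategy for the chained Bell inequality as a primal feasible point with value $N[1+\cos(\pi/2N)]$. The only cosmetic difference is that the paper phrases the second step as ``strong duality'' while you phrase it as matching upper and lower bounds via weak duality; the content is the same.
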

\begin{proof}
We need to show that
\begin{enumerate}
\item $Z_{N}^{*}$ is dual feasible for the program in \eqref{theta:dual}.
\item $Z_{N}^{\star}$ corresponds to dual optimal value.
\end{enumerate}

To show feasibility, we need to show that $Z_{N}^{\star}$ is of the form as in \eqref{theta:dual}, that is, $Z_{N}^{\star} =  tE_{00}+\sum_{i=1}^n (\lambda_i-1)E_{ii}-\sum_{i=1}^n\lambda_i E_{0i}+\sum_{i\sim j} \mu_{ij}E_{ij}$. This is indeed true for the following choice of values: $t = \frac{N}{l}$, $\lambda_i = 2$ for $i = 1,2,\ldots,4N$ and $\mu_{ij} = 2l$ whenever $i$ and $j$ share an edge in $C_{4N}$ and $\mu_{ij} = 2f$ for $|i-j| = 2N$. Finally, using Lemma \ref{lem:Chained_positivity}, we have $Z_{N}^{\star}\succcurlyeq0.$

Using the measurement settings for chained Bell inequalities in~\cite{AQBTC13}, one obtains  the  output of the primal SDP \eqref{theta:primal} for  $I_{N}^{CSW}$ equal to  $N\left[1+\cos\left(\frac{\pi}{2N}\right)\right]$. Strong duality for the SDP in  \eqref{theta:primal} implies that $Z_{N}^{\star}$ corresponds to dual optimal value.
\end{proof}
The proof of the uniqueness of the primal optimal is similar to the proof corresponding to $n$-cycle graphs in ~\cite{BRVWCK19}. Chained Bell Inequalities satisfies Conditions A1 and A2, 
which can be checked by choosing the vectors in \eqref{XO1} as follows:
\begin{align}
&a_0= |A_{1}=1\rangle ,~
a_1= |A_{3}=1\rangle ,~
a_2= |A_{2N-1}=-1\rangle , \\
&b_0= | B_2=1 \rangle ,~
b_1= |B_{2N}=-1\rangle .
\end{align}
Here, $|A_{1}=1\rangle $ expresses the eigenvector of $A_1$ with eigenvalue $1$.
This notation is applied to other observables. Since the optimal maximizer given in \eqref{NA1MT} satisfies conditions A1 and A2
and
the vectors $(\Pi_i |\psi'\rangle)_{i \in {\cal I}}$ realize the optimal solution in the SDP \eqref{thetaSDP1}.
In addition, we the ranks of the projections $\Pi_{i_A}^A$ and $\Pi_{i_B}^B$ 
are assumed to be one.
Thus, there exist isometries 
$V_A: {\cal H}_A \to {\cal H}_A'$
and $V_B: {\cal H}_B\to {\cal H}_B'$ such that
\begin{align}
V_A \otimes V_B|\psi \rangle&= |\psi'\rangle , \\
V_A \otimes V_B|v_i \rangle&= |v_i'\rangle ,
\end{align}
for $i \in {\cal I}$. This completes the proof of self-testability for the chained Bell inequalities for rank-one projectors.

Since $Z_{N}^{\star}$ corresponds to dual optimal value, we have $\vartheta\left(Ci_{4N}\left(1,2N\right)\right)=N\left[1+\cos\left(\frac{\pi}{2N}\right)\right]$ as conjectured in~\cite{mateus}.

\subsection{Abner Shimony Self-Testing}

The Abner Shimony (AS) Bell inequalities~\cite{gisin2009bell} refer to a bipartite Bell scenario with an even number $n$ of measurement settings per party. Each measurement has two outcomes. It can be written as
\begin{align}\label{eq:as_ori}
  AS_n = \sum_{i+j< n} \langle A_i B_j \rangle - \sum_{i+j=n} \min\{i-1,j-1\} \langle A_i B_j\rangle\overset{LHV}{\le} \frac{n(n+2)}{4}.
\end{align}
By taking into account that
\begin{align}
  \langle A_i B_j \rangle &= 2[P(0,0|i,j)+P(1,1|i,j)]-1,\\ - \langle A_i B_j\rangle &= 2[P(0,1|i,j)+P(1,0|i,j)]-1,
\end{align}
Eq.~\eqref{eq:as_ori} can be rewritten as
\begin{align}
  AS^{c}_n =& \sum_{i+j< n} [P(0,0|i,j)+P(1,1|i,j)]\nonumber\\ &+ \sum_{i+j=n} \min\{i-1,j-1\} [P(0,1|i,j)+P(1,0|i,j)] \overset{LHV}{\le} \frac{n(n+1)}{2}. 
\end{align}
For example, for the case $n=4$,
\begin{align}
\label{ASC4}
  AS^c_4 = & P(0,0|0,0)+P(1,1|0,0)+P(0,0|0,1)+P(1,1|0,1)+P(0,0|0,2)+P(1,1|0,2)\nonumber\\ &+P(0,0|0,3)+P(1,1|0,3)+P(0,0|1,0)+P(1,1|1,0)+P(0,0|1,1)+P(1,1|1,1)\nonumber\\ &+P(0,0|1,2)+P(1,1|1,2)+P(0,0|1,3)+P(1,1|1,3)+P(0,0|2,0)+P(1,1|2,0)\nonumber\\ &+P(0,0|2,1)+P(1,1|2,1)+2[P(0,0|2,2)+P(1,1|2,2)]+P(0,0|3,0)+P(1,1|3,0)\nonumber\\ &+P(0,0|3,1)+P(1,1|3,1).
\end{align}
The (vertex-weighted) graph of exclusivity of the $26$ events in Eq.~(\ref{ASC4}) is shown in Fig.~\ref{AS4_ex} and has $\alpha(G,w) = 10$, $\vartheta(G,w) = 7 + \frac{5 \sqrt{6} }{3}$, and $\alpha^*(G,w) = 14$. Notice that the vertex weight is $2$ for events $[0,0|2,2]$ and $[1,1|2,2]$ and $1$ otherwise. 
\begin{equation}
\begin{aligned}
&|w_1\rangle = |A_0 \rangle \otimes \vert A_0\rangle,\\
&|w_2\rangle = |B_0 \rangle \otimes \vert B_0\rangle,\\
&|w_3\rangle = |A_0 \rangle \otimes \vert A_1\rangle,\\
&|w_4\rangle = |B_0 \rangle \otimes \vert B_1\rangle,\\
&|w_5\rangle = |A_0 \rangle \otimes \vert A_2\rangle,\\
&|w_6\rangle = |B_0 \rangle \otimes \vert B_2\rangle,\\
&|w_7\rangle = |A_0 \rangle \otimes \vert A_3\rangle,\\
&|w_8\rangle = |B_0 \rangle \otimes \vert B_3\rangle,\\
&|w_9\rangle = |A_1 \rangle \otimes \vert A_0\rangle,\\
&|w_{10}\rangle = |B_1 \rangle \otimes \vert B_0\rangle,\\
&|w_{11}\rangle = |A_1 \rangle \otimes \vert A_1\rangle,\\
&|w_{12}\rangle = |B_1 \rangle \otimes \vert B_1\rangle,\\
&|w_{13}\rangle = |A_1 \rangle \otimes \vert A_2\rangle,\\
&|w_{14}\rangle = |B_1 \rangle \otimes \vert B_2\rangle,\\
&|w_{15}\rangle = |A_1 \rangle \otimes \vert A_3\rangle,\\
&|w_{16}\rangle = |B_1 \rangle \otimes \vert B_3\rangle,\\
&|w_{17}\rangle = |A_2 \rangle \otimes \vert A_0\rangle,\\
&|w_{18}\rangle = |B_2 \rangle \otimes \vert B_0\rangle,\\
&|w_{19}\rangle = |A_2 \rangle \otimes \vert A_1\rangle,\\
&|w_{20}\rangle = |B_2 \rangle \otimes \vert B_1\rangle,\\
&|w_{21}\rangle = |A_2 \rangle \otimes \vert A_2\rangle,\\
&|w_{22}\rangle = |B_2 \rangle \otimes \vert B_2\rangle,\\
&|w_{23}\rangle = |A_3 \rangle \otimes \vert A_0\rangle,\\
&|w_{24}\rangle = |B_3 \rangle \otimes \vert B_0\rangle,\\
&|w_{25}\rangle = |A_3 \rangle \otimes \vert A_1\rangle,\\
&|w_{26}\rangle = |B_3 \rangle \otimes \vert B_1\rangle,
\end{aligned}    
\end{equation}

\begin{figure}[t] \label{AS4_ex}
\centering
\includegraphics[width=0.52\textwidth]{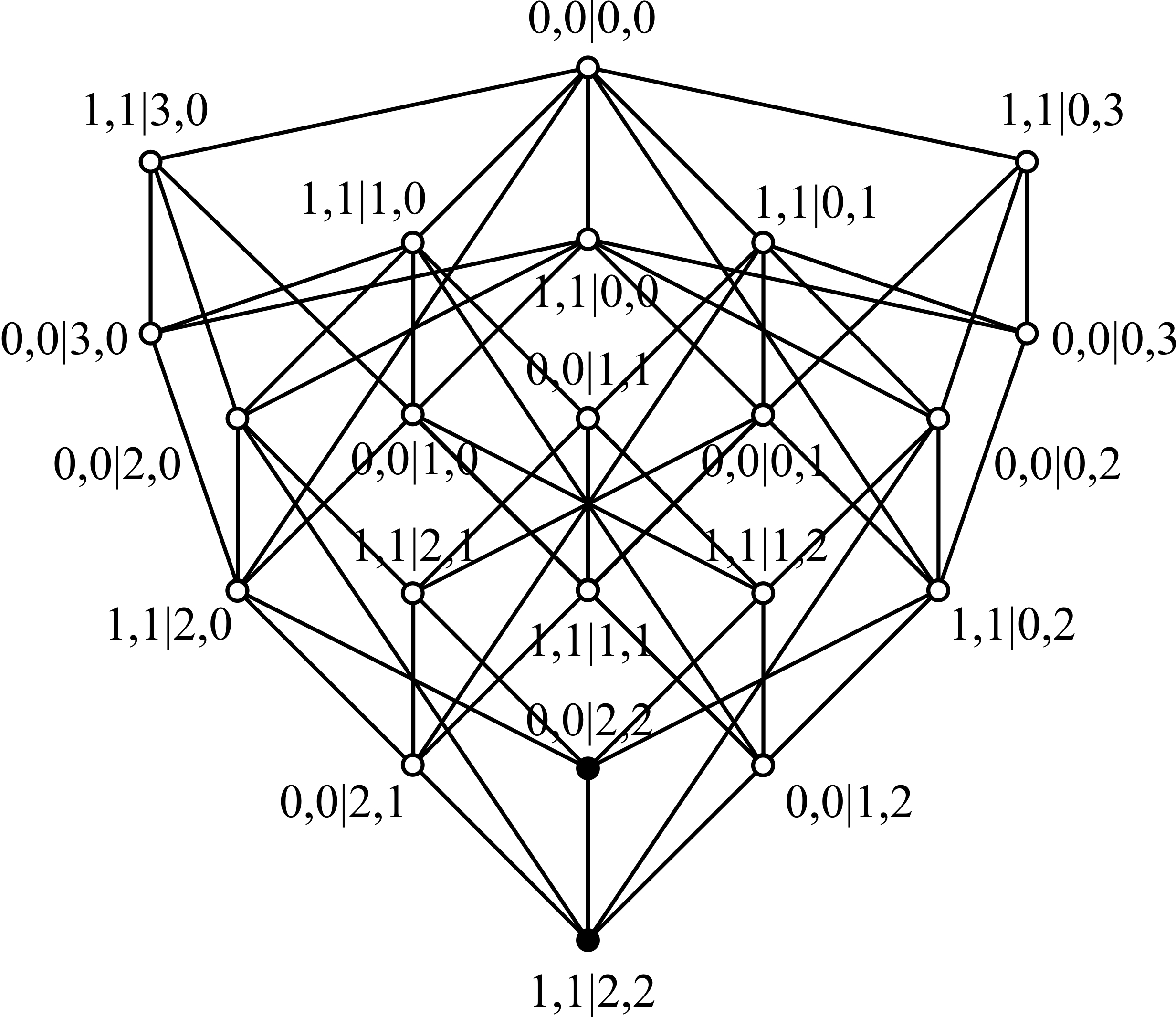}
\caption{Vertex-weighted graph of exclusivity for the events in the Bell inequality AS4 in Eq.~(\ref{ASC4}). There are $26$ events. Black nodes represent vertices with weight $2$ in Eq.~(\ref{ASC4}) and white nodes represent vertices with weight $1$.}
\centering
\end{figure}

The violation of the Bell inequality $AS^c_4$ can achieve $\vartheta(G,w)$ by choosing as initial state
\begin{equation}
\label{stateAS4}
  |s\rangle = \cos t (|00\rangle-|11\rangle) + \sin t (|01\rangle-|10\rangle),
\end{equation}
and as local measurements 
\begin{equation}
    A_i = |m(\alpha_i)\rangle, B_i = |m(\pi/2 + \alpha_i)\rangle,
\end{equation}
with $i=0,1,2,3$, $m(\alpha) = \cos\alpha|0\rangle + \sin\alpha|1\rangle$, and
\begin{align}\label{eq:as4setting}
  \alpha_0 = 0,\,\,\,\alpha_1 = \arcsin\left(\frac{1}{ \sqrt{6} }\right),\,\,\,
  \alpha_2 = \frac{1}{2} \left(\pi -\arctan\left(\sqrt{\frac{5 \sqrt{145}}{8}+\frac{77}{8}}\right)\right), \nonumber \\
  \alpha_3 = \frac{1}{2} \left(\pi -\arctan\left(48 \sqrt{\frac{2}{275 \sqrt{145}+3317}}\right)\right),\,\,\,
  t = \frac{1}{8}\left(\alpha_2 + 2\alpha_4 - \frac{\pi}{2} \right).
\end{align}

The primal optimal for the SDP corresponding to the quantum violation of $AS^c_4$ can be obtained by the state and measurement directions given in Eqs.~\eqref{stateAS4}--\eqref{eq:as4setting}. Here, we omit its expression, as it is lengthy and complex. The proof of the uniqueness of the primal optimal is similar as in previous cases. 

The local projective measurements satisfy Conditions A1 and A2, 
which can be checked by choosing the vectors in \eqref{XO1} as follows:
\begin{align}
&a_0= |A_{2}=0\rangle ,~
a_1=a_2= |A_{3}=0\rangle ,  \\
&b_0= | B_0=0 \rangle ,~
b_1= |B_{1}=0\rangle .
\end{align}

Here, $|A_{1}=1\rangle $ expresses the eigenvector of $A_1$ with eigenvalue $1$.
This notation is applied to other observables. Since the optimal maximizer given in \eqref{NA1MT} satisfies conditions A1 and A2
and
the vectors $(\Pi_i |\psi'\rangle)_{i \in {\cal I}}$ realize the optimal solution in the SDP \eqref{thetaSDP1}.
In addition, we the ranks of the projections $\Pi_{i_A}^A$ and $\Pi_{i_B}^B$ 
are assumed to be one.
Thus, there exist isometries 
$V_A: {\cal H}_A \to {\cal H}_A'$
and $V_B: {\cal H}_B\to {\cal H}_B'$ such that
\begin{align}
V_A \otimes V_B|\psi \rangle&= |\psi'\rangle , \\
V_A \otimes V_B|v_i \rangle&= |v_i'\rangle ,
\end{align}
for $i \in {\cal I}$. This completes the proof of self-testability for the $AS^c_4$ Bell inequality for rank-one projectors.

\section{Summary and Open Problems}\label{discuss}

In this work, we introduced a graph-theoretic approach to self-testing in Bell scenarios, combining ideas from graph theory and semidefinite programming. The motivation was the observation that the set of quantum correlations for a Bell scenario is, in general, difficult to characterize while, using ideas from ~\cite{CSW}, one can provide an easy to characterize single SDP-based relaxation of this set. By proving self-testing for the maximizer of a Bell inequality with respect to the aforementioned set, we furnish self-testing for the set of quantum correlations for the underlying Bell scenario. 

Our method requires that the quantum bound of the Bell inequality is equal to the Lov\'asz theta number of the vertex-weighted graph of exclusivity of the events appearing in the Bell witness, when written as a positive linear combination of probabilities of events. As we have seen, this is frequently the case. Our other assumptions involve some particular relation among the local projective measurements involved in the scenario as mentioned in Theorems \ref{TH7MT}, \ref{TH8MT}, \ref{TH9MT} and \ref{TH10MT}. In future, it would be interesting to simplify our assumptions involving relation among local projective measurements.

We applied our techniques to the CHSH, chained and three-party Mermin Bell inequalities. For CHSH and the trpartite Mermin case, we recovered self-testing results for projectors of arbitrary rank. For chained Bell inequalities, our self-testing statements hold for rank-one projectors. For the Mermin three-party case, the primal optimal matrix's rank is seven, indicating that the self-testing preparation dimension can be seven. However, in the Bell scenario, the underlying dimension has to be eight due to the tensor structure.
We also applied our method to the previously not-known case of AS inequalities and provided a self-testing statement for the case of rank-one projectors.

While delivering the self-testing statement for the chained-Bell inequality via our graph-theoretic framework, we also obtained a closed-form expression for the Lov\'asz theta number for M\"{o}bius ladder graphs. Our closed-form expression matches with the conjecture of~\cite{mateus}. 

Our methods belong in the intersection region of graph theory, Bell non-locality, and contextuality. Our results provide further motivation to study Bell self-testing via the graph-theoretic framework in the future. Furthermore, we believe that techniques such as ours could be used in the future to study open problems in graph theory taking advantage of ideas from quantum theory.

A natural next step in our program would be to generalize our result for scenarios with noise. In other words, a graph-theoretic approach to robust Bell self-testing. 

The graph-theoretic approach has been employed to study self-testing in Bell scenarios and in contextuality scenarios with sequential measurements. It will be interesting to see if the techniques based on graph theory could be also useful for self-testing in prepare and measure scenarios. In future, it would be interesting to extend our self-testing statements for chained Bell and AS inequalities for arbitrary rank projectors.

\medskip
{\noindent {\em Acknowledgements---}} AC was supported by Project Qdisc (Project No.\ US-15097), with FEDER funds, MINECO Project No.\ FIS2017-89609-P, with FEDER funds, and QuantERA grant SECRET, by MINECO (Project No.\ PCI2019-111885-2). KB and LCK thank National Research Foundation and the Ministry of Education, Singapore for the financial support. ZPX was supported by the Alexander von Humboldt Foundation. MH is supported by Guangdong Provincial Key Laboratory under Grant 2019B121203002. MR is supported by MEXT Quantum Leap Flagship Program (MEXT Q-LEAP) Grant Number JPMXS0120319794.

\bibliography{CHSH_ST.bib}
\bibliographystyle{alpha}

\appendix

\section{Graph Theory Basics}\label{GTB}

A graph $G = (V,E)$ consists of a set of vertices $V$ and edges $E$ \cite{Book}. Two vertices are {\em adjacent} if they share an edge between them. The complement graph $\bar{G}$ has same vertices as $G$, but its edge set is complement of the set $E$. A {\em clique} of a graph  is set of pairwise adjacent vertices.  The complement of clique is a set of vertices that are pairwise non-adjacent. A natural generalization of graph is hypergraph with  generalized edges connecting more than two vertices. These generalized edges are called hyperedges.

\begin{definition} (Cyclic graph) Given a graph with $n$ vertices such that every $i$th vertex of the graph is connected to $(i+1) \text{mod }n \, ^{th}$ vertex and $(i-1) \text{mod } n \, ^{th}$ vertex is called cyclic graph and denoted as $C_n$.
\end{definition}

An elegant generalization of the concept of cyclic graph is the concept of circulant graph, which is defined below.
\begin{definition} (Circulant graph) \label{circulantgraph}
Given a list $[L]$ of integers, a graph with $n$ vertices where $i \, ^{th}$ vertex is connected to $(i+l) \text{mod } n \, ^{th}$ and $(i-l) \text{ mod } n \, ^{th}$ vertices for all $l \in [L] $, is referred as circulant graph $Ci_n[L]$.  $Ci_n[1]$ graphs are called cyclic graphs.
\end{definition}

\begin{definition} (Orthonormal representation of a graph \cite{lovasz1979shannon}) An orthonormal representation of a graph is an assignment of unit vectors $\ket{v_i} \in \mathbb{R}^d$ to every vertex $i \in V$ such that 
\beq
\braket{v_i}{v_j}=0, \forall i,j \notin E. 
\enq 
We will use the notation ${\rm OR}(G)$ to represent the orthonormal representation of $G$. 
\end{definition}
\begin{definition} (Stable set) Stable set is a set of vertices of a graph such that no two vertices which lie in it share an edge.
\end{definition}

\begin{definition} (Independence number)
Independence number of a graph is the cardinality of the largest stable set of the graph. We will denote it by $\alpha(G).$

\end{definition}

\begin{definition} (Convex hull) Convex hull of a set $A$ is the smallest convex set containing $A$.
\end{definition}

\begin{definition} (Incidence vector) An Incidence vector of a set $B \subset A$ is a vector $P \in \mathbb{R}^{\vert A\vert}_+$ such that for every $i \in A$,
\begin{equation}
  P_i= 
\begin{cases}
   1 & \text{if }  i \in B,\\
    0              & \text{otherwise.}
\end{cases}  
\end{equation}
\end{definition}

\begin{definition} (Stable set polytope) The convex hull of all the incidence vectors of stable sets of graph $G$ is called stable set polytope of graph. It is denoted by $\text{\rm STAB}(G)$.
\end{definition}

\begin{definition} \label{Theta1}(Theta body) Let $\{\ket{v}_i\}$ corresponds to the orthonormal representation of $\bar{G}$. Given a unit vector $\ket{\phi}= (1,0,0,\cdots,0) \in \mathbb{R}^d$ with only first co-ordinate $1$ and rest $0$, the Theta body of graph $G$ is defined as
\begin{equation}
    \text{\rm TH}(G)=\{P \in \mathbb{R}^{\vert V \vert} : P_i =  \vert \braket{\psi}{v_i}\vert^2\}.
\end{equation}
\end{definition}

\begin{definition} \label{lovaszdefinition} (Lov\'asz theta number \cite{lovasz1979shannon})
The Lov\'asz theta number $\vartheta(G)$ of a graph $G$ is defined as follows:
\beq
\vartheta(G)= \max_{\ket{\phi}, \{\ket{v_i}\}}\sum_{i} |\braket{\phi}{v_i}|^2, \nonumber
\enq
where $\ket{\phi}$ is a unit vector and $\{\ket{v_i}\}$ is an orthonormal representation of the graph $G$. $\ket{\phi}$ is also known as handle.
\end{definition}
\begin{definition} (Fractional stable set polytope) The fractional stable set polytope is given by
\begin{equation}
\text{\rm QSTAB}(G)=\left\{P \in \mathbb{R}^{\vert V \vert_+} : \sum_{i \in C}P_i \le 1 \text{ for every clique } C \text{ of graph } G \right\}.
\end{equation}
\end{definition}
\begin{definition} (Fractional packing number) The fractional packing of a graph $G$ is the value of the following linear program:
\begin{equation}
    \al^*(G)=\max\left\{\sum_{i=1}^n x_i: x\in {\rm QSTAB}(G)\right\}.
\end{equation}
\end{definition}

\begin{definition} (Gram matrix and Gram decomposition) Given a set of vectors $v_1, v_2,\ldots, v_k$ in an inner product space, the corresponding Gram matrix is a Hermitian matrix $X$, defined via their inner products such that $X_{i,j} = \langle v_i, v_j\rangle$ for $i,j \in \{1,2,\ldots,n\}.$ It is important to note that $\text{rank } X = \text{dim span}\left(v_1, v_2,\ldots, v_k \right).$ Decomposing Gram matrix $X$ such that $X = AA^{\dagger}$ is called Gram decomposition of $X$. The rows of $A$ are related to $v_i$ up to isometry.
\end{definition}

\begin{definition} (Vertex-transitive graph) A graph $G = \left(V,E\right)$ is called vertex transitive, if given any two vertices $v_1, v_2 \in V $there exists an automorphism $h: V \rightarrow V$ such that $h\left(v_1\right) = v_2$.
\end{definition}

\begin{fact} \cite{lovasz1979shannon}
For a given graph $G,$ $\alpha(G) \leq \vartheta(G)\leq \al^*(G).$
\end{fact}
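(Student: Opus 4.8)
The plan is to establish the two inequalities $\alpha(G)\le\vartheta(G)$ and $\vartheta(G)\le\alpha^*(G)$ separately, working throughout with the semidefinite characterization of $\vartheta(G)$ furnished by the primal program \eqref{theta:primal} with all weights set to $1$, together with its Gram-vector reformulation: writing $X=(\langle u_i,u_j\rangle)_{i,j=0}^n$ for a Gram decomposition of a feasible $X$, the constraints read $\langle u_0,u_0\rangle=1$, $\langle u_i,u_i\rangle=\langle u_0,u_i\rangle=:p_i$ for all $i\in[n]$, and $\langle u_i,u_j\rangle=0$ whenever $i\sim j$.

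For the lower bound $\alpha(G)\le\vartheta(G)$, I would first take a maximum stable set $S\subseteq[n]$, so that $|S|=\alpha(G)$, and let $x\in\{0,1\}^n$ be its incidence vector. I would then exhibit the rank-one matrix $X=yy^T$ with $y=(1,x_1,\dots,x_n)^T$ as a feasible point of \eqref{theta:primal}. Feasibility is immediate: $X_{00}=1$; since $x_i\in\{0,1\}$ one has $X_{ii}=x_i^2=x_i=X_{0i}$; and because $S$ is stable, $x_ix_j=0$ whenever $i\sim j$, so $X_{ij}=0$ on every edge. As $X=yy^T\succeq0$, the objective value $\sum_i X_{ii}=\sum_i x_i=|S|=\alpha(G)$ is attained by a feasible point, whence $\vartheta(G)\ge\alpha(G)$.

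For the upper bound $\vartheta(G)\le\alpha^*(G)$, I would take an optimal solution $X^\star$ of \eqref{theta:primal} with Gram vectors $\{u_i\}_{i=0}^n$ and set $p_i:=X^\star_{ii}$. The key step is to show that $p=(p_1,\dots,p_n)$ lies in $\text{QSTAB}(G)$. Nonnegativity $p_i\ge0$ follows from $X^\star\succeq0$. For any clique $C$ the vectors $\{u_i\}_{i\in C}$ are pairwise orthogonal, since every pair in $C$ is an edge; projecting the handle $u_0$ onto their span and invoking Bessel's inequality gives $\sum_{i\in C}\frac{|\langle u_0,u_i\rangle|^2}{\langle u_i,u_i\rangle}\le\langle u_0,u_0\rangle=1$, where each summand equals $p_i^2/p_i=p_i$ (terms with $p_i=0$ being discarded). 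Hence $\sum_{i\in C}p_i\le1$ for every clique $C$, so $p\in\text{QSTAB}(G)$. Consequently $\vartheta(G)=\sum_i p_i\le\max\{\sum_i x_i:x\in\text{QSTAB}(G)\}=\alpha^*(G)$.

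I expect the genuinely substantive step to be the clique verification in the upper bound, namely converting the edge constraints $X_{ij}=0$ into the orthogonality needed for the Bessel estimate on the Gram vectors, whereas the lower bound reduces to the one-line rank-one lift of a stable-set incidence vector. The only care required is in handling the degenerate summands with $p_i=0$ (equivalently $u_i=0$), which contribute nothing and are simply omitted from the projection.
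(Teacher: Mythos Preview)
Your proof is correct. Note, however, that the paper does not actually prove this statement: it is stated as a Fact with a citation to Lov\'asz's original paper, and the accompanying containment $\text{\rm STAB}(G)\subseteq\text{\rm TH}(G)\subseteq\text{\rm QSTAB}(G)$ is likewise cited to the literature without proof. Your argument---lifting a maximum stable set to a rank-one feasible point of \eqref{theta:primal} for the lower bound, and applying Bessel's inequality to the pairwise-orthogonal Gram vectors over each clique for the upper bound---is the standard proof of this sandwich theorem and is entirely valid.
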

It is worthwhile to note that $\text{\rm STAB}(G) \subseteq$ TH(G)$\subseteq$ \text{\rm QSTAB}(G) \cite{knuth1994sandwich}.
An alternate formulation of theta body of a graph $G=([n],E)$ is given by: 
\begin{equation}
{\rm TH}(G)=\{x\in \R^n_+: \exists X\in \mathbb{S}^{1+n}_+, \ X_{00}=1, \  X_{ii}=X_{0i}, \ X_{ij}=0, \forall ij\in E\}.
\end{equation}
\begin{lemma}~\cite{ray2021graph}\label{csdcever} We have that $x\in {\rm TH}(G)$ iff there exist unit vectors $d,w_1,\ldots,w_n$ such that 
\be
x_i=\la d,w_i\ra^2, \forall i\in [n] \text{ and } \la w_i, w_j\ra=0, \text{ for } ij\in E.
\ee
\end{lemma}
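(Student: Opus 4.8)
# Proof Proposal for Lemma~\ref{csdcever}

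The plan is to establish the equivalence by passing through the alternate matrix formulation of the theta body given immediately before the lemma statement, and then invoking the standard fact that a positive semidefinite matrix admits a Gram (Cholesky-type) factorization. The key observation is that the claimed characterization in terms of unit vectors $d, w_1, \ldots, w_n$ is essentially a restatement of the existence of a feasible matrix $X \in \mathbb{S}^{1+n}_+$ in the alternate formulation, once one recognizes how the Gram decomposition of $X$ encodes the inner products $\langle d, w_i \rangle$.

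First, for the direction ($\Leftarrow$), I would suppose we are given unit vectors $d, w_1, \ldots, w_n$ with $x_i = \langle d, w_i \rangle^2$ and $\langle w_i, w_j \rangle = 0$ for $ij \in E$. I would then construct the candidate matrix $X$ of size $(1+n) \times (1+n)$ as the Gram matrix of the vectors $\{d, \langle d, w_1\rangle w_1, \langle d, w_2 \rangle w_2, \ldots, \langle d, w_n \rangle w_n\}$; call these $u_0 = d$ and $u_i = \langle d, w_i \rangle w_i$. This is manifestly positive semidefinite. One checks the three defining constraints: $X_{00} = \langle d, d \rangle = 1$; $X_{ii} = \langle d, w_i\rangle^2 \|w_i\|^2 = \langle d, w_i \rangle^2 = x_i$ while $X_{0i} = \langle d, \langle d, w_i\rangle w_i \rangle = \langle d, w_i\rangle^2 = x_i$, so $X_{ii} = X_{0i}$; and for $ij \in E$, $X_{ij} = \langle d, w_i\rangle \langle d, w_j \rangle \langle w_i, w_j\rangle = 0$. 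Hence $x \in {\rm TH}(G)$ by the alternate formulation, and $x_i = X_{ii} \geq 0$.

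Conversely, for ($\Rightarrow$), suppose $x \in {\rm TH}(G)$, so there is $X \in \mathbb{S}^{1+n}_+$ with $X_{00} = 1$, $X_{ii} = X_{0i}$, and $X_{ij} = 0$ for $ij \in E$. Take a Gram decomposition $X = Y^\dagger Y$, giving vectors $y_0, y_1, \ldots, y_n$ with $X_{ij} = \langle y_i, y_j \rangle$. Set $d := y_0$, which is a unit vector since $\|d\|^2 = X_{00} = 1$. For each $i$, if $X_{ii} > 0$ set $w_i := y_i / \|y_i\|$ (a unit vector), and if $X_{ii} = 0$ set $w_i$ to be any unit vector orthogonal to all the $y_j$ (enlarging the ambient space if necessary). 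Then for $i$ with $X_{ii} > 0$: $\langle d, w_i\rangle^2 = \langle y_0, y_i\rangle^2 / \|y_i\|^2 = X_{0i}^2 / X_{ii} = X_{ii}^2 / X_{ii} = X_{ii} = x_i$; and when $X_{ii} = 0$, $x_i = X_{ii} = 0 = \langle d, w_i \rangle^2$ by the choice of $w_i$. Finally, for $ij \in E$ with both $X_{ii}, X_{jj} > 0$ we get $\langle w_i, w_j \rangle = X_{ij} / (\|y_i\|\|y_j\|) = 0$, and when one of them has zero norm on the diagonal the orthogonality holds by construction of the padded $w_i$. This produces the required unit vectors.

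The main subtlety — and the only place one must be slightly careful — is the handling of vertices $i$ with $X_{ii} = X_{0i} = 0$, where the naive normalization $y_i / \|y_i\|$ is undefined; one must supply a substitute unit vector and check it does not spoil the orthogonality constraints, which is why I would allow enlarging the Hilbert space. Everything else is routine linear algebra, so I expect the write-up to be short.
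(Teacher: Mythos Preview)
Your proof is correct and follows the standard argument. Note, however, that the paper does not actually supply its own proof of this lemma: it is stated in Appendix~\ref{GTB} with a citation to~\cite{ray2021graph} and no further argument, so there is nothing in the paper to compare your approach against. Your write-up via the Gram decomposition of a feasible $X\in\mathbb{S}^{1+n}_+$ from the alternate matrix formulation of ${\rm TH}(G)$ is exactly the expected route, and your handling of the degenerate case $X_{ii}=0$ (padding with a fresh orthogonal unit vector) is the right fix for the only genuine subtlety.
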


\section{Mermin inequality} \label{Mermin_APP}

Mermin's Bell inequality \cite{Mermin90} refers to a $n$-partite Bell scenario (with $n \ge 3$ odd; there is also a version for $n$ even \cite{Ardehali92}, but we won't consider it here). The interest of this Bell inequality is based on the fact that the Bell operator
\begin{equation}
S_n = \frac{1}{2 i} \left[\bigotimes_{j=1}^n (\sigma_x^{(j)}+i \sigma_z^{(j)}) - \bigotimes_{j=1}^n (\sigma_x^{(j)}-i \sigma_z^{(j)})\right],
\end{equation}
where $\sigma_x^{(j)}$ is the Pauli matrix $x$ for qubit $j$, has an eigenstate with eigenvalue $2^{(n-1)}$. In contrast, for local hidden-variable (LHV) and non-contextual hidden-variable (NCHV) theories,
\begin{equation}
\langle S_n \rangle \overset{\scriptscriptstyle{\mathrm{LHV, NCHV}}}{\le} 2^{(n-1)/2}.
\end{equation}
For example,
\begin{eqnarray}
S_3 &=& \sigma_z^{(1)} \otimes \sigma_x^{(2)} \otimes \sigma_x^{(3)}+
\sigma_x^{(1)} \otimes \sigma_z^{(2)} \otimes \sigma_x^{(3)}\\&	&+
\sigma_x^{(1)} \otimes \sigma_x^{(2)} \otimes \sigma_z^{(3)}-
\sigma_z^{(1)} \otimes \sigma_z^{(2)} \otimes \sigma_z^{(3)}.
\end{eqnarray}
Therefore, we can write (using obvious notation),
\begin{equation}
\langle S_3 \rangle = \langle ZXX \rangle+\langle XZX \rangle+\langle XXZ \rangle-\langle ZZZ \rangle.
\end{equation}
Then, by taking into account that
 \begin{widetext}
\begin{eqnarray*}
 \langle ZXX \rangle &=& P(Z=X=X=1) + P(Z=X=-X=-1) + P(Z=-X=X=-1) + P(-Z=X=X=-1)
 \\& & -P(Z=X=X=-1) - P(Z=X=-X=1) - P(Z=-X=X=1) - P(-Z=X=X=1) 
  \\&=&2 \left[ P(Z=X=X=1) + P(Z=X=-X=-1) + P(Z=-X=X=-1) + P(-Z=X=X=-1) \right] -1,\\
 \langle XZX \rangle &=& P(X=Z=X=1) + P(X=Z=-X=-1) + P(X=-Z=X=-1) + P(-X=Z=X=-1)
 \\& & -P(X=Z=X=-1) - P(X=Z=-X=1) - P(X=-Z=X=1) - P(-X=Z=X=1)
 \\&=&2 \left[ P(X=Z=X=1) + P(X=Z=-X=-1) + P(X=-Z=X=-1) + P(-X=Z=X=-1) \right] -1,\\
 \langle XXZ \rangle &=& P(X=X=Z=1) + P(X=X=-Z=-1) + P(X=-X=Z=-1) + P(-X=X=Z=-1) 
 \\& & -P(X=X=Z=-1) - P(X=X=-Z=1) - P(X=-X=Z=1) - P(-X=X=Z=1) 
 \\&=&2 \left[ P(X=X=Z=1) + P(X=X=-Z=-1) + P(X=-X=Z=-1) + P(-X=X=Z=-1) \right] -1,\\
 -\langle ZZZ \rangle &=& P(Z=Z=Z=-1) + P(Z=Z=-Z=1) + P(Z=-Z=Z=1) + P(-Z=Z=Z=1) 
 \\& & -P(Z=Z=Z=1) - P(Z=Z=-Z=-1) - P(Z=-Z=Z=-1) - P(-Z=Z=Z=-1)\\
&=& 2 \left[ P(Z=Z=Z=-1) + P(Z=Z=-Z=1) + P(Z=-Z=Z=1) + P(-Z=Z=Z=1) \right] -1,
\end{eqnarray*}
\end{widetext}
we can rewrite $\langle S_3 \rangle$ as a sum of the probabilities of 16~events. That is,
 \begin{widetext}
	\begin{equation}
	\label{S3}
	\begin{aligned}
		\langle S_3 \rangle =& 2 \left[ P(Z=X=X=1) + P(Z=X=-X=-1) + P(Z=-X=X=-1) + P(-Z=X=X=-1) \right. \\
	&+	P(X=Z=X=1) + P(X=Z=-X=-1) + P(X=-Z=X=-1) + P(-X=Z=X=-1) \\
&+	P(X=X=Z=1) + P(X=X=-Z=-1) + P(X=-X=Z=-1) + P(-X=X=Z=-1) \\
&	\left. +P(Z=Z=Z=-1) + P(Z=Z=-Z=1) + P(Z=-Z=Z=1) + P(-Z=Z=Z=1) \right] -4.
\end{aligned}
	\end{equation}
\end{widetext}
The graph of exclusivity of these 16~events is the complement of Shrikhande graph~\cite{shrikhande1959uniqueness}.

This graph, shown in Fig.~\ref{Mermin_fig}, has $\alpha = 3$ and $\vartheta = \alpha^* =4$.
Similarly, one can obtain the graph corresponding to any $\langle S_n \rangle$.

\section{Seven dimensional configuration for the Mermin case} \label{seven_m}

We have obtained numerically (rounded up to three digits after decimal) the following seven dimensional configuration achieving the Lov\'az theta number of the graph in Fig.~\ref{Mermin_fig}. 
\begin{equation}
\begin{aligned}
\ket{u_0} &= \left(1,0,0,0,0,0,0\right)^T,\\
\ket{u_1} &=   \left(0.25,	-0.113,	-0.241,	0.284,	0.088,	0.166,	-0.029\right)^T,\\
\ket{u_2} &=  \left( 0.25,	-0.110,	-0.251,	-0.120,	0.247,	-0.021,	-0.191   \right)^T,\\
\ket{u_3} &=  \left(0.25,	-0.292,	0.079,	0.151,	0.075,	-0.051,	-0.255    \right)^T,\\
\ket{u_4} &=  \left( 0.25,	0.182,	-0.087,	0.003,	0.311,	0.215,	0.059  \right)^T,\\
\ket{u_5} &=  \left( 0.25,	-0.226,	0.069,	0.104,	-0.227,	0.262,	-0.021	   \right)^T,\\
\ket{u_6} &=  \left( 0.25,	0.223,	-0.059,	0.300,	0.068,	-0.075,	0.184,   \right)^T,\\
\ket{u_7} &=  \left(0.25,	-0.004,	-0.232,	0.130,	-0.298,	0.001,	0.167    \right)^T,\\
\ket{u_8} &= \left( 0.25,	-0.247,	0.049,	-0.152,	0.140,	-0.278,	0.059   \right)^T,\\
\ket{u_9} &=  \left(0.25,	0.251,	-0.059,	-0.252,	0.019,	0.091,	-0.222	    \right)^T,\\
\ket{u_{10}} &= \left(0.25,	0,	-0.242,	-0.274,	-0.139,	-0.186,	0.004    \right)^T,\\
\ket{u_{11}} &= \left(0.25,	0.069,	0.271,	0.019,	-0.154,	0.062,	-0.285    \right)^T,\\
\ket{u_{12}} &=  \left(0.25,	0.044,	0.261,	0.167,	0.054,	-0.291,	-0.042    \right)^T,\\
\ket{u_{13}} &=  \left(0.25,	0.069,	0.223,	-0.178,	-0.004,	0.312,	0.067    \right)^T,\\
\ket{u_{14}} &=  \left(0.25,	0.045,	0.212,	-0.030,	0.204,	-0.042,	0.310    \right)^T,\\
\ket{u_{15}} &=  \left( 0.25,	-0.182,	0.039,	-0.200,	-0.161,	0.035,	0.293	   \right)^T,\\
\ket{u_{16}} &= \left(0.25,	0.291,	-0.031,	0.046,	-0.225,	-0.199,	-0.097	    \right)^T.
\end{aligned}   
\end{equation}

\section{Proofs of Self-testing} \label{ST_Proof}
We consider two types of sets of indexes ${\cal I}$ and ${\cal I}_0={\cal I} \cup \{0\}$.
We consider 
the matrix $X_{ij}:=\langle \psi | \Pi_j \Pi_i |\psi\rangle$,
where $\Pi_i$ is a projection and 
$\Pi_0$ is the identity operator.
We set $n:=|{\cal I}|$.
Then, we assume that the following SDP has the unique solution.
\begin{equation}\label{thetaSDP}
\begin{aligned} 
\vartheta(\gexcl,w) = \max & \  \sum_{i \in {\cal I}} w_i { X}_{ii} \\
{\rm  s.t.}   &  \  { X}_{ii}={ X}_{0i}, \ \forall i\in [n],\\
  & \ { X}_{ij}=0,\ \forall i\sim j,\\
& \ X_{00}=1,\  X\in \mathbb{S}^{1+n}_+.
\end{aligned}
\end{equation}

\subsection{Bipartite case}
We assume that the unique optimal maximizer 
$X^*=(X_{ij})$ is given by $\eta_i \eta_j\langle v_j, v_i\rangle$ with the following;
For $i=(i_A,i_B)\in {\cal I}$, 
\begin{align} v_{i} = a_{i_A} \otimes b_{i_B}, 
\label{NA1}
\end{align}
where $a_{i_A} \in {\cal H}_A=\mathbb{C}^{d_A}$, 
$b_{i_B} \in {\cal H}_B=\mathbb{C}^{d_B}$.
Also, for simplicity, $a_{i_A} $ and $ b_{i_B}$ are assumed to be normalized
and $\eta_i>0$.

Now, we consider a state $|\psi'\rangle $ on ${\cal H}_A'\otimes {\cal H}_B'$,
and 
projections $\Pi_{i_A}^A$ and $\Pi_{i_B}^B$ on ${\cal H}_A'$ and ${\cal H}_B'$.
Here, 
when $i_A=i_A'$ ($i_B=i_B'$) for $i \neq i'$,
$\Pi_{i_A}^A=\Pi_{i_A'}^A$ ($\Pi_{i_B}^B=\Pi_{i_B'}^B$).
Then, we define the projection 
$\Pi_i:=\Pi_{i_A}^A\otimes \Pi_{i_B}^B$,

In the following, 
we discuss how 
the state $ |\psi'\rangle$ is locally converted to 
$ |\psi\rangle$ when
the vectors $\Pi_i |\psi'\rangle$ realize the optimal solution in the SDP \eqref{thetaSDP}.
We define $|v_i'\rangle:= \eta_i^{-1} \Pi_{i}|\psi'\rangle$.

\subsubsection{Rank-one case}
First, we consider the case that the ranks of the projections $\Pi_{i_A}^A$ and $\Pi_{i_B}^B$ 
are one.
We introduce the following conditions.
\begin{description}
\item[A1]
The set $\{v_i\}_{i\in {\cal I}_0}$ of vectors 
span the vector space ${\cal H}_A\otimes {\cal H}_B$.

\if0
The relation
\begin{eqnarray}
Vv_{i} = v_{i}', \Label{H1}
\end{eqnarray}
holds for all $i\in {\cal I}_0$.
\fi

\item[A2]
There exist
a subset ${\cal I}_B$ of indexes of the space ${\cal H}_B$ with
$|{\cal I}_B|=d_B=\dim {\cal H}_B$
and 
$d_B$ sets $\{ {\cal I}_{A,i_B} \}_{i_B \in {\cal I}_B}$ of indexes of the space 
${\cal H}_A$ 
$|{\cal I}_{A,i_B}|=d_A=\dim {\cal H}_A$
to satisfy the following conditions B1-B4.
\end{description}

\begin{description}
\item[B1]
$\cup_{i_B \in {\cal I}_B} {\cal I}_{A,i_B}\times \{i_B\}
\subset {\cal I}$.
\item[B2]
$\{ b_{i_B} \}_{i_B \in {\cal I}_B}$ spans the space ${\cal H}_B$.
\item[B3]
$\{ a_{i_A} \}_{i_A \in {\cal I}_{A, i_B}}$ spans the space ${\cal H}_A$
for any $i_B \in {\cal I}_B$.
\item[B4]
We define the graph on ${\cal I}_B$ in the following way.
This graph cannot be divided.
$i_B\in {\cal I}_B$ is connected to $i_B' \in {\cal I}_B$ 
when the following two conditions holds.
\begin{description}
\item[B4-1]
The relation $\langle b_{i_B},b_{i_B'}\rangle \neq 0$ holds.
\item[B4-2]
The relation ${\cal I}_{A, i_B}\cap  {\cal I}_{A, i_B'} \neq \emptyset$ holds.

\end{description}
\end{description}

In the two qubit case, if 
The set $\{v_i\}_{i\in {\cal I}}$ of vectors 
contains the following $4$ vectors, then the 
conditions A1 and A2 hold;
\begin{align}
a_0 \otimes b_0,~
a_1 \otimes b_0,~
a_0 \otimes b_1,~
a_2 \otimes b_1,\label{XO1}
\end{align}
where $a_0\neq a_1,a_2$,
$ \langle b_0,b_1\rangle \neq 0$.

\begin{example}
CHSH inequality satisfies Conditions A1 and A2,
which can be checked by choosing the vectors in \eqref{XO1} as follows:
\begin{align}
&a_0= |A_{0,0}\rangle ,~
a_1=a_2= |A_{0,1}\rangle ,  \\
&b_0= |B_{0,0}\rangle ,~
b_1= |B_{1,0}\rangle .
\end{align}
\end{example}
\begin{example}
Chained Bell inequalities satisfies Conditions A1 and A2, 
which can be checked by choosing the vectors in \eqref{XO1} as follows:
\begin{align}
&a_0= |A_{1}=1\rangle ,~
a_1= |A_{3}=1\rangle ,~
a_2= |A_{2N-1}=-1\rangle , \\
&b_0= | B_2=1 \rangle ,~
b_1= |B_{2N}=-1\rangle .
\end{align}
In this example and the next example, $|A_{1}=1\rangle $ expresses the eigenvector of $A_1$ with eigenvalue $1$.
This notation is applied to other observables.
\end{example}
\begin{example}
Abner Shimony Self-Testing satisfies Conditions A1 and A2, 
which can be checked by choosing the vectors in \eqref{XO1} as follows:
\begin{align}
&a_0= |A_{2}=0\rangle ,~
a_1=a_2= |A_{3}=0\rangle ,  \\
&b_0= | B_0=0 \rangle ,~
b_1= |B_{1}=0\rangle .
\end{align}
\end{example}

\begin{theorem}\Label{NMP}
Assume that 
the optimal maximizer given in \eqref{NA1} satisfies conditions A1 and A2
and
the vectors $(\Pi_i |\psi'\rangle)_{i \in {\cal I}}$ realize the optimal solution in the SDP \eqref{thetaSDP}.
In addition, the ranks of the projections $\Pi_{i_A}^A$ and $\Pi_{i_B}^B$ 
are assumed to be one.
Then, there exist isometries 
$V_A: {\cal H}_A \to {\cal H}_A'$
and $V_B: {\cal H}_B\to {\cal H}_B'$ such that
\begin{align}
V_A \otimes V_B|\psi \rangle&= |\psi'\rangle , \\
V_A \otimes V_B|v_i \rangle&= |v_i'\rangle ,
\end{align}
for $i \in {\cal I}$.
\hfill $\square$\end{theorem}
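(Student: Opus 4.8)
The plan is to turn the uniqueness of the SDP optimizer into a single \emph{linear isometry} $V$ and then force $V$ to factorize into $V_A\otimes V_B$ using the combinatorial data in conditions A1 and A2 (that is, B1--B4). \textbf{Step 1 (a global isometry).} Since $(\Pi_i|\psi'\rangle)_{i\in{\cal I}_0}$ (with $\Pi_0=I$) realizes the optimal solution of \eqref{thetaSDP}, the Gram matrix $(\langle\Pi_i\psi'|\Pi_j\psi'\rangle)_{i,j\in{\cal I}_0}$ equals the unique maximizer $X^\ast$; writing $|v_i'\rangle=\eta_i^{-1}\Pi_i|\psi'\rangle$ and, on the reference side, identifying the handle with $v_0=|\psi\rangle$ so that $v_i=a_{i_A}\otimes b_{i_B}$ and $\eta_0=1$, this gives $\langle v_i',v_j'\rangle=\langle v_i,v_j\rangle$ for all $i,j\in{\cal I}_0$ (in particular $v_0'=|\psi'\rangle$). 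By A1 the vectors $\{v_i\}_{i\in{\cal I}_0}$ span ${\cal H}_A\otimes{\cal H}_B$, so there is a well-defined linear isometry $V:{\cal H}_A\otimes{\cal H}_B\to{\cal H}_A'\otimes{\cal H}_B'$ with $Vv_i=v_i'$ for all $i\in{\cal I}_0$. The theorem then reduces to proving $V=V_A\otimes V_B$ for isometries $V_A,V_B$: applying such a $V$ to $v_0$ and to $v_i$ yields $V_A\otimes V_B|\psi\rangle=|\psi'\rangle$ and $V_A\otimes V_B|v_i\rangle=|v_i'\rangle$.

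\textbf{Step 2 (product structure block by block).} Using rank-one-ness, write $\Pi^A_{i_A}=|\tilde a_{i_A}\rangle\langle\tilde a_{i_A}|$, $\Pi^B_{i_B}=|\tilde b_{i_B}\rangle\langle\tilde b_{i_B}|$, so that $v_i'=c_i\,\tilde a_{i_A}\otimes\tilde b_{i_B}$ with $|c_i|=1$. Fix $i_B\in{\cal I}_B$. By B1 every $(i_A,i_B)$ with $i_A\in{\cal I}_{A,i_B}$ lies in ${\cal I}$, and by B3 the set $\{a_{i_A}\}_{i_A\in{\cal I}_{A,i_B}}$ is a basis of ${\cal H}_A$; since $V$ sends $a_{i_A}\otimes b_{i_B}\mapsto c_{(i_A,i_B)}\tilde a_{i_A}\otimes\tilde b_{i_B}$, these images are linearly independent, $\{\tilde a_{i_A}\}_{i_A\in{\cal I}_{A,i_B}}$ is linearly independent, and $V(x\otimes b_{i_B})=(V_{A,i_B}x)\otimes\tilde b_{i_B}$ for every $x\in{\cal H}_A$, where $V_{A,i_B}:{\cal H}_A\to{\cal H}_A'$ is a linear isometry. \textbf{Step 3 ($V_{A,i_B}$ does not depend on $i_B$).} For an edge $i_B\sim i_B'$ of the graph in B4, pick a common index $i_A^\ast\in{\cal I}_{A,i_B}\cap{\cal I}_{A,i_B'}$ (B4-2); comparing the $(i_A^\ast,i_B),(i_A^\ast,i_B')$ Gram entries of $\{v_i'\}$ and $\{v_i\}$, and using $|c|=1$, $\|\tilde a_{i_A^\ast}\|=\|a_{i_A^\ast}\|=1$, gives $|\langle\tilde b_{i_B},\tilde b_{i_B'}\rangle|=|\langle b_{i_B},b_{i_B'}\rangle|\neq0$ by B4-1. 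Equating $\langle V(x\otimes b_{i_B}),V(x\otimes b_{i_B'})\rangle$ with $\langle x\otimes b_{i_B},x\otimes b_{i_B'}\rangle$ yields $\langle V_{A,i_B'}x,V_{A,i_B}x\rangle=\mu\|x\|^2$ for all $x$, where $\mu=\langle b_{i_B'},b_{i_B}\rangle/\langle\tilde b_{i_B'},\tilde b_{i_B}\rangle$ has $|\mu|=1$; polarization gives $V_{A,i_B}^\dagger V_{A,i_B'}=\bar\mu\,I$, and since $V_{A,i_B}V_{A,i_B}^\dagger$ is the projection onto $\operatorname{ran}V_{A,i_B}$ one gets $\|(I-V_{A,i_B}V_{A,i_B}^\dagger)V_{A,i_B'}x\|^2=1-|\mu|^2=0$, hence $V_{A,i_B'}=\bar\mu\,V_{A,i_B}$. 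Absorbing the unit phase $\bar\mu$ into $\tilde b_{i_B'}$ (which leaves $\Pi^B_{i_B'}$ unchanged) makes $V_{A,i_B'}=V_{A,i_B}$; since the graph of B4 is connected, propagating along a spanning tree collapses all $V_{A,i_B}$, $i_B\in{\cal I}_B$, to a single isometry $V_A:{\cal H}_A\to{\cal H}_A'$.

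\textbf{Step 4 (assemble $V_B$ and conclude).} With $V_{A,i_B}=V_{A,i_B'}=V_A$ for all $i_B,i_B'\in{\cal I}_B$, equating $\langle V(x\otimes b_{i_B}),V(x'\otimes b_{i_B'})\rangle$ two ways forces $\langle\tilde b_{i_B},\tilde b_{i_B'}\rangle=\langle b_{i_B},b_{i_B'}\rangle$; since $\{b_{i_B}\}_{i_B\in{\cal I}_B}$ is a basis of ${\cal H}_B$ by B2, the rule $V_B\big(\sum_{i_B}\beta_{i_B}b_{i_B}\big):=\sum_{i_B}\beta_{i_B}\tilde b_{i_B}$ defines a linear isometry $V_B:{\cal H}_B\to{\cal H}_B'$, and then $V(x\otimes y)=(V_Ax)\otimes(V_By)$ for all $x,y$, i.e.\ $V=V_A\otimes V_B$ on ${\cal H}_A\otimes{\cal H}_B$. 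Applying this to $v_0=|\psi\rangle$ and to $v_i=|v_i\rangle$ ($i\in{\cal I}$) gives exactly $V_A\otimes V_B|\psi\rangle=|\psi'\rangle$ and $V_A\otimes V_B|v_i\rangle=|v_i'\rangle$, completing the proof of Theorem~\ref{NMP}.

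\textbf{Expected main obstacle.} Steps 1, 2 and 4 are essentially bookkeeping with Gram matrices of isometries; the real work is Step 3, namely upgrading ``$V_{A,i_B'}$ and $V_{A,i_B}$ are proportional on the single shared vector $\tilde a_{i_A^\ast}$'' to the operator identity $V_{A,i_B'}=V_{A,i_B}$, which requires pinning down the scalar $\mu$ (showing $|\mu|=1$ from a diagonal Gram entry) and the unit phases $c_i$, and then propagating the identification consistently over all of ${\cal I}_B$ using connectivity of the graph in B4 without creating circular phase conventions. The role of conditions B4-1 (non-orthogonality of the $b_{i_B}$) and B4-2 (overlapping $a$-index sets) is precisely to make this propagation possible.
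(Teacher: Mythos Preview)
Your proposal is correct and follows essentially the same route as the paper: establish a global isometry $V$ from the unique Gram matrix (Step~1), build per-block isometries $V_{A,i_B}$ using B1--B3 (Step~2), identify them across blocks via the connectivity condition B4 (Step~3), and assemble $V_B$ to conclude $V=V_A\otimes V_B$ (Step~4). The only difference is cosmetic: in Step~3 the paper deduces $V_{A,i_B}=V_{A,i_B'}$ directly from a Gram-identity computation (its equations (G1)--(G2)), while you first obtain $V_{A,i_B'}=\bar\mu\,V_{A,i_B}$ with $|\mu|=1$ and then absorb the phase into $\tilde b_{i_B'}$---your version is in fact more explicit about the phase bookkeeping that the paper handles implicitly.
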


\begin{proof}
Since the vectors $\Pi_i |\psi'\rangle$ realize the optimal solution in the SDP \eqref{thetaSDP},
there exists a isometry $V$ from ${\cal H}_A\otimes {\cal H}_B$
to ${\cal H}_A'\otimes {\cal H}_B'$ such that
\begin{align}
V \Pi_i|\psi\rangle=\Pi_i|\psi'\rangle \Label{APY}
\end{align}
for $i \in {\cal I}$.
We denote 
$ \Pi_i|\psi'\rangle=\eta_i |a_{i_A}' \otimes b_{i_B}'\rangle$.

We fix an arbitrary element $i_B \in {\cal I}_B$.
For $i_A ,i_A' \in {\cal I}_{A, i_B}$,
Condition A1 implies
\begin{align}
\langle  a_{i_A}, a_{i_A'} \rangle
=\langle  a_{i_A}\otimes b_{i_B}, a_{i_A'}\otimes b_{i_B} \rangle
=\langle  a_{i_A}'\otimes b_{i_B}', a_{i_A'}'\otimes b_{i_B}' \rangle
=\langle  a_{i_A}', a_{i_A'}' \rangle.
\end{align}
Hence, there exists an isometry $V_{A,i_B}: 
{\cal H}_A\to {\cal H}_A'$
such that
\begin{eqnarray}
V_{A,i_B} |a_{i_A}\rangle=|a_{i_A}'\rangle
\end{eqnarray}
for $i_A \in {\cal I}_{A, i_B}$.

We choose two connected elements $i_B,i_B' \in {\cal I}_B$.
For $i_A \in {\cal I}_{A, i_B}\cap  {\cal I}_{A, i_B'}$,
\eqref{APY} implies
\begin{align}
\langle  b_{i_B}, b_{i_B'} \rangle
=\langle  a_{i_A}\otimes b_{i_B}, a_{i_A}\otimes b_{i_B'} \rangle
=\langle  a_{i_A}'\otimes b_{i_B}', a_{i_A}'\otimes b_{i_B'}' \rangle
=\langle  b_{i_B}', b_{i_B'}' \rangle.\Label{G1}
\end{align}
Hence, 
for $i_A \in {\cal I}_{A, i_B}$ and $i_A' \in {\cal I}_{A, i_B'}$,
\eqref{APY} implies
\begin{align}
 \langle  a_{i_A}, a_{i_A'} \rangle \langle  b_{i_B}, b_{i_B'}\rangle 
=&\langle  a_{i_A}\otimes b_{i_B}, a_{i_A'}\otimes b_{i_B'} \rangle
=\langle  a_{i_A}'\otimes b_{i_B}', a_{i_A'}'\otimes b_{i_B'}' \rangle\nonumber \\
=&\langle  a_{i_A}', a_{i_A'}' \rangle
\langle  b_{i_B}', b_{i_B'}' \rangle.\Label{G2}
\end{align}
Since Condition B4-1 guarantees $\langle  b_{i_B}, b_{i_B'} \rangle \neq 0$, 
the combination of \eqref{G1} and \eqref{G2} implies that
\begin{align}
\langle  a_{i_A}, a_{i_A'} \rangle
=\langle  a_{i_A}', a_{i_A'}' \rangle.
\end{align}
Hence, we find that  $V_{A,i_B}=V_{A,i_B'}$.
Since the graph defined in B4 is not divided,
all isometries $V_{A,i_B}$ are the same.
We denote it by $V_A$.

We choose arbitrary two elements $i_B,i_B' \in {\cal I}_B$.
We choose elements $i_A \in {\cal I}_{A, i_B}$ and $i_A' \in {\cal I}_{A, i_B'}$
such that 
\begin{align}
\langle  a_{i_A}, a_{i_A'} \rangle \neq 0.\Label{G3}
\end{align}
Condition A1 implies
\begin{align}
 &\langle  a_{i_A}, a_{i_A'} \rangle \langle  b_{i_B}, b_{i_B'}\rangle
=\langle  a_{i_A}\otimes b_{i_B}, a_{i_A'}\otimes b_{i_B'} \rangle
=\langle  a_{i_A}'\otimes b_{i_B}', a_{i_A'}'\otimes b_{i_B'}' \rangle \nonumber \\
=&\langle  a_{i_A}', a_{i_A'}' \rangle
\langle  b_{i_B}', b_{i_B'}' \rangle
=\langle V_A a_{i_A},V_A a_{i_A'} \rangle \langle  b_{i_B}', b_{i_B'}' \rangle
=\langle a_{i_A}, a_{i_A'} \rangle \langle  b_{i_B}', b_{i_B'}' \rangle.
\Label{G4}
\end{align}
the combination of \eqref{G3} and \eqref{G4} implies that
\begin{align}
\langle  b_{i_B}, b_{i_B'} \rangle
=\langle  b_{i_B}', b_{i_B'}' \rangle.
\end{align}
Hence, there exists an isometry 
 $V_B: {\cal H}_B\to {\cal H}_B'$ such that
\begin{eqnarray}
V_B |b_{i_B}\rangle=|b_{i_B}'\rangle
\end{eqnarray}
for $i_B \in {\cal I}_B$.

Since $\{a_{i_A}\otimes b_{i_B}\}_{i_A \in {\cal I}_{A,i_B} ,i_B \in {\cal I}_B }$
spans ${\cal H}_A\otimes {\cal H}_B$,
we have
$V=V_A \otimes V_B$.
\end{proof}

\subsubsection{General case}
We consider the general case.
In addition to A1 and A2, we assume the following condition.
\begin{description}
\item[A3] Ideal systems ${\cal H}_A$ and ${\cal H}_B$ are two-dimensional.
\item[A4] Each system has only two measurements.
That is, the set $\bar{\cal I}_A$ ($\bar{\cal I}_B$) of all indexes of the space ${\cal H}_A$ (${\cal H}_B$) is composed 
4 elements.
For any element $i_A \in \bar{\cal I}_A$ ($i_B \in \bar{\cal I}_B$), there exists 
an element $i_A' \in \bar{\cal I}_A$ ($i_B' \in \bar{\cal I}_B$) such that 
$\langle a_{i_A}| a_{i_A'}\rangle=0$
($\langle b_{i_B}| b_{i_B'}\rangle=0$).
\end{description}

When A3 and A4 hold,
$\bar{\cal I}_A$ ($\bar{\cal I}_B$) is written as
${\cal B}_{A,0}\cup {\cal B}_{A,1}$ (${\cal B}_{B,0}\cup {\cal B}_{B,1}$), where
${\cal B}_{A,j}=\{ (0,j),(1,j)\}$ (${\cal B}_{B,j}=\{ (0,j),(1,j)\}$) 
and $\langle a_{(0,j)}| a_{(1,j)}\rangle=0$ ($\langle b_{(0,j)}| b_{(1,j)}\rangle=0$) 
for $j=0,1$.

While CHSH inequality, Chained Bell inequalities, and Abner Shimony Self-Testing
satisfy Conditions A1 and A2,  only  
CHSH inequality satisfies Conditions A3 and A4. 

We also consider the following condition for $\Pi_i= \Pi_{i_A}^A\otimes \Pi_{i_B}^B$.
\begin{description}
\item[C1]
When $i_A,i_A' \in \bar{\cal I}_A$ ($i_B,i_B' \in \bar{\cal I}_B$) satisfy
$\langle a_{i_A}| a_{i_A'}\rangle=0$ ($\langle b_{i_B}| b_{i_B'}\rangle=0$), 
we have $\Pi_{i_A}^A+\Pi_{i_A'}^A=I$ ($\Pi_{i_B}^B+\Pi_{i_B'}^B=I$). 
\end{description}

Let ${\cal H}_{i_A}^A$ and ${\cal H}_{i_B}^B$ be the image of the projections
$\Pi_{i_A}^A$ and $\Pi_{i_B}^B$.

\begin{theorem}\Label{APS}
Assume that 
the optimal maximizer given in \eqref{NA1} satisfies conditions A1, A2, A3, and A4,
the vectors $(\Pi_i |\psi'\rangle)_{i \in {\cal I}}$ realize the optimal solution in the SDP \eqref{thetaSDP},
and condition C1 holds.
Then, there exist isometries 
$V_A$ from $ {\cal H}_A\otimes  {\cal K}_A$ to $ {\cal H}_A'$
and 
$V_B$ from $ {\cal H}_B\otimes  {\cal K}_B$ to $ {\cal H}_B'$
such that
\begin{align}
V_A \otimes V_B |\psi\rangle \otimes |junk\rangle &=
|\psi'\rangle , \Label{ML1}\\
V_A \otimes V_B |v_i\rangle \otimes |junk\rangle &=|v_i'\rangle , \Label{ML2}
\end{align}
for $i \in {\cal I}$,
where
$|junk\rangle$ is a state on ${\cal K}_A\otimes  {\cal K}_B$.
\hfill $\square$\end{theorem}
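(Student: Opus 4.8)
The plan is to reduce Theorem~\ref{APS} to the rank-one case (Theorem~\ref{NMP}) by block-diagonalising each party's Hilbert space with Jordan's lemma, and then to fold the resulting blocks into the ancillary registers $\mathcal{K}_A,\mathcal{K}_B$. First, since $(\Pi_i|\psi'\rangle)_{i\in\mathcal{I}}$ realises the optimal solution of the SDP~\eqref{thetaSDP} and, by assumption~\ref{ass2}, that maximiser is unique, the Gram matrix of the vectors $\Pi_i|\psi'\rangle$ (together with $\Pi_0|\psi'\rangle=|\psi'\rangle$) equals $X^{*}=(\eta_i\eta_j\langle v_j,v_i\rangle)$. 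Exactly as at the start of the proof of Theorem~\ref{NMP}, this produces an isometry $V:\mathcal{H}_A\otimes\mathcal{H}_B\to\mathcal{H}_A'\otimes\mathcal{H}_B'$ with $V|\psi\rangle=|\psi'\rangle$ and $V|v_i\rangle=|v_i'\rangle$ for $i\in\mathcal{I}$; condition A1 makes the domain all of $\mathcal{H}_A\otimes\mathcal{H}_B$ and A3 makes it $\mathbb{C}^2\otimes\mathbb{C}^2$. The remaining task is to show that $V$ is, up to a fixed junk state, of product form.

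By A4 each party has exactly two binary projective measurements, and by C1 their outcome projectors on the real side are the complementary pairs $\{\Pi_{(0,0)}^A,\,I-\Pi_{(0,0)}^A\}$ and $\{\Pi_{(0,1)}^A,\,I-\Pi_{(0,1)}^A\}$ for Alice, and similarly for Bob. Applying Jordan's lemma to the pair $\Pi_{(0,0)}^A,\Pi_{(0,1)}^A$ gives an orthogonal decomposition $\mathcal{H}_A'=\bigoplus_{\alpha}H_{A,\alpha}'$ into subspaces of dimension at most two, each invariant under all four of Alice's projectors; doing the same for Bob gives $\mathcal{H}_B'=\bigoplus_{\beta}H_{B,\beta}'$, so every $\Pi_i=\Pi_{i_A}^A\otimes\Pi_{i_B}^B$ preserves each block $H_{A,\alpha}'\otimes H_{B,\beta}'$. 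Write $|\psi'\rangle=\sum_{\alpha,\beta}\sqrt{q_{\alpha\beta}}\,|\psi'_{\alpha\beta}\rangle$ with $|\psi'_{\alpha\beta}\rangle$ a unit vector in the block, so $\sum_{\alpha\beta}q_{\alpha\beta}=1$. Since $\Pi_i\Pi_j=0$ for exclusive $e_i,e_j$ (they differ in one local outcome, so the corresponding local projectors are orthogonal) and invariance makes this descend to the blocks, the restricted data $X^{(\alpha\beta)}_{ij}:=\langle\psi'_{\alpha\beta}|\Pi_j|_{\alpha\beta}\Pi_i|_{\alpha\beta}|\psi'_{\alpha\beta}\rangle$ (with its zeroth row) is a feasible point of~\eqref{thetaSDP}, so its objective $s_{\alpha\beta}:=\sum_i w_i X^{(\alpha\beta)}_{ii}$ satisfies $s_{\alpha\beta}\le\vartheta(\gexcl,w)$. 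As $\sum_{\alpha\beta}q_{\alpha\beta}s_{\alpha\beta}=\sum_i w_i\langle\psi'|\Pi_i|\psi'\rangle=\vartheta(\gexcl,w)$, convexity forces $s_{\alpha\beta}=\vartheta(\gexcl,w)$ for every $(\alpha,\beta)$ with $q_{\alpha\beta}>0$.

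For such a ``good'' block, $X^{(\alpha\beta)}$ is then an optimal point of~\eqref{thetaSDP}, hence equals $X^{*}$ by uniqueness; this also rules out degenerate blocks, for on a one-dimensional block each $\Pi_i|_{\alpha\beta}$ would be $0$ or the identity, and $0$ is impossible because $X^{*}_{ii}=\|v_i\|^2>0$, while having them all equal to the identity makes $X^{(\alpha\beta)}$ independent of the $A$-index, contradicting B3 (the $\{a_{i_A}\}$ span $\mathbb{C}^2$) — or, for a block that is one-dimensional on one side only, contradicting $\Pi_i|_{\alpha\beta}\Pi_j|_{\alpha\beta}=0$ for an edge $ij$. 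Thus every good block is of the form $H_{A,\alpha}'\otimes H_{B,\beta}'$ with $\dim H_{A,\alpha}'=\dim H_{B,\beta}'=2$ and rank-one local projectors, so Theorem~\ref{NMP} applies inside it (its hypotheses A1, A2 concern only $X^{*}$ and are unchanged), giving unitaries $V_{A,\alpha}:\mathcal{H}_A\to H_{A,\alpha}'$ and $V_{B,\beta}:\mathcal{H}_B\to H_{B,\beta}'$ with $(V_{A,\alpha}\otimes V_{B,\beta})|\psi\rangle=|\psi'_{\alpha\beta}\rangle$ and $(V_{A,\alpha}\otimes V_{B,\beta})|v_i\rangle=\eta_i^{-1}\Pi_i|_{\alpha\beta}|\psi'_{\alpha\beta}\rangle$.

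Finally I would package the blocks: let $\mathcal{K}_A$ have an orthonormal basis $\{|\alpha\rangle\}$ indexed by the block labels that occur in some good pair, define the isometry $V_A:\mathcal{H}_A\otimes\mathcal{K}_A\to\mathcal{H}_A'$ by $V_A(|\phi\rangle\otimes|\alpha\rangle)=V_{A,\alpha}|\phi\rangle$ (and $V_B,\mathcal{K}_B$ analogously), and set $|junk\rangle=\sum_{(\alpha,\beta)\text{ good}}\sqrt{q_{\alpha\beta}}\,|\alpha\rangle_{\mathcal{K}_A}|\beta\rangle_{\mathcal{K}_B}$. Orthogonality of the blocks makes $V_A,V_B$ isometries, and writing $|\psi'\rangle=\sum_{\alpha\beta}\sqrt{q_{\alpha\beta}}(V_{A,\alpha}\otimes V_{B,\beta})|\psi\rangle$ and $|v_i'\rangle=\eta_i^{-1}\Pi_i|\psi'\rangle=\sum_{\alpha\beta}\sqrt{q_{\alpha\beta}}(V_{A,\alpha}\otimes V_{B,\beta})|v_i\rangle$, a direct computation (expanding $|\psi\rangle,|v_i\rangle$ into $A\otimes B$ terms and reordering tensor factors) gives $V_A\otimes V_B(|\psi\rangle\otimes|junk\rangle)=|\psi'\rangle$ and $V_A\otimes V_B(|v_i\rangle\otimes|junk\rangle)=|v_i'\rangle$, i.e.~\eqref{ML1}--\eqref{ML2}. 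The main obstacle I expect is the block analysis of the third step: verifying that restriction to an invariant block is genuinely a feasible SDP point, excluding the degenerate blocks, and then assembling the per-block unitaries into a single pair $(V_A,V_B)$ without introducing block-dependent phases — which is precisely what forces $|junk\rangle$ to carry the weight distribution $q_{\alpha\beta}$ rather than being a fixed product vector.
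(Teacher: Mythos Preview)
Your strategy coincides with the paper's: decompose each local Hilbert space into two-dimensional blocks invariant under that party's projectors (this is exactly the eigenvector construction of $\Pi_{(0,0)}^A\Pi_{(0,1)}^A\Pi_{(0,0)}^A$ in the paper's Step~1, i.e.\ Jordan's lemma), use a convexity argument to see that every block supporting $|\psi'\rangle$ realises the unique optimiser (the paper's Lemma~\ref{BA7}), eliminate the degenerate blocks (Lemma~\ref{BA8}), apply Theorem~\ref{NMP} blockwise, and then assemble. So the architecture is right.

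The gap is in the assembly. When you invoke Theorem~\ref{NMP} on the block $(\alpha,\beta)$ you obtain isometries $V_{A,(\alpha,\beta)}:\mathcal H_A\to H'_{A,\alpha}$ and $V_{B,(\alpha,\beta)}:\mathcal H_B\to H'_{B,\beta}$; you then silently write these as $V_{A,\alpha}$ and $V_{B,\beta}$, i.e.\ you assume that Alice's isometry does not depend on Bob's block index and vice versa. This is not automatic: Theorem~\ref{NMP} gives you \emph{some} pair for each $(\alpha,\beta)$, and even though the range $H'_{A,\alpha}$ of Alice's factor depends only on $\alpha$, the isometry into it need not. The paper devotes its Step~3 to precisely this point, showing that for fixed $j_A$ one has $V_{A,(j_A,j_B)}=\beta_{j_A,j_B,j_B'}V_{A,(j_A,j_B')}$ for a scalar $\beta_{j_A,j_B,j_B'}$ (using that $\Pi_{i_A}^A$ restricted to the $j_A$-th Jordan block is a fixed rank-one projection, together with uniqueness of the SDP optimiser). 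Only after this can one pick a representative $V_{A,j_A}$ and absorb the leftover scalars into the junk state.

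Your final paragraph flags ``block-dependent phases'' but attributes their absorption to the weights $\sqrt{q_{\alpha\beta}}$. That is not enough: with your junk state $\sum_{\alpha\beta}\sqrt{q_{\alpha\beta}}\,|\alpha\rangle|\beta\rangle$ and $V_{A,\alpha}:=V_{A,(\alpha,\beta_0)}$, $V_{B,\beta}:=V_{B,(\alpha_0,\beta)}$, the identity $(V_{A,\alpha}\otimes V_{B,\beta})|\psi\rangle=|\psi'_{\alpha\beta}\rangle$ only holds up to a scalar $c_{\alpha\beta}$ coming from the comparison of $V_{A,(\alpha,\beta_0)}\otimes V_{B,(\alpha_0,\beta)}$ with $V_{A,(\alpha,\beta)}\otimes V_{B,(\alpha,\beta)}$. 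In the paper the junk coefficients are $\beta_{j_A,j_B}^{-1}\alpha_{(j_A,j_B)}^{-1}$, carrying both the weight and the phase; your $\sqrt{q_{\alpha\beta}}$ carries only the former. So the missing ingredient is exactly the paper's Step~3: prove that the per-block local isometries on a given side differ only by a scalar when the other side's block index varies, and then let the junk state absorb those scalars.
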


\begin{lemma}\Label{BA7}
Assume that the vectors $(\Pi_i |\psi'\rangle)_{i \in {\cal I}}$ realize the optimal solution in the SDP \eqref{thetaSDP}.
Assume that a projection $\Pi$ is commutative with 
$\Pi_i$ for any $i \in {\cal I}$.
Also assume that $ \Pi \psi'\neq 0$.
Let $\psi'(\Pi)$ be the normalized vector of $ \Pi \psi'$.
Then, 
the vectors $\Pi_i |\psi'(\Pi)\rangle$ realize the optimal solution in the SDP \eqref{thetaSDP}.
\hfill $\square$\end{lemma}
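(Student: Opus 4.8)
The plan is to pass to the Gram-matrix description underlying the SDP \eqref{thetaSDP}. For a unit vector $|\phi\rangle$, saying that the vectors $(\Pi_i|\phi\rangle)_{i\in{\cal I}_0}$ (with $\Pi_0=I$) \emph{realize the optimal solution} means that the Hermitian matrix $X^{\phi}_{ij}:=\langle\phi|\Pi_j\Pi_i|\phi\rangle$ is feasible for \eqref{thetaSDP} and attains the value $\vartheta(\gexcl,w)$; since by assumption \ref{ass2} the maximizer is unique, this is equivalent to $X^{\phi}=X^{*}$. One first records that feasibility of such an $X^{\phi}$ is automatic: $\Pi_i^2=\Pi_i$ gives $X^{\phi}_{ii}=X^{\phi}_{0i}$, the exclusivity relation $i\sim j$ forces $\Pi_i\Pi_j=0$ and hence $X^{\phi}_{ij}=0$, normalization of $|\phi\rangle$ gives $X^{\phi}_{00}=1$, and a Gram matrix is positive semidefinite. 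So the real content of the lemma is that $X^{\psi'(\Pi)}$ attains the optimal value.

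First I would decompose $|\psi'\rangle=\Pi|\psi'\rangle+(I-\Pi)|\psi'\rangle$ and set $q:=\langle\psi'|\Pi|\psi'\rangle$, which lies in $(0,1]$ by the hypothesis $\Pi|\psi'\rangle\neq0$; the case $q=1$ is trivial since then $|\psi'(\Pi)\rangle=|\psi'\rangle$, so I may assume $0<q<1$. The decisive structural point is the commutation hypothesis: since $\Pi$ commutes with every $\Pi_i$, each $\Pi_i$ leaves both $\mathrm{range}(\Pi)$ and $\mathrm{range}(I-\Pi)$ invariant, so $\Pi_i\Pi|\psi'\rangle$ is orthogonal to $\Pi_j(I-\Pi)|\psi'\rangle$ for all $i,j\in{\cal I}_0$. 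Consequently the Gram matrix $X^{*}$ of $(\Pi_i|\psi'\rangle)_{i\in{\cal I}_0}$ splits as $X^{*}=A+B$ with $A_{ij}:=\langle\psi'|\Pi\Pi_j\Pi_i\Pi|\psi'\rangle$ and $B_{ij}:=\langle\psi'|(I-\Pi)\Pi_j\Pi_i(I-\Pi)|\psi'\rangle$, and both $A$ and $B$ are themselves positive semidefinite Gram matrices.

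Next I would observe that $\tfrac1q A$ is exactly the matrix $X^{\psi'(\Pi)}$ associated with $|\psi'(\Pi)\rangle=q^{-1/2}\Pi|\psi'\rangle$, and that it is feasible for \eqref{thetaSDP}: the diagonal, first-row and exclusivity constraints are inherited verbatim from those of $X^{*}$, $(\tfrac1q A)_{00}=\tfrac1q\langle\psi'|\Pi|\psi'\rangle=1$, and $\tfrac1q A\succeq0$. The same argument makes $\tfrac1{1-q}B$ feasible. Feasibility of these two matrices gives $\sum_{i\in{\cal I}}w_iA_{ii}\le q\,\vartheta(\gexcl,w)$ and $\sum_{i\in{\cal I}}w_iB_{ii}\le(1-q)\,\vartheta(\gexcl,w)$; adding them and using that $(\Pi_i|\psi'\rangle)_{i\in{\cal I}_0}$ itself realizes the optimum, i.e.\ $\sum_{i\in{\cal I}}w_iX^{*}_{ii}=\sum_{i\in{\cal I}}w_i(A_{ii}+B_{ii})=\vartheta(\gexcl,w)$, forces equality in both. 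In particular $\sum_{i\in{\cal I}}w_i(\tfrac1q A)_{ii}=\vartheta(\gexcl,w)$, so $X^{\psi'(\Pi)}=\tfrac1q A$ is an optimal feasible point of \eqref{thetaSDP}; by uniqueness (assumption \ref{ass2}) it equals $X^{*}$, which is precisely the statement that the vectors $\Pi_i|\psi'(\Pi)\rangle$ realize the optimal solution.

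The step I expect to require the most care is the orthogonal splitting $X^{*}=A+B$ into two rescaled feasible points: it is exactly the commutation hypothesis that makes $\mathrm{range}(\Pi)$ and $\mathrm{range}(I-\Pi)$ invariant under the $\Pi_i$ and thereby annihilates the cross terms. Once that is set up, the remainder is routine verification of the SDP constraints together with the short averaging argument above.
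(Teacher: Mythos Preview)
Your proof is correct and follows essentially the same approach as the paper's: decompose $|\psi'\rangle$ along $\Pi$ and $I-\Pi$, use commutativity of $\Pi$ with the $\Pi_i$ to split the objective value as a convex combination (with weights $q$ and $1-q$) of two feasible values, and conclude by the averaging argument that both pieces must be optimal. Your presentation is somewhat more explicit about the Gram-matrix formulation and the verification of feasibility, and you handle the edge case $q=1$ that the paper leaves implicit, but the underlying idea is identical.
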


\begin{proofof}{Lemma \ref{BA7}}
\begin{align}
&\sum_{i}\langle \psi' |\Pi_i|\psi'\rangle
=
\sum_{i}\langle \Pi \psi' |\Pi_i|\Pi \psi'\rangle
+\sum_{i}\langle (I-\Pi)\psi' |\Pi_i|(I-\Pi)\psi'\rangle \\
=&
\|\Pi \psi'\|^2\sum_{i}\langle \psi'(\Pi) |\Pi_i|\psi'(\Pi) \rangle
+\|(I-\Pi) \psi'\|^2\sum_{i}\langle \psi' (I-\Pi)|\Pi_i|\psi'(I-\Pi)\rangle .\Label{APO}
\end{align}

Since the vectors $(\Pi_i |\psi'(\Pi)\rangle)_{i \in {\cal I}}$
and 
the vectors $(\Pi_i |\psi'(I-\Pi)\rangle)_{i \in {\cal I}}$
satisfy the condition of the SDP \eqref{thetaSDP},
\eqref{APO} shows that 
either 
the vectors $(\Pi_i |\psi'(\Pi)\rangle)_{i \in {\cal I}}$
or 
the vectors $(\Pi_i |\psi'(I-\Pi)\rangle)_{i \in {\cal I}}$
realizes the optimal solution in the SDP \eqref{thetaSDP}.
Hence, the remaining one of 
the vectors $(\Pi_i |\psi'(\Pi)\rangle)_{i \in {\cal I}}$
and
the vectors $(\Pi_i |\psi'(I-\Pi)\rangle)_{i \in {\cal I}}$
also
realizes the optimal solution in the SDP \eqref{thetaSDP}.
\end{proofof}

Considering the contraposition of Lemma \ref{BA7}, we have the following lemma.

\begin{lemma}\Label{BA8}
Assume that the vectors $\Pi_i |\psi'\rangle$ realize the optimal solution in the SDP \eqref{thetaSDP}.
Assume that a projection $\Pi$ is commutative with 
$\Pi_i$ for any $i \in {\cal I}$.
Also, there exists an element $j \in {\cal I}$ such that 
$\Pi \Pi_j=0$.
Then, $ \Pi \psi'=0$.
\hfill $\square$\end{lemma}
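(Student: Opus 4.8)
The plan is to establish Lemma~\ref{BA8} by contradiction, using Lemma~\ref{BA7} together with the observation that the unique optimal behavior assigns strictly positive probability to each event indexed by ${\cal I}$. First I would suppose $\Pi\psi'\neq 0$. Then the three hypotheses of Lemma~\ref{BA7} all hold: the vectors $\Pi_i|\psi'\rangle$ realize the optimal solution of the SDP~\eqref{thetaSDP}, the projection $\Pi$ is commutative with every $\Pi_i$, and $\Pi\psi'\neq 0$. Hence Lemma~\ref{BA7} applies and tells us that the vectors $\Pi_i|\psi'(\Pi)\rangle$, where $\psi'(\Pi)$ is the normalization of $\Pi\psi'$, also realize the optimal solution of~\eqref{thetaSDP}.

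Next I would look at the $j$-th diagonal entry of the matrix $Y_{ik}:=\langle\psi'(\Pi)|\Pi_k\Pi_i|\psi'(\Pi)\rangle$ associated with this new realization. Since $\Pi$ and $\Pi_j$ are Hermitian projections with $\Pi\Pi_j=0$, taking adjoints gives $\Pi_j\Pi=0$, so $\Pi_j\Pi\psi'=0$ and therefore $Y_{jj}=\|\Pi_j|\psi'(\Pi)\rangle\|^2=\|\Pi_j\Pi\psi'\|^2/\|\Pi\psi'\|^2=0$. On the other hand, $i,k\mapsto Y_{ik}$ is a feasible point of~\eqref{thetaSDP}: one has $Y_{ii}=Y_{0i}$ because $\Pi_i^2=\Pi_i$ and $\Pi_0=I$, one has $Y_{ik}=0$ for $i\sim k$ because mutually exclusive events are represented by orthogonal projections, and $Y_{00}=1$ by normalization.

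Finally I would invoke the uniqueness hypothesis~\ref{ass2}: because $(\Pi_i|\psi'(\Pi)\rangle)_{i\in{\cal I}}$ realizes the optimal solution, the feasible matrix $Y$ must coincide with the unique maximizer $X^*$ of~\eqref{thetaSDP}, which in the setting at hand satisfies $X^*_{jj}=\eta_j^2\langle v_j,v_j\rangle=\eta_j^2>0$ since $\eta_j>0$ and $v_j$ is a unit vector. This contradicts $Y_{jj}=0$, so the assumption $\Pi\psi'\neq 0$ cannot hold, i.e.\ $\Pi\psi'=0$, which is the claim. I do not expect a genuine obstacle; the only step requiring a little care is upgrading ``realizes the optimal solution'' to ``has Gram matrix equal to $X^*$'', which is precisely where the elementary feasibility check above and the uniqueness of the SDP optimizer enter, everything else being immediate from Lemma~\ref{BA7} and the strict positivity $\eta_j>0$.
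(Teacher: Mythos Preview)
Your proof is correct and follows the same route the paper indicates: the paper simply states that Lemma~\ref{BA8} is obtained ``considering the contraposition of Lemma~\ref{BA7}'', and your argument is precisely that contrapositive spelled out in full. You make explicit the step the paper leaves implicit, namely that the condition $\Pi\Pi_j=0$ forces $Y_{jj}=0$ for the Gram matrix of the projected realization, which is incompatible with the standing uniqueness assumption and the fact that $X^*_{jj}=\eta_j^2>0$.
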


\begin{proofof}{Theorem \ref{APS}}

\noindent{\bf Step 1:}\quad
Let $P_{A,(0,0),(0,1) }$ be the projection 
to the eigenspace of 
$\Pi_{(0,0)}^A \Pi_{(0,1)}^A \Pi_{(0,0)}^A $ with one eigenvalue.
Let $P_{A,(0,0),(1,1) }$ be the projection 
to the eigenspace of 
$\Pi_{(0,0)}^A \Pi_{(0,1)}^A \Pi_{(0,0)}^A $ with zero eigenvalue.
\if0
Let $P_{A,(1,0),(0,1) }$ be the projection 
to the eigenspace of 
$\Pi_{(1,0)}^A \Pi_{(0,1)}^A \Pi_{(1,0)}^A $ with one eigenvalue.
Let $P_{A,(1,0),(1,1) }$ be the projection 
to the eigenspace of 
$\Pi_{(1,0)}^A \Pi_{(0,1)}^A \Pi_{(1,0)}^A $ with zero eigenvalue.
\fi
Let $\{ e_{j_A}^A\}$ be orthogonal basis corresponding to orthogonal eigenvectors of 
$\Pi_{(0,0)}^A \Pi_{(0,1)}^A \Pi_{(0,0)}^A $ with other eigenvalues.
We define $f_{j_A}^A$ as the normalized vector of $ \Pi_{(0,1)}^A e_{j_A}^A$.
For $j_A\neq j_A'$, $f_{j_A}^A$ is orthogonal to $f_{j_A'}^A$
due to the choice of $\{ e_{j_A}^A\}$.
We define $g_{j_A}^A$ as the normalized vector of $
f_{j_A}^A- \langle e_{j_A}^A,f_{j_A}^A \rangle e_{j_A}^A$.
$g_{j_A}^A$ belongs to ${\cal H}^A_{(1,0)}$.
For $j_A\neq j_A'$, $g_{j_A}^A$ is orthogonal to $g_{j_A'}^A$
because $e_{j_A}^A$ and $f_{j_A}^A$ are orthogonal to $e_{j_A'}^A$ and $f_{j_A'}^A$,
respectively.
We define the projection 
$\bar\Pi_{j_A}^A:= |e_{j_A}^A\rangle \langle e_{j_A}^A|+|g_{j_A}^A\rangle \langle g_{j_A}^A|$.
For $j_A\neq j_A'$, we have
$\bar \Pi_{j_A}^A \bar\Pi_{j_A'}^A=0$.
$\bar \Pi_{j_A}^A$ is commutative with $\Pi_{(0,0)}^A$, $\Pi_{(1,0)}^A$, $\Pi_{(0,1)}^A$, and $\Pi_{(1,1)}^A$.
We define 
$\Pi^A:= \sum_{j_A} \bar\Pi_{j_A}^A$.
Also, the projections $\Pi^A$, $P_{A,(0,0),(0,1) }$, and $P_{A,(0,0),(1,1) }$
is commutative with $\Pi_{(0,0)}^A$, $\Pi_{(1,0)}^A$, $\Pi_{(0,1)}^A$, and $\Pi_{(1,1)}^A$.
Since 
$(I- \Pi^A-P_{A,(0,0),(0,1) }-P_{A,(0,0),(1,1) })\Pi_{(0,0)}^A =0$
$P_{A,(0,0),(0,1) }\Pi_{(1,1)}^A =0$, and $P_{A,(0,0),(1,1) }\Pi_{(0,1)}^A =0$,
Lemma \ref{BA8} implies that
$(I- \Pi^A-P_{A,(0,0),(0,1) }-P_{A,(0,0),(1,1) })\psi'=0$,
$P_{A,(0,0),(0,1) }\psi'=0$, and
$P_{A,(0,0),(1,1) }\psi'=0$.
Hence, we have $ \Pi^A \psi'=\psi'$.

In the same way, we define the projections $ \bar\Pi_{j_B}^B$ and $\bar \Pi^B$.
We define the projection $\bar\Pi_{(j_A,j_B)}:= \bar\Pi_{j_A}^A \bar\Pi_{j_B}^B$.
$\bar\Pi_{(j_A,j_B)}$ is commutative with $\Pi_i$ for $i \in {\cal I}$.
When $\bar\Pi_{(j_A,j_B)}\psi' \neq 0$, we define
$ \psi_{(j_A,j_B)}:= \alpha_{(j_A,j_B)}\bar\Pi_{(j_A,j_B)}\psi' $,
where $\alpha_{(j_A,j_B)}:=\|\bar\Pi_{(j_A,j_B)}\psi' \|^{-1}$.

\noindent{\bf Step 2:}\quad
Due to Lemma \ref{BA7}, 
the vectors $\Pi_i\psi_{(j_A,j_B)}= \Pi_i\bar\Pi_{(j_A,j_B)}\psi_{(j_A,j_B)}$ 
realize the optimal solution in the SDP \eqref{thetaSDP}.
Also, $\Pi_i\bar\Pi_{(j_A,j_B)}$ is rank-one.
Hence, we can apply Theorem \ref{NMP} to
the vectors $\Pi_i\bar\Pi_{(j_A,j_B)}\psi_{(j_A,j_B)}$.
Thus, there exists isometries 
$V_{A,(j_A,j_B)}: {\cal H}_A \to \im \Pi_{j_A}^A$
and $V_{B,(j_A,j_B)}: {\cal H}_B\to \im \Pi_{j_B}^B$ 
such that
\begin{align}
V_{A,(j_A,j_B)} \otimes V_{B,(j_A,j_B)}\psi &=
\psi_{(j_A,j_B)} . \\
\eta_i (V_{A,(j_A,j_B)} a_{i_A})\otimes(V_{B,(j_A,j_B)} b_{i_B})
&=V_{A,(j_A,j_B)} \otimes V_{B,(j_A,j_B)} (\eta_i a_{i_A}\otimes b_{i_B}) \\
&= \Pi_i \bar\Pi_{(j_A,j_B)}\psi_{(j_A,j_B)}\\
&= \Pi_{i_A}^A\bar\Pi_{j_A}^A\otimes \Pi_{i_B}^B\bar\Pi_{j_B}^B\psi_{(j_A,j_B)}.
\end{align}
As shown in Step 3, for $j_B\neq j_B'$, we have
$V_{A,(j_A,j_B)}=\beta_{j_A,j_B,j_B'} V_{A,(j_A,j_B')}$
with a constant $\beta_{j_A,j_B,j_B'}$
when $\Pi_{(j_A,j_B)}\psi' \neq 0$ and $\Pi_{(j_A,j_B')}\psi' \neq 0$.
That is, 
\begin{align}
V_{A,(j_A,j_B)} \otimes V_{B,(j_A,j_B')}\psi &=
\beta_{j_A,j_B,j_B'}\psi_{(j_A,j_B')} . 
\end{align}

Then, for $j_A$, we choose an element $j_B$ such that 
$\Pi_{(j_A,j_B)}\psi' \neq 0$.
Then, we define $V_{A,j_A}:=V_{A,(j_A,j_B)}$.
Thus, for elements $j_A'$ and $j_B'$, there exists an constant 
$\beta_{j_A',j_B'}$ such that
\begin{align}
V_{A,j_A'} \otimes V_{B,j_B'}\psi =& \beta_{j_A',j_B'}\psi_{(j_A',j_B')} 
= \beta_{j_A',j_B'} \alpha_{(j_A',j_B')}\bar\Pi_{(j_A',j_B')}\psi' \\
=& \beta_{j_A',j_B'} \alpha_{(j_A',j_B')}\bar\Pi_{j_A'}^A \bar\Pi_{j_B'}^B \psi' .
\end{align}
Hence, we have
\begin{align}
\beta_{j_A',j_B'}^{-1} \alpha_{(j_A',j_B')}^{-1} V_{A,j_A'} \otimes V_{B,j_B'}\psi =
 \bar\Pi_{j_A'}^A \bar\Pi_{j_B'}^B \psi' .
\end{align}
We define the spaces ${\cal K}_A$ and ${\cal K}_B$
spanned by $\{|j_A\rangle\} $ and $\{|j_B\rangle\} $, respectively.
We define the junk state on ${\cal K}_A\otimes {\cal K}_B$ as
\begin{align}
|junk\rangle:= \sum_{ j_A,j_B} \beta_{j_A,j_B}^{-1} \alpha_{(j_A,j_B)}^{-1}
|j_A,j_B\rangle.
\end{align}
We define the isometries $V_A: {\cal H}_A\otimes {\cal K}_A\to {\cal H}_A' $ 
and $V_B: {\cal H}_B\otimes {\cal K}_B\to {\cal H}_B'$ as
\begin{align}
V_A:= \sum_{j_A} V_{A,j_A} \langle j_A|,\quad
V_B:= \sum_{j_B} V_{B,j_B} \langle j_B|.
\end{align}
The isometries $V_A$ and $V_B$ satisfy conditions \eqref{ML1} and \eqref{ML2}.

\noindent{\bf Step 3:}\quad
We show the following fact;
For $j_B\neq j_B'$, we have
$V_{A,(j_A,j_B)}=\beta_{j_A,j_B,j_B'} V_{A,(j_A,j_B')}$
with a constant $\beta_{j_A,j_B,j_B'}$
when $\Pi_{(j_A,j_B)}\psi' \neq 0$ and $\Pi_{(j_A,j_B')}\psi' \neq 0$.

We define $a_{i_A,j_A,j_B}:=V_{A,(j_A,j_B)} a_{i_A}$.
Then, we have
\begin{align}
\Pi_i \bar\Pi_{(j_A,j_B')}\psi_{(j_A,j_B')}
= \Pi_{i_A}^A\bar\Pi_{j_A}^A\otimes \Pi_{i_B}^B\bar\Pi_{j_B'}^B\psi_{(j_A,j_B')}.
\end{align}
The above vector is a constant times of 
$\eta_i a_{i_A,j_A,j_B}\otimes b_{i_B,j_A,j_B'}$.
Also, 
the vectors $(\eta_i a_{i_A,j_A,j_B}\otimes b_{i_B,j_A,j_B'})_i$
and the vectors $(\Pi_i \bar\Pi_{(j_A,j_B')}\psi_{(j_A,j_B')})_i$
are the unique optimal solution in the SDP \eqref{thetaSDP}.
Hence, there exists a constant $\beta_{j_A,j_B,j_B'}$ such that 
$\eta_i a_{i_A,j_A,j_B}\otimes b_{i_B,j_A,j_B'}=\beta_{j_A,j_B,j_B'} 
\Pi_i \bar\Pi_{(j_A,j_B')}\psi_{(j_A,j_B')}$,
which is the desired statement.
\end{proofof}

\subsection{Tripartite case}
We assume that the unique optimal maximizer 
$X^*=(X_{ij})$ is given by $\eta_i \eta_j\langle v_j, v_i\rangle$ with the following;
For $i=(i_A,i_B,i_C)\in {\cal I}$, 
\begin{align}
 v_{i} = a_{i_A} \otimes b_{i_B}\otimes c_{i_C}, 
\label{MA8}
\end{align}
where $a_{i_A} \in {\cal H}_A=\mathbb{C}^{d_A}$, 
$b_{i_B} \in {\cal H}_B=\mathbb{C}^{d_B}$,
$c_{i_C} \in {\cal H}_C=\mathbb{C}^{d_C}$.
Also, for simplicity, $a_{i_A} $, $ b_{i_B}$, and $ c_{i_C}$ 
are assumed to be normalized and $\eta_i>0$.

Now, we consider a state $|\psi'\rangle $ on 
${\cal H}_A'\otimes {\cal H}_B'\otimes {\cal H}_C'$,
and 
projections $\Pi_{i_A}^A$, $\Pi_{i_B}^B$, $\Pi_{i_C}^C$ 
on ${\cal H}_A'$, ${\cal H}_B'$, and ${\cal H}_C'$.
Then, we define the projection 
$\Pi_i:=\Pi_{i_A}^A\otimes \Pi_{i_B}^B\otimes \Pi_{i_C}^B$.

In the following, 
we discuss how 
the state $ |\psi'\rangle$ is locally converted to 
$ |\psi\rangle$ when
the vectors $\Pi_i |\psi'\rangle$ realize the optimal solution in the SDP \eqref{thetaSDP}.
We define $|v_i'\rangle:= \eta_i^{-1} \Pi_{i}|\psi'\rangle$.

\subsubsection{Rank-one case}
We consider the case that the ranks of the projections 
$\Pi_{i_A}^A$, $\Pi_{i_B}^B$ and $\Pi_{i_C}^C$ 
are one.
We introduce the following conditions.

\if0

Let $\{v_i\}_{i\in {\cal I}_0} \in \mathbb{C}^{d_Ad_B}$  and 
$\{v_i'\}_{i\in {\cal I}_0} \in \mathbb{C}^{d_Cd_4}$ be two distinct set of gram vectors of the unique maximizer $X^*$, of the following restricted form (we know they do exist) :
For $i=(i_A,i_B,i_C)\in {\cal I}$, 
\[ v_{i} = a_{i_A} \otimes b_{i_B}\otimes c_{i_C}, 
\]
\[ v_{i}' = a_{i_A}' \otimes b_{i_B}'\otimes c_{i_C}', 
\] where $a_{i_A} \in \mathbb{C}^{d_A}$, $b_{i_B} \in \mathbb{C}^{d_B}$, 
$c_{i_C} \in \mathbb{C}^{d_C}$, 
$a_{i_A}' \in \mathbb{C}^{d_4}$, $b_{i_B}' \in \mathbb{C}^{d_5}$.  
 $c_{i_C}' \in \mathbb{C}^{d_6}$.  
Also, we assume that the vectors $\{v_i\}_{i\in {\cal I}_0}$ 
span the vector space $\mathbb{C}^{d_Ad_Bd_C}$.
\fi
\begin{definition}\Label{D1}
Three distinct elements $i,j,k\in {\cal I}$ are called {\it linked}
when the following two conditions holds.
\begin{description}
\item[C1]
The relations 
$\langle v_i,v_{k}\rangle \neq 0$,
$\langle v_i,v_{j}\rangle \neq 0$, and
$\langle v_j,v_{k}\rangle \neq 0$
hold.
\item[C2]
$v_i,v_{j}$ shares a $t_{i,j}-$th common element for $t_{i,j} \in \{A,B,C\}$.
Other components of $v_i,v_{j}$ are different.
That is, when $t_{i,j}=A$, $i_A=j_A$,$i_B\neq j_B$,and $i_C\neq j_C$.
$v_i$ and $v_{k}$ share a $t_{i,k}-$th common element for 
$t_{i,k} \in \{A,B,C\}\setminus \{t_{i,j}\}$.
$v_j,v_{k}$ shares a $t_{j,k}-$th common element for 
$t_{j,k} \in \{A,B,C\}\setminus \{t_{i,j},t_{i,k}\}$.
In this case, there exist elements 
$x_A,x_A',x_B,x_B',x_C,x_C'$ such that
$i,j,k \in \{x_A,x_A'\}\times \{x_B,x_B'\} \times \{x_C,x_C'\}$.
\end{description}
In addition, 
two distinct elements $x_A,x_A'$ for index of a vectors of $\mathbb{C}^{d_A}$
are called {\it connected}
when there exist three linked elements $i,j,k\in {\cal I}$ 
such that the first components of $i,j,k\in {\cal I}$ are $x_A,x_A'$.
\hfill $\square$\end{definition}

For $i_B,i_C$, we use notation
\begin{align}
\psi_{(i_B,i_C)}:= b_{i_B} \otimes c_{i_C}.
\end{align}
Then, we introduce the following conditions for the optimal maximizer given in \eqref{MA8}.

\begin{description}
\item[A5]
The vectors $\{v_i\}_{i\in {\cal I}_0}$ 
span the vector space ${\cal H}_A\otimes {\cal H}_B\otimes {\cal H}_C$.

\item[A6]
There exist
a subset ${\cal I}_A$ of indexes of the space 
${\cal H}_A$ with $|{\cal I}_A|=d_A$
and 
$d_A$ sets ${\cal I}_{BC,i_A}$ for $i_A \in {\cal I}_A$ of indexes of the space 
${\cal H}_B\otimes {\cal H}_C$ 
to satisfy the following conditions.
The set $\{a_{i_A}\}_{i_A\in {\cal I}_A}$
spans the space ${\cal H}_A$.
The set $\{\psi_{i_{BC}}\}_{i_{BC}\in {\cal I}_{BC,i_A}}$
spans the space ${\cal H}_B\otimes {\cal H}_C$ and 
${\cal I}_0= \cup_{i_A \in {\cal I}_A} (\{i_A\} \times {\cal I}_{BC,i_A})$.
We consider the graph $G_A$ with the set ${\cal I}_A$ of vertecies 
such that 
the edges are given as the the pair of 
all connected elements in ${\cal I}_A$ in the sense of the end of Definition \ref{D1}.
The graph $G_A$ is not divided into two disconnected parts.

\item[A7]
The vectors
$\{b_{i_B}\otimes c_{i_C}\}_{
(i_B,i_C)\in \cup_{i_A \in {\cal I}_A}{\cal I}_{BC,i_A}}$ 
satisfy condition A2 by substituting $c_{i_C}$ into $a_{i_A}$.
That is, there exist a subset ${\cal I}_B$ of the second indexes and
subsets ${\cal I}_{C,i_B}$ of the third indexes such that
they satisfy conditions B1, B2, B3, and B4.
We denote the graph defined in this condition by $G_B$

\end{description}

\begin{example}
We can check that Mermin Self-testing satisfies Conditions A5, A6, and A7 as follows.
In this example, $a_O,b_O,c_O$ means $|O\rangle$.
This notation is applied to $Z,P,M$.

We choose the subset ${\cal I}_A:=\{ O,P \}$.
Then, we have
\begin{align}
{\cal I}_{BC,O} &=\{ (O,O),(Z,Z),(M,P),(P,M)\},\label{MI1}\\
{\cal I}_{BC,P} &=\{ (Z,P),(P,Z),(O,M),(M,O)\}.\label{MI2}
\end{align}
Two elements $O,P \in {\cal I}_A$ are connected in the sense of the end of Definition \ref{D1} by choosing 
$\{i,j,k\}=\{ (P,Z,P),(O,Z,Z), (O,M,P) \}$.
Based on \eqref{MI1} and \eqref{MI2}, 
we choose the subsets ${\cal I}_B$, ${\cal I}_{C,Z}$, and ${\cal I}_{C,P}$ as
\begin{align}
{\cal I}_B:= \{Z,P\},~ {\cal I}_{C,Z}:=\{Z,P\},~
{\cal I}_{C,P}:=\{Z,M\}.
\end{align}
The subsets ${\cal I}_B$, ${\cal I}_{C,Z}$, and ${\cal I}_{C,P}$
satisfy conditions B1, B2, B3, and B4.
\end{example}

\begin{lemma}\Label{L4}
Assume that 
$i,j,k \in {\cal I}_0$ are connected by one edge, i.e., satisfy 
conditions C1 and C2.
We choose $x_A,x_A',x_B,x_B',x_C,x_C'$ in the way as Condition C2.
We consider three normalized vectors
$v_i',v_j',v_k'$, where
\begin{align}
v_l'=a_{l_A}\otimes b_{l_B}\otimes c_{l_C}
\end{align}
for $l=i,j,k$.
We assume that
$\langle v_l,v_{l}\rangle =\langle v_l',v_{l}'\rangle $
for $l,l'=i,j,k$.
Then, we have 
\begin{align}
\langle a_{x_A},a_{x_A'}\rangle &=\langle a_{x_A}',a_{x_A'}'\rangle\Label{E1} \\
\langle b_{x_B},b_{x_B'}\rangle &=\langle b_{x_B}',b_{x_B'}'\rangle\Label{E2} \\
\langle c_{x_C},c_{x_C'}\rangle &=\langle c_{x_C}',c_{x_C'}'\rangle \Label{E3}
\end{align}
or
\begin{align}
\langle a_{x_A},a_{x_A'}\rangle &=-\langle a_{x_A}',a_{x_A'}'\rangle \Label{E4}\\
\langle b_{x_B},b_{x_B'}\rangle &=-\langle b_{x_B}',b_{x_B'}'\rangle \Label{E5}\\
\langle c_{x_C},c_{x_C'}\rangle &=-\langle c_{x_C}',c_{x_C'}'\rangle .\Label{E6}
\end{align}\hfill $\square$
\end{lemma}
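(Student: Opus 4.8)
The plan is to collapse the statement to a three-scalar algebraic identity. The first step is to put $v_i,v_j,v_k$ into a normal form dictated by condition C2 of Definition~\ref{D1}. Since $t_{i,j},t_{i,k},t_{j,k}$ are pairwise distinct they are exactly $\{A,B,C\}$ in some order, so after relabelling the three tensor factors and the triple $(i,j,k)$ — operations under which both the hypothesis ($\langle v_l,v_{l'}\rangle=\langle v_l',v_{l'}'\rangle$ for all $l,l'\in\{i,j,k\}$) and the conclusion \eqref{E1}--\eqref{E6} are symmetric — I may assume $t_{i,j}=A$, $t_{i,k}=B$, $t_{j,k}=C$; that is, $i=(x_A,x_B,x_C)$, $j=(x_A,x_B',x_C')$, $k=(x_A',x_B,x_C')$, with $x_A\neq x_A'$, $x_B\neq x_B'$, $x_C\neq x_C'$ forced by the ``other components differ'' clause of C2. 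Correspondingly $v_i'=a_{x_A}'\otimes b_{x_B}'\otimes c_{x_C}'$, and similarly for $v_j',v_k'$.

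Next I would record the three off-diagonal overlaps. Writing $\alpha=\langle a_{x_A},a_{x_A'}\rangle$, $\beta=\langle b_{x_B},b_{x_B'}\rangle$, $\gamma=\langle c_{x_C},c_{x_C'}\rangle$, and using that all the $a,b,c$ factors are unit vectors (so a shared factor contributes a $1$), one gets $\langle v_i,v_j\rangle=\beta\gamma$, $\langle v_i,v_k\rangle=\alpha\gamma$, $\langle v_j,v_k\rangle=\alpha\beta$, together with the same three identities for the primed scalars $\alpha',\beta',\gamma'$ defined analogously from $a_{x_A}',b_{x_B}',c_{x_C}',\dots$. Condition C1 gives that the three overlaps are nonzero, hence $\alpha,\beta,\gamma\neq0$ and $\alpha',\beta',\gamma'\neq0$. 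The hypothesis then reads precisely $\beta\gamma=\beta'\gamma'$, $\alpha\gamma=\alpha'\gamma'$, $\alpha\beta=\alpha'\beta'$.

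The conclusion follows by elementary manipulation: the product of any two of these equations divided by the third gives one of $\alpha^2=\alpha'^2$, $\beta^2=\beta'^2$, $\gamma^2=\gamma'^2$, so $\alpha'=\epsilon_A\alpha$, $\beta'=\epsilon_B\beta$, $\gamma'=\epsilon_C\gamma$ with $\epsilon_A,\epsilon_B,\epsilon_C\in\{+1,-1\}$; feeding these back into $\beta\gamma=\beta'\gamma'$ and its two companions and cancelling the nonzero products forces $\epsilon_B\epsilon_C=\epsilon_A\epsilon_C=\epsilon_A\epsilon_B=1$, hence $\epsilon_A=\epsilon_B=\epsilon_C$. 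Common value $+1$ yields \eqref{E1}--\eqref{E3}; common value $-1$ yields \eqref{E4}--\eqref{E6}.

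The one point that genuinely needs care — and which I would flag as the main obstacle — is that this sign dichotomy is a real-vector phenomenon: over $\mathbb{C}$ the same three equations only determine $(\alpha',\beta',\gamma')$ up to a common phase, $(\alpha',\beta',\gamma')=(e^{i\phi}\alpha,e^{i\phi}\beta,e^{-i\phi}\gamma)$, not up to $\pm$. I would handle this by observing — as is anyway the case throughout the applications — that the Gram vectors may be taken real, since the unique optimiser $X^*$ of the SDP \eqref{thetaSDP} is a real symmetric positive semidefinite matrix and hence admits a real Gram decomposition, and the primed family is constructed the same way; the computation above is then carried out over $\mathbb{R}$, where $\langle v_j,v_k\rangle=\alpha\beta$ exactly (no conjugation) and the argument goes through verbatim. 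The only remaining work is the bookkeeping of the three symmetric cases of C2, which is absorbed by the relabelling in the first step.
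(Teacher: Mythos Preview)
Your proof is correct and follows essentially the same route as the paper's: reduce to a normal form via C2, read off the three pairwise products $\alpha\beta,\alpha\gamma,\beta\gamma$, and multiply two and divide by the third to get $\alpha^2=\alpha'^2$ etc., whence the common sign. Your explicit flag about the real-versus-complex issue (that over $\mathbb{C}$ the three product equations determine the scalars only up to a phase, so one must take real Gram vectors of the real symmetric optimiser $X^*$) is a genuine point that the paper's own proof silently assumes; your resolution via a real Gram decomposition is the right fix.
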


\begin{proof}
For simplicity, without loss of generality, we assume that
\begin{align}
i=(x_A',x_B,x_C),~
j=(x_A,x_B',x_C),~
k=(x_A,x_B,x_C').
\end{align}
Since
\begin{align}
\langle v_i,v_j \rangle=\langle v_i',v_j' \rangle,~
\langle v_i,v_k \rangle=\langle v_i',v_k' \rangle,~
\langle v_k,v_j \rangle=\langle v_k',v_j' \rangle,
\end{align}
we have
\begin{align}
\langle a_{x_A},a_{x_A'}\rangle \langle b_{x_B},b_{x_B'}\rangle
&=\langle a_{x_A}',a_{x_A'}'\rangle \langle b_{x_B}',b_{x_B'}'\rangle, \\
\langle a_{x_A},a_{x_A'}\rangle \langle c_{x_C},c_{x_C'}\rangle
&=\langle a_{x_A}',a_{x_A'}'\rangle \langle c_{x_C}',c_{x_C'}'\rangle, \\
\langle b_{x_B},b_{x_B'}\rangle \langle c_{x_C},c_{x_C'}\rangle 
&=\langle b_{x_B}',b_{x_B'}'\rangle \langle c_{x_C}',c_{x_C'}'\rangle .
\end{align}
Hence,
\begin{align}
&\langle a_{x_A},a_{x_A'}\rangle^2 \\
=&
(\langle a_{x_A},a_{x_A'}\rangle \langle b_{x_B},b_{x_B'}\rangle)
(\langle a_{x_A},a_{x_A'}\rangle \langle c_{x_C},c_{x_C'}\rangle)
(\langle b_{x_B},b_{x_B'}\rangle \langle c_{x_C},c_{x_C'}\rangle)^{-1}\\
=&
(\langle a_{x_A}',a_{x_A'}'\rangle \langle b_{x_B}',b_{x_B'}'\rangle)
(\langle a_{x_A}',a_{x_A'}'\rangle \langle c_{x_C}',c_{x_C'}'\rangle)
(\langle b_{x_B}',b_{x_B'}'\rangle \langle c_{x_C}',c_{x_C'}'\rangle)^{-1} \\
=&
\langle a_{x_A}',a_{x_A'}'\rangle^2 ,
\end{align}
which implies \eqref{E1} or \eqref{E4}.
When \eqref{E1}, we have \eqref{E2} and \eqref{E3}.
When \eqref{E3}, we have \eqref{E4} and \eqref{E5}.
\end{proof}

\begin{theorem}\Label{TH9}
Assume that the optimal maximizer given in \eqref{MA8} satisfies
conditions A5, A6, and A7, and
the vectors $(\Pi_i |\psi'\rangle)_{i \in {\cal I}}$ realize the optimal solution in the SDP \eqref{thetaSDP}.
In addition, the ranks of the projections 
$\Pi_{i_A}^A$, $\Pi_{i_B}^B$, and $\Pi_{i_C}^C$ 
are assumed to be one.

Then, there exist isometries 
$V_A: {\cal H}_A \to {\cal H}_A'$,
$V_B: {\cal H}_B\to {\cal H}_B'$, 
and $V_C: {\cal H}_C\to {\cal H}_C'$ 
such that
\begin{align}
V_A \otimes V_B\otimes V_C|\psi \rangle=&
|\psi'\rangle , \\
V_A \otimes V_B\otimes V_C|v_i \rangle=&
|v_i'\rangle ,
\end{align}
for $i \in {\cal I}$.
\end{theorem}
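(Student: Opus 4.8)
The plan is to deduce Theorem~\ref{TH9} from two applications of the bipartite rank-one result Theorem~\ref{NMP} --- one separating the $A$-system from the joint $BC$-system, one splitting $BC$ into $B$ and $C$ --- the only genuinely new tool being Lemma~\ref{L4}, which repairs the failure, in a threefold tensor product, of the bipartite mechanism whereby two product vectors that agree in one factor let one read off the overlap of the remaining factor. First I would produce a global isometry: since $(\Pi_i|\psi'\rangle)_{i\in{\cal I}}$ realizes the maximum of the unique-optimizer SDP~\eqref{thetaSDP}, the Gram matrix of the vectors $v_i'=\eta_i^{-1}\Pi_i|\psi'\rangle$ (with $v_0'=|\psi'\rangle$) equals the optimizer $X^*$, hence equals the Gram matrix of the reference vectors $v_i$ of~\eqref{MA8} (with $v_0=|\psi\rangle$). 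A Gram decomposition then gives an isometry $V$ with $Vv_i=v_i'$ for all $i\in{\cal I}_0$; condition A5 makes $\{v_i\}_{i\in{\cal I}_0}$ span ${\cal H}_A\otimes{\cal H}_B\otimes{\cal H}_C$, so $V$ is defined on the whole space and $V|\psi\rangle=|\psi'\rangle$. Rank-oneness of the local projections forces each $v_i'$ to be, up to a phase, a product $a_{i_A}'\otimes b_{i_B}'\otimes c_{i_C}'$ whose factors depend only on the corresponding coordinate of $i$; the task reduces to showing $V=V_A\otimes V_B\otimes V_C$.

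For the $A$--$BC$ split I would imitate the inner-product bookkeeping in the proof of Theorem~\ref{NMP}, treating ${\cal H}_B\otimes{\cal H}_C$ as the second tensor factor. The new obstruction is that the $d_A$ basis vectors $\{a_{i_A}\}_{i_A\in{\cal I}_A}$ of condition A6 lie in pairwise disjoint slices $\{i_A\}\times{\cal I}_{BC,i_A}$, so no two of them share a second factor and their overlaps are not directly matched by $V$. This is what linked triples and connected pairs (Definition~\ref{D1}) are for: for a connected pair $x_A,x_A'$, the equalities $\langle v_\ell,v_m\rangle=\langle v_\ell',v_m'\rangle$ over the associated triple, fed into Lemma~\ref{L4}, yield $\langle a_{x_A},a_{x_A'}\rangle=\pm\langle a_{x_A}',a_{x_A'}'\rangle$, with the same sign governing the companion $b$- and $c$-overlaps. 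Using that $G_A$ is connected (condition A6) I would propagate and reconcile these signs by absorbing phases into the $a_{i_A}'$, matching the full Gram data of $\{a_{i_A}\}_{i_A\in{\cal I}_A}$ and hence obtaining an isometry $V_A:{\cal H}_A\to{\cal H}_A'$ with $V_Aa_{i_A}=a_{i_A}'$. Once $V_A$ is pinned down, dividing the identities $\langle v_i,v_{i'}\rangle=\langle v_i',v_{i'}'\rangle$ by the now-equal $a$-overlaps matches the overlaps of the $b_{i_B}\otimes c_{i_C}$ across all slices, producing an isometry $W:{\cal H}_B\otimes{\cal H}_C\to{\cal H}_B'\otimes{\cal H}_C'$; since the $a_{i_A}\otimes(b_{i_B}\otimes c_{i_C})$ span ${\cal H}_A\otimes{\cal H}_B\otimes{\cal H}_C$, this forces $V=V_A\otimes W$.

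It then remains to split $W$. By construction $W$ sends $b_{i_B}\otimes c_{i_C}\mapsto b_{i_B}'\otimes c_{i_C}'$ over the indices in $\bigcup_{i_A}{\cal I}_{BC,i_A}$, the projections $\Pi_{i_B}^B,\Pi_{i_C}^C$ are rank one, condition A7 is verbatim hypothesis A2 for this bipartite instance (with $C$ in the role of $A$), and condition A6 supplies the missing spanning hypothesis A1. The argument of Theorem~\ref{NMP} applied to $W$ then gives $W=V_B\otimes V_C$, so $V=V_A\otimes V_B\otimes V_C$. Evaluating this on $|\psi\rangle=v_0$ and on the $v_i$ yields $V_A\otimes V_B\otimes V_C|\psi\rangle=|\psi'\rangle$ and $V_A\otimes V_B\otimes V_C|v_i\rangle=|v_i'\rangle$ for $i\in{\cal I}$, which is the claim.

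The step I expect to be the real obstacle is the coherent propagation inside the $A$--$BC$ split: Lemma~\ref{L4} fixes the local overlaps only up to a global sign, and converting these per-edge sign choices into globally consistent honest isometries $V_A$ (and afterwards $V_B,V_C$) is exactly why $G_A$ and the graph of condition A7 are assumed connected; simultaneously tracking the phase freedom in the product factorizations $v_i'=a_{i_A}'\otimes b_{i_B}'\otimes c_{i_C}'$ is the most delicate bookkeeping, and is the place where I would expect the argument of the bipartite Theorem~\ref{NMP} to require its most careful re-use.
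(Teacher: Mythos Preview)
Your proposal is correct and follows essentially the same approach as the paper: obtain a global isometry $V$ from uniqueness of the Gram matrix, split off the $A$-factor using Lemma~\ref{L4} together with connectedness of $G_A$ to reconcile the $\pm$ signs, then repeat the bipartite inner-product argument on the $BC$-factor via condition A7. The only cosmetic difference is that the paper carries out the sign propagation explicitly by fixing spanning subtrees $G_{A,0}\subset G_A$ and $G_{B,0}\subset G_B$ and defining sign functions $\alpha(i_A)$, $\beta(i_B,i_C)=\alpha(i_A)$, $\gamma(i_B)$ along the unique paths, rather than ``absorbing phases''; your final paragraph correctly anticipates that this bookkeeping, and its residual effect on the second split, is where the care is required.
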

\begin{proof}

\noindent{\bf Step 1:}\quad
We fix an arbitrary element $i_A \in {\cal I}_A$.
For $i_{BC} ,i_{BC}' \in {\cal I}_{BC, i_A}$,
Condition A1 implies
\begin{align}
\langle  \psi_{i_{BC}}, \psi_{i_{BC}'} \rangle
=\langle  a_{i_A}\otimes \psi_{i_{BC}}, a_{i_A}\otimes \psi_{i_{BC}'} \rangle
=\langle  a_{i_A}'\otimes \psi_{i_{BC}}', a_{i_A}'\otimes \psi_{i_{BC}'}' \rangle
=\langle  \psi_{i_{BC}}', \psi_{i_{BC}'}' \rangle.
\end{align}
Hence, there exists an isometry $V_{BC,i_A}: 
{\cal H}_B \otimes {\cal H}_C \to {\cal H}_B' \otimes {\cal H}_C'$
such that
\begin{eqnarray}
V_{BC,i_A} \psi_{i_{BC}}=\psi_{i_{BC}}'
\end{eqnarray}
for $i_{BC} \in {\cal I}_{BC, i_A}$.

{\bf Step 2:}\quad
We choose a subgraph $G_{A,0} \subset G_A$ such that
the vertecies of $G_{A,0}$ is ${\cal I}_A$,
$G_{A,0}$ has no cycle, and
$G_{A,0}$ cannot be divided into two parts.

We fix the origin $i_{A,0} \in {\cal I}_A$.
For any element $ i_A \in {\cal I}_A$, we have the unique path 
to connect $i_{A,0}$ and $i_A$ 
by using $G_{A,0}$ because $G_{A,0}$ has no cycle.
We denote this path as
$i_{A,0}-i_{A,1}-\cdots -i_{A,n}=i_A$.
We define $\alpha(i_A)$ as
\begin{align}
\alpha(i_A):= \prod_{m=1}^{n}  \frac{\langle a_{i_{A,m-1}},a_{i_{A,m}}\rangle}
{\langle a_{i_{A,m-1}}',a_{i_{A,m}}'\rangle}.
\end{align}
Lemma \ref{L4} guarantees that $\alpha(i_A)$ takes value $1$ or $-1$.
Due to the above definition and the uniqueness of the above path, we find that
\begin{align}
\alpha(i_{A,l}):= \prod_{m=1}^{l}  \frac{\langle a_{i_{A,m-1}},a_{i_{A,m}}\rangle}
{\langle a_{i_{A,m-1}}',a_{i_{A,m}}'\rangle}.
\end{align}

For $i_{BC} \in {\cal I}_{BC,i_{A,l} }$
and $i_{BC}' \in {\cal I}_{BC,i_{A,l+1} }$,
we find that
\begin{align}
&\langle a_{i_{A,l}},a_{i_{A,l+1}}\rangle \langle \psi_{i_{BC}},\psi_{i_{BC}'}\rangle
=\langle a_{i_{A,l}}\otimes \psi_{i_{BC}},a_{i_{A,l+1}}\otimes \psi_{i_{BC}'}\rangle \\
=&
\langle a_{i_{A,l}}'\otimes \psi_{i_{BC}}',a_{i_{A,l+1}}'\otimes \psi_{i_{BC}'}'\rangle \\
=&
\langle a_{i_{A,l}}',a_{i_{A,l+1}}'\rangle \langle \psi_{i_{BC}}',\psi_{i_{BC}'}'\rangle \\
=&
\alpha(i_{A,l})\alpha(i_{A,l+1})
\langle a_{i_{A,l}},a_{i_{A,l+1}}\rangle \langle V_{BC,i_{A,l}}\psi_{i_{BC}},V_{BC,i_{A,l+1}}\psi_{i_{BC}'}\rangle \\
=&
\alpha(i_{A,l})\alpha(i_{A,l+1})
\langle a_{i_{A,l}},a_{i_{A,l+1}}\rangle \langle 
\psi_{i_{BC}},
V_{BC,i_{A,l}}^\dagger V_{BC,i_{A,l+1}}
\psi_{i_{BC}'}\rangle.
\end{align}
Since $\langle a_{i_{A,l}},a_{i_{A,l+1}}\rangle\neq 0$ and
the sets 
$\{\psi_{i_{BC}}\}_{i_{BC} \in {\cal I}_{BC,i_{A,l} }}$
and 
$\{\psi_{i_{BC}'}\}_{i_{BC}' \in {\cal I}_{BC,i_{A,l+1} }}$
span the space $\mathbb{C}^{d_Bd_C}$,
we find that
$\alpha(i_{A,l})\alpha(i_{A,l+1}) V_{BC,i_{A,l}}^\dagger V_{BC,i_{A,l+1}}$
is identity.
Then, we find that
\begin{align}
V_{BC}:=V_{BC,i_{A,0}}=\alpha(i_{A,l}) V_{BC,i_{A,l}}.
\end{align}
That is, we have 
\begin{align}
V_{BC}=\alpha(i_A) V_{BC,i_A}.\Label{BLA}
\end{align}
Also, we define the isometry $V_A: 
{\cal H}_A \to {\cal H}_A'$
such that
\begin{eqnarray}
V_A a_{i_A}= \alpha(i_A) a_{i_A}'
\end{eqnarray}
for $i_A \in {\cal I}_A$.

Therefore, for $(i_A,i_{BC}) \in \cup_{i_A \in {\cal I}_A }(\{i_A\} \times {\cal I}_{BC, i_A})$,
we have
\begin{align}
V a_{i_A}\otimes \psi_{i_{BC}}=a_{i_A}'\otimes \psi_{i_{BC}}'=
(V_A\otimes V_{BC})a_{i_A}\otimes \psi_{i_{BC}}.
\end{align}
Since the set $\{a_{i_A}\}_{i_A\in {\cal I}_A}$
spans the space $\mathbb{C}^{d_A}$,
we have
\begin{align}
V=V_A\otimes V_{BC}.\Label{NVR}
\end{align}

{\bf Step 3:}\quad
For $i_{B} \in {\cal I}_B$ and $i_{C} \in {\cal I}_{C, i_B}$,
we choose $i_A$ such that $ (i_B,i_C)\in {\cal I}_{BC,i_A}$.
Then, we define $\beta(i_B,i_C):= \alpha(i_A)$.
We fix an arbitrary element $i_{B} \in {\cal I}_B$.
For $i_{C} ,i_{C}' \in {\cal I}_{C, i_B}$,
Relation \eqref{BLA} implies
\begin{align}
\langle  c_{i_{C}}, c_{i_{C}'} \rangle
=&\langle  b_{i_B}\otimes c_{i_{C}}, b_{i_B}\otimes c_{i_{C}'} \rangle
=\beta(i_B,i_C)\beta(i_B,i_C')
\langle  b_{i_B}'\otimes c_{i_{C}}', b_{i_B}'\otimes c_{i_{C}'}' \rangle \\
=&\beta(i_B,i_C)\beta(i_B,i_C')\langle c_{i_{C}}', c_{i_{C}'}' \rangle.
\end{align}
Hence, there exists an isometry $V_{C,i_B}: 
{\cal H}_C \to {\cal H}_C'$
such that
\begin{eqnarray}
V_{C,i_B} c_{i_{C}}=\beta(i_B,i_C) c_{i_{C}}'
\end{eqnarray}
for $i_{C} \in {\cal I}_{C, i_B}$.

{\bf Step 4:}\quad
We choose a subgraph $G_{B,0} \subset G_B$ such that
the vertecies of $G_{B,0}$ is ${\cal I}_B$,
$G_{B,0}$ has no cycle, and
$G_{B,0}$ cannot be divided into two parts.

We fix the origin $i_{B,0} \in {\cal I}_B$.
For any element $ i_{B} \in {\cal I}_B$, we have the unique path 
to connect $i_{B,0}$ and $i_{B}$ 
by using $G_{B,0}$ because $G_{B,0}$ has no cycle.
We denote this path as
$i_{B,0}-i_{B,1}-\cdots -i_{B,n'}=i_B$.
We choose a non-zero element $i_{C,l} \in 
{\cal I}_{C,i_{B,l-1}}\cap {\cal I}_{C,i_{B,l}}$.
We choose $i_{A,l},i_{A,l}' $ such that $(i_{B,l-1}, i_{C,l})\in {\cal I}_{BC,i_{A,l}} $
and $(i_{B,l}, i_{C,l})\in {\cal I}_{BC,i_{A,l}'} $.
We define $\beta(i_B)$ as
\begin{align}
\gamma(i_B):= \prod_{l=1}^{n'} 
\beta(i_{B,l-1},i_{C,l})\beta(i_{B,l},i_{C,l}).
\end{align}
Then, we have
\begin{align}
&\langle b_{i_{B,l-1}},b_{i_{B,l}}\rangle 
=\langle b_{i_{B,l-1}}\otimes c_{i_{C,l}},b_{i_{B,l}}\otimes c_{i_{C,l}}\rangle \\
=&
\beta(i_{B,l-1},i_{C,l})\beta(i_{B,l},i_{C,l})
\langle b_{i_{B,l-1}}'\otimes c_{i_{C,l}}',b_{i_{B,l}}'\otimes c_{i_{C,l}}'\rangle \\
=&
\beta(i_{B,l-1},i_{C,l})\beta(i_{B,l},i_{C,l})
\langle b_{i_{B,l-1}}',b_{i_{B,l}}'\rangle \\
=&
\gamma(i_{B,l-1}) \gamma(i_{B,l})
\langle b_{i_{B,l-1}}',b_{i_{B,l}}'\rangle .
\end{align}

For $i_{C} \in {\cal I}_{C,i_{B,l} }$
and $i_{C}' \in {\cal I}_{C,i_{B,l+1} }$,
we find that
\begin{align}
&\langle b_{i_{B,l}},b_{i_{B,l+1}}\rangle \langle c_{i_{C}},c_{i_{C}'}\rangle
=\langle b_{i_{B,l}}\otimes c_{i_{C}},b_{i_{B,l+1}}\otimes c_{i_{C}'}\rangle \\
=&
\beta(i_{B,l},i_{C})\beta(i_{B,l+1},i_{C}')
\langle b_{i_{B,l}}'\otimes c_{i_{C}}',b_{i_{B,l+1}}'\otimes c_{i_{C}'}'\rangle \\
=&
\langle b_{i_{B,l}}',b_{i_{B,l+1}}'\rangle 
\beta(i_{B,l},i_{C})\beta(i_{B,l+1},i_{C}')
\langle c_{i_{C}}',c_{i_{C}'}'\rangle \\
=&
\gamma(i_{B,l}) \gamma(i_{B,l+1})
\langle b_{i_{B,l}},b_{i_{B,l+1}}\rangle 
\langle V_{C,i_{B,l}} c_{i_{C}},V_{C,i_{B,l+1}} c_{i_{C}'}\rangle .
\end{align}
Since $\langle b_{i_{B,l}},b_{i_{B,l+1}}\rangle\neq 0$ and
the sets 
$\{ c_{i_{C}}\}_{i_{C} \in {\cal I}_{C,i_{B,l} }}$
and 
$\{c_{i_{C}'}\}_{i_{C}' \in {\cal I}_{C,i_{B,l+1} }}$
span the space ${\cal H}_C$,
we find that
$\gamma(i_{B,l}) \gamma(i_{B,l+1})
 V_{C,i_{B,l}}^\dagger V_{C,i_{B,l+1}}$
is identity.
Then, we find that
\begin{align}
V_{C}:=V_{C,i_{B,0}}=\gamma(i_{B,l}) V_{C,i_{B,l}}.
\end{align}
That is, we have 
\begin{align}
V_{C}=\gamma(i_{B}) V_{C,i_{B}}.
\end{align}

{\bf Step 5:}\quad
For elements 
$i_B,i_B' \in {\cal I}_B$, 
the sets 
$\{ c_{i_{C}}\}_{i_{C} \in {\cal I}_{C,i_{B} }}$
and 
$\{c_{i_{C}'}\}_{i_{C}' \in {\cal I}_{C,i_{B}' }}$
span the space $\mathbb{C}^{d_C}$.
We choose 
$i_C \in {\cal I}_{C,i_B}$ and $i_C' \in {\cal I}_{C,i_B'}$ such that
$\langle c_{i_{C}},c_{i_{C}'}\rangle\neq 0$.
We have
\begin{align}
&\langle b_{i_{B}},b_{i_{B}'}\rangle \langle c_{i_{C}},c_{i_{C}'}\rangle
=\langle b_{i_{B}}\otimes c_{i_{C}},b_{i_{B}'}\otimes c_{i_{C}'}\rangle \\
=&
\beta(i_{B},i_{C})\beta(i_{B}',i_{C}')
\langle b_{i_{B}}'\otimes c_{i_{C}}',b_{i_{B}'}'\otimes c_{i_{C}'}'\rangle \\
=&
\langle b_{i_{B}}',b_{i_{B}'}'\rangle 
\beta(i_{B},i_{C})\beta(i_{B}',i_{C}')
\langle c_{i_{C}}',c_{i_{C}'}'\rangle \\
=&
\langle b_{i_{B}}',b_{i_{B}'}'\rangle 
\gamma(i_{B}) \gamma(i_{B}')
\langle V_{C} c_{i_{C}},V_{C} c_{i_{C}'}\rangle \\
=&
\gamma(i_{B}) \gamma(i_{B}')
\langle b_{i_{B}}',b_{i_{B}'}'\rangle 
\langle c_{i_{C}}, c_{i_{C}'}\rangle .
\end{align}
Since $\langle c_{i_{C}}, c_{i_{C}'}\rangle \neq 0$, we have
\begin{align}
\langle b_{i_{B}},b_{i_{B}'}\rangle 
=
\gamma(i_{B}) \gamma(i_{B}')
\langle b_{i_{B}}',b_{i_{B}'}'\rangle .
\end{align}
Also, we define the isometry $V_{B}: 
{\cal H}_B \to {\cal H}_B'$
such that
\begin{eqnarray}
V_{B} b_{i_B}= \gamma(i_{B})  b_{i_B}'
\end{eqnarray}
for $i_{B} \in {\cal I}_{B}$.

Therefore, for $(i_B,i_C) \in \cup_{i_B \in {\cal I}_B }(\{i_B\} \times {\cal I}_{C, i_B})$,
we have
\begin{align}
V_{BC}b_{i_B}\otimes c_{i_C}=b_{i_B}'\otimes c_{i_C}'=
(V_{B}\otimes V_C)b_{i_B}\otimes c_{i_C}.
\end{align}
Since $\{ b_{i_B}\otimes c_{i_C}\}_{(i_B,i_C) \in \cup_{i_B \in {\cal I}_B }(\{i_B\} \times {\cal I}_{C, i_B})}$
spans ${\cal H}_B \otimes {\cal H}_C$,
we have
\begin{align}
V_{BC}=V_{B}\otimes V_C.\Label{NVR2}
\end{align}
Combining \eqref {NVR} and \eqref{NVR2}, we have
\begin{align}
V=V_A\otimes V_{B}\otimes V_C.\Label{NVR3}
\end{align}
 \end{proof}

\subsubsection{General case}
We consider the general case.
We define 
$|v_i'\rangle:= \eta_i^{-1} \Pi_{i_A}^A\otimes \Pi_{i_B}^B
\otimes \Pi_{i_C}^C|\psi'\rangle$.

Let $\bar{\cal I}_A,\bar{\cal I}_B,\bar{\cal I}_C$ be the sets of indexes of the spaces 
${\cal H}_A,{\cal H}_B, {\cal H}_C$.

We introduce other conditions for 
the optimal maximizer given in \eqref{MA8} as a generalization of A3 and A4.

\begin{description}
\item[A8] Ideal systems ${\cal H}_A$, ${\cal H}_B$, and ${\cal H}_C$ are two-dimensional.
\item[A9] Each system has only two measurements.
That is, the sets $\bar{\cal I}_A$, $\bar{\cal I}_B$, and $\bar{\cal I}_C$
is composed of
4 elements.
For any element $i_A \in \bar{\cal I}_A$ ($i_B \in \bar{\cal I}_B$, $i_C \in \bar{\cal I}_C$), there exists 
an element $i_A' \in \bar{\cal I}_A$ ($i_B' \in \bar{\cal I}_B$, $i_C' \in \bar{\cal I}_C$) such that 
$\langle a_{i_A}| a_{i_A'}\rangle=0$
($\langle b_{i_B}| b_{i_B'}\rangle=0$, $\langle c_{i_C}| c_{i_C'}\rangle=0$).
\end{description}

Mermin Self-testing satisfies Conditions A8 and A9 in addition to Conditions A5, A6, and A7.

When A3 and A4 hold,
$\bar{\cal I}_A$ ($\bar{\cal I}_B$, $\bar{\cal I}_C$) is written as
${\cal B}_{A,0}\cup {\cal B}_{A,1}$ (${\cal B}_{B,0}\cup {\cal B}_{B,1}$, 
${\cal B}_{C,0}\cup {\cal B}_{C,1}$), where
${\cal B}_{A,j}=\{ (0,j),(1,j)\}$ (${\cal B}_{B,j}=\{ (0,j),(1,j)\}$, ${\cal B}_{C,j}=\{ (0,j),(1,j)\}$) 
and $\langle a_{(0,j)}| a_{(1,j)}\rangle=0$ 
($\langle b_{(0,j)}| b_{(1,j)}\rangle=0$, $\langle c_{(0,j)}| c_{(1,j)}\rangle=0$) 
for $j=0,1$.

We also consider the following condition for $\Pi_i= \Pi_{i_A}^A\otimes \Pi_{i_B}^B\otimes \Pi_{i_C}^C$.
\begin{description}
\item[C1]
When $i_A,i_A' \in \bar{\cal I}_A$ ($i_B,i_B' \in \bar{\cal I}_B$, 
$i_C,i_C' \in \bar{\cal I}_C$) satisfy
$\langle a_{i_A}| a_{i_A'}\rangle=0$ 
($\langle b_{i_B}| b_{i_B'}\rangle=0$, $\langle c_{i_C}| c_{i_C'}\rangle=0$), 
we have $\Pi_{i_A}^A+\Pi_{i_A'}^A=I$ ($\Pi_{i_B}^B+\Pi_{i_B'}^B=I$, 
$\Pi_{i_C}^C+\Pi_{i_C'}^C=I$). 
\end{description}

Let ${\cal H}_{i_A}^A$, ${\cal H}_{i_B}^B$, and ${\cal H}_{i_C}^C$ be the image of the projections
$\Pi_{i_A}^A$, $\Pi_{i_B}^B$, and $\Pi_{i_C}^C$.

\begin{theorem}\Label{APS2}
Assume that the optimal maximizer given in \eqref{MA8} satisfies
conditions A5, A6, A5, A7, A8, and A9, and 
the vectors $(\Pi_i |\psi'\rangle)_{i \in {\cal I}}$ realize the optimal solution in the SDP \eqref{thetaSDP}.
Then, there exist isometries 
$V_A$ from $ {\cal H}_A\otimes  {\cal K}_A$ to $ {\cal H}_A'$,
$V_B$ from $ {\cal H}_B\otimes  {\cal K}_B$ to $ {\cal H}_B'$,
and 
$V_C$ from $ {\cal H}_C\otimes  {\cal K}_C$ to $ {\cal H}_C'$
such that
\begin{align}
V_A \otimes V_B\otimes V_C |\psi\rangle \otimes |junk\rangle &=
|\psi'\rangle,\Label{ML5}\\
V_A \otimes V_B \otimes V_C|v_i\rangle \otimes |junk\rangle &=|v_i'\rangle , \Label{ML6}
\end{align}
for $i \in {\cal I}$,
where
$|junk\rangle$ is a state on ${\cal K}_A\otimes  {\cal K}_B\otimes  {\cal K}_C$.
\hfill $\square$\end{theorem}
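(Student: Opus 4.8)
The plan is to lift the proof of the bipartite general case (Theorem~\ref{APS}) by one party: perform a Jordan-type block decomposition on each of the three local spaces, invoke the tripartite rank-one result (Theorem~\ref{TH9}) on each block, and then glue the block isometries together using the uniqueness of the SDP maximizer. Lemmas~\ref{BA7} and~\ref{BA8}, which are stated for an arbitrary number of parties, carry over unchanged. Concretely, for each system, say $A$, Condition A9 gives exactly two measurement projectors $\Pi_{(0,0)}^A,\Pi_{(0,1)}^A$, and Condition C1 forces their partners to be $I-\Pi_{(0,0)}^A$, $I-\Pi_{(0,1)}^A$. Exactly as in Step~1 of the proof of Theorem~\ref{APS}, I would diagonalize $\Pi_{(0,0)}^A\Pi_{(0,1)}^A\Pi_{(0,0)}^A$ and build from its eigenstructure the rank-two projections $\bar\Pi_{j_A}^A$ (one two-dimensional block per pair of eigenvectors with eigenvalue in $(0,1)$, matching $\dim\mathcal{H}_A=2$ by A8) together with the at-most-rank-one pieces (eigenvalues $0$ and $1$); all of these commute with the four measurement projectors. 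Applying Lemma~\ref{BA8} to each at-most-rank-one piece, on which some $\Pi_i$ vanishes, forces $|\psi'\rangle$ to be supported on $\Pi^A:=\sum_{j_A}\bar\Pi_{j_A}^A$, and symmetrically for $B$ and $C$; hence $|\psi'\rangle$ is supported on $\Pi^A\otimes\Pi^B\otimes\Pi^C$. Writing $\bar\Pi_{(j_A,j_B,j_C)}:=\bar\Pi_{j_A}^A\otimes\bar\Pi_{j_B}^B\otimes\bar\Pi_{j_C}^C$, which commutes with every $\Pi_i$, I then decompose $|\psi'\rangle$ into its normalized block components $\psi_{(j_A,j_B,j_C)}$.

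Iterating Lemma~\ref{BA7} over the commuting family $\{\bar\Pi_{j_A}^A\}\cup\{\bar\Pi_{j_B}^B\}\cup\{\bar\Pi_{j_C}^C\}$, each nonzero $\psi_{(j_A,j_B,j_C)}$ again realizes the optimum of the SDP~\eqref{thetaSDP}, and on such a block the restricted projectors $\Pi_i\bar\Pi_{(j_A,j_B,j_C)}=(\Pi_{i_A}^A\bar\Pi_{j_A}^A)\otimes(\Pi_{i_B}^B\bar\Pi_{j_B}^B)\otimes(\Pi_{i_C}^C\bar\Pi_{j_C}^C)$ are tensor products of rank-one projectors. Hence Theorem~\ref{TH9}, whose hypotheses A5, A6, A7 are assumed, applies block by block and yields isometries $V_{A,(j_A,j_B,j_C)}\colon\mathcal{H}_A\to\im\bar\Pi_{j_A}^A$, and likewise $V_{B,(j_A,j_B,j_C)}$ and $V_{C,(j_A,j_B,j_C)}$, intertwining $|\psi\rangle$ and each $|v_i\rangle$ with the corresponding block vectors. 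Next, following Step~3 of the proof of Theorem~\ref{APS}, I would fix $j_A$ and show that $V_{A,(j_A,j_B,j_C)}$ depends on $(j_B,j_C)$ only through an overall scalar: for two nonzero blocks sharing the same $j_A$, the ``hybrid'' vectors obtained by taking the $A$-tensor-factor from one triple and the $B$- and $C$-tensor-factors from the other have the same Gram matrix $X^*$ as $(\eta_i v_i)_i$ (the $V$'s being isometries) and lie in the same space $\im\bar\Pi_{(j_A,j_B',j_C')}$ as the genuine block realization, so the uniqueness of the maximizer (Assumption~\ref{ass2}) forces the two realizations to coincide up to a constant; comparing $A$-factors gives $V_{A,(j_A,j_B,j_C)}=\beta\,V_{A,(j_A,j_B',j_C')}$, and the symmetric arguments do the same for the $B$- and $C$-isometries.

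To finish, for each $j_A$ I pick a reference pair $(j_B,j_C)$ with a nonzero block and set $V_{A,j_A}$ equal to that block isometry, and define $V_{B,j_B}$, $V_{C,j_C}$ analogously. Let $\mathcal{K}_A,\mathcal{K}_B,\mathcal{K}_C$ be spanned by the block labels $\{|j_A\rangle\},\{|j_B\rangle\},\{|j_C\rangle\}$; collecting all the scalars from the previous two steps into $|junk\rangle=\sum_{(j_A,j_B,j_C)}\mu_{(j_A,j_B,j_C)}\,|j_A\rangle\otimes|j_B\rangle\otimes|j_C\rangle$ and setting $V_A=\sum_{j_A}V_{A,j_A}\langle j_A|$, $V_B=\sum_{j_B}V_{B,j_B}\langle j_B|$, $V_C=\sum_{j_C}V_{C,j_C}\langle j_C|$, a computation identical in form to the end of the proof of Theorem~\ref{APS} verifies \eqref{ML5} and \eqref{ML6}.

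The delicate point is the scalar bookkeeping in the last two steps. Theorem~\ref{TH9} determines its isometries only up to the global $\pm1$ sign ambiguity inherent in Lemma~\ref{L4}, and the block-to-block comparison adds one multiplicative constant per coordinate change, so one must check that all of these factor through a single product state $|junk\rangle\in\mathcal{K}_A\otimes\mathcal{K}_B\otimes\mathcal{K}_C$ rather than obstructing the tensor form $V_A\otimes V_B\otimes V_C$. As in the bipartite case this works out because the constants only ever appear pairwise---one coordinate being varied at a time---so consistency around the connectivity graphs $G_A$ and $G_B$ furnished by A6 and A7 pins the product structure down; making this precise is the part of the argument that requires genuine care.
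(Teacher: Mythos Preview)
Your proposal is correct and follows essentially the same approach as the paper's own proof, which is extremely terse: it simply says to carry out each construction ``similar to'' or ``in the same way as'' the proof of Theorem~\ref{APS}, defines $\bar\Pi_{j_X}^X$, $\bar\Pi_{(j_A,j_B,j_C)}$, $\alpha_{(j_A,j_B,j_C)}$, $\beta_{j_A,j_B,j_C}$, $V_{X,j_X}$, $|junk\rangle$, and $V_X$ by direct analogy, and stops. You have accurately reconstructed the intended argument, including the use of Theorem~\ref{TH9} on each rank-one block (which the paper leaves implicit), and your closing remark about the scalar bookkeeping is a fair caveat that the paper also glosses over.
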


\begin{proof}

Similar to the proof of Theorem \ref{APS}, we 
define orthogonal projections
$\bar\Pi_{j_X}^X $ on ${\cal H}_X$
such that the projection $\Pi^X:= \sum_{j_X}\bar\Pi_{j_A}^X$ satisfies 
$\Pi^X\psi'=\psi' $ for $X=A,B,C$.
Then, we define 
the projection $\bar\Pi_{(j_A,j_B,j_C)}:=\bar\Pi_{j_A}^A\bar\Pi_{j_B}^B\bar\Pi_{j_C}^C$.
In the same way as the proof of Theorem \ref{APS}, we define
$\alpha_{(j_A,j_B,j_C)}$,$\beta_{j_A,j_B,j_C}$, and 
$V_{X,j_X}$ for $X=A,B,C$.

We define the space ${\cal K}_X$ spanned by $\{|j_X\rangle\} $ for $X=A,B,C$.
We define the junk state on ${\cal K}_A\otimes {\cal K}_B\otimes {\cal K}_C$ as
\begin{align}
|junk\rangle:= \sum_{ j_A,j_B,j_C} \beta_{j_A,j_B,j_C}^{-1} \alpha_{(j_A,j_B,j_C)}^{-1}
|j_A,j_B,j_C\rangle.
\end{align}
We define the isometries $V_X: {\cal H}_X\otimes {\cal K}_X\to {\cal H}_X' $ 
as
\begin{align}
V_X:= \sum_{j_X} V_{X,j_X} \langle j_X|
\end{align}
for $X=A,B,C$.
The isometries $V_A$, $V_B$, and $V_C$ satisfy conditions \eqref{ML5} and \eqref{ML6}.
\end{proof}

\end{document}